\newtheorem{proposition}{Proposition}
\newtheorem{lemma}{Lemma}
\newtheorem{remark}{Remark}
\newcommand{\RomanNumeralCaps}[1]{\MakeUppercase{\romannumeral #1}}
\begin{document}
	\title{The Impact of Uniform Circular Array on Near-field ISAC}
 
\author{ 
	Na~Xue,~\IEEEmembership{Graduate Student Member,~IEEE}, Xidong~Mu,~\IEEEmembership{Member,~IEEE},   Yue~Chen,~\IEEEmembership{Senior Member,~IEEE},
	Yuanwei~Liu,~\IEEEmembership{Fellow,~IEEE}
	\thanks{Part of this paper has been accepted at the IEEE International Conference on Communications (ICC) workshop, Montreal, Canada, 8–12 Jun, 2025~\cite{Na_conf4}. \\
N. Xue, Y. Chen are with Queen Mary University of London, London, UK (email:\{n.xue,yue.chen\}@qmul.ac.uk). X. Mu is with Queen's University Belfast, Belfast, UK (email: x.mu@qub.ac.uk). Y. Liu are with the University of Hong Kong, Hong Kong (email: yuanwei@hku.hk). }     }
	\maketitle
	\vspace{-2.5cm}
	
	\begin{abstract} 
        A novel uniform circular array (UCA) based near-field (NF) integrated sensing and communication (ISAC) framework is proposed, where the Cylindrical coordinate is invoked to evaluate the joint positioning performance. The joint squared position error bound (SPEB) of the sensing target (ST) is derived for the coplanar and non-coplanar cases. For the coplanar case, where the ST is located in the coplanar region of the UCA, the approximate Cram{\'e}r-Rao bound (CRB) expressions for the separate angle and distance estimation are given by exploiting the uniform spherical wavefront model. A SPEB minimization problem is formulated with the constraints of communication requirement and power budget, where the closed-form solution to minimize the CRB of the angle is derived. Inspired by the close-form expression, a low complexity vector-based quadratic transformation (VQF) algorithm is proposed by invoking the Rayleigh quotient. For the non-coplanar case, where the ST is located beyond the coplanar region of the UCA, the separate CRBs over three-dimensional coordinates and the joint SPEB approximations are derived. To minimize the SPEB performance, the semi-definite relaxation (SDR) method and extended low-complexity VQF algorithm are proposed. Numerical results validated that i) the Fisher Information Matrix about angle and distance in NF propagation can be approximated as a diagonal matrix with the trinity loss; ii) Compared with the uniform planar array, the UCA achieve better positioning performance when ST located in the coplanar of the antenna array; and iii) the proposed VQF algorithms reach higher solution precision than conventional SDR algorithm with much less computation complexity. 
	\end{abstract}
	
	\begin{IEEEkeywords}
		Integrated sensing and communication (ISAC),  Near-field (NF), antenna design, beamformer design
	\end{IEEEkeywords}
	
    \section{Introduction}
    Wireless sensing functionality has been regarded as an indispensable component in sixth-generation (6G) wireless networks~\cite{Kim_24WC}. On the one hand, emerging technologies, such as unmanned aerial vehicles (UAV), Internet of Vehicles, and extended reality (XR), rely on the perception of the environment to improve user experience~\cite{Saad20Network,Metaverse}. On the other hand, the mentioned technologies exacerbate the complex variations in electromagnetic wavefront (EM) propagation, which imposes a more stringent requirement on ultra-high data rate and high-precision sensing~\cite{Letaief_19CM,XiaodanShan_22JSAC}. Benefiting from the EM propagation and similarity of hardware design among radio sensing functionality and wireless communication functionality, integrated sensing and communication (ISAC) has been deemed as a promising candidate technology in 6G. By unifying these dual functionalities via the shared spectrum and hardware design, ISAC exploits the EM wavefront to convey the modulated information as well as conduct the wireless sensing. Therefore, the EM propagation characteristics are the primary concern in designing and evaluating the ISAC performance. 
    
    To satisfy the high data rate requirement in 6G, the high frequency and extremely large antenna array are two essential technologies, which bring a fundamental transition to the EM propagation, from far-field (FF) to near-field (NF) propagation. For example, the Rayleigh distance, which distinguishes the NF and FF region, is about ${d}_{Ray}=\frac{2{D}_{{\rm{BS}}}^{2}}{{\lambda}_{c}} \approx 88$m for the base station (BS) equipped with ${N}_{t}=128$ uniform linear array (ULA) at ${f}_{c}=28$ GHz. Different from the FF propagation, the EM propagation in the NF region evolved from the parallel wavefront to the spherical wavefront~\cite{NF_yw}. The mismatch between the conventional parallel wavefront model and NF propagation leads to a loss in the beamforming (BF) gain and the radio sensing resolution. 

    To precisely depict the NF propagation, the spherical wavefront model, where the phase shifts of the steering vectors are jointly determined by the angle and distance, provides promising advantages to both the communication and radio sensing functionalities. For the communication aspect, the induced distance information in spherical wavefront can be exploited to reduce the inter-user interference to improve the spectrum efficiency~\cite{ZidongWu_23JSAC,Haiyang_22TWC}. For the radio sensing aspect, the spherical wavefront extends one-dimensional (1D) estimation to two-dimensional (2D) or even three-dimensional (3D) estimation. The authors of~\cite{Alessio_22JSTSP} proposed a 2D estimation on azimuth and elevation angles to save the spectrum resources and elevate the radio sensing performance. The authors of~\cite{Haocheng_3DLocalization} investigated the 3D estimation on Cartesian coordinates via the separate Cram{\'e}r-Rao bound (CRB) metric. Furthermore, the benefits of NF in ISAC systems are still in the infancy stage, from ULA to uniform planar array (UPA). The authors of \cite{Zhaolin23_NF} investigated a downlink NF-ISAC system with ULA, which indicated that the beampattern can be focused on a region with the desired angle and desired distance without leakage in the undesired distance. The authors of \cite{Haochen_JSAC} investigated a joint uplink and downlink NF-ISAC framework with ULA, where the radio sensing performance can be elevated via the limited bandwidth. To exploit the degree of freedom (DoF) in UPA, the authors of~\cite{Boqun_JSTSP} conducted the performance analysis of the NF-ISAC on the accurate channel modelling to unveil the effect of effective antenna aperture and the polarization mismatch. The authors of \cite{Haochen_NF-ISAC} investigated the BF design of UPA in a downlink NF-ISAC system, which demonstrated that the optimal solutions had low-rank structures. 
    
    As the NF propagation relies on the equivalent aperture size, the antenna array design brings new challenges in NF-ISAC. The conventional antenna array design (ULA or UPA) cannot fully utilize the NF effect. On the one hand, the ULA or UPA cannot maintain the equivalent Rayleigh distance among different azimuth angles. When the sensing target (ST) is located at the edge of ULA, there is a great loss in the BF gain towards the ST. On the other hand, the UPA with $\frac{{\lambda}_{c}}{2} $ antenna interval requires massive antennas to support an extended Rayleigh distance because of the limited aperture size. The massive antennas bring a severe hardware burden at high frequencies. Fortunately, the uniform circular array (UCA), which has the inherent homogeneous characteristic, has drawn great attention to accommodate the NF propagation. Due to the rotational symmetry of the circular array, the UCA has the equivalent Rayleigh distance over different azimuth angles~\cite{Zidong_24TWC}. This characteristic has been exploited to improve ergodic capacity~\cite{24TCOM_26} and spectrum efficiency~\cite{Zidong_24TWC} as well as reduce the beam training overhead~\cite{Yuxin_23WCL}. The authors of~\cite{YJeon_21TWC} focused on the UCA-based line-of-sight (LoS) multiple-input-multiple-output (MIMO), where the channel matrix can be diagonalized via the discrete Fourier transform (DFT) precoder and the inverse DFT combiner. The isotropic characteristic of the induced beam pattern by UCA also enables the $360^{\circ}$ scan-angle range, where both the azimuth and elevation angles can be estimated~\cite{Mathews_94TSP}. The authors of \cite{Zhaolin23_NF} conducted the asymptotic analysis on the separate CRBs over angle and distance, which indicated that the CRBs with isotropic BF are irrelevant to the ST's direction. 
 
         \begin{table*}[!ht]\large
	\caption{Our contributions in contrast to the state-of-the-art.}  \label{Table:Comparition}
       \begin{tiny}
           \begin{center}
		\centering
		\resizebox{\textwidth}{!}{
			\begin{tabular}{ !{\vrule width1.2pt}l !{\vrule width1.2pt}c !{\vrule width1.2pt}c!{\vrule width1.2pt}c!{\vrule width1.2pt}c!{\vrule width1.2pt}c!{\vrule width1.2pt}c!{\vrule width1.2pt} }
				\Xhline{1.4pt}
				\centering
				& \cite{Kangjian_24JSTSP} & \cite{Ahmet_24JSTSP} & \cite{Xiangrong_24JSTSP} & \cite{Xinrui_24TWC} & \cite{Huizhi_arXiv} & \bf{Proposed} \\
				\Xhline{1.4pt}
				\centering
				Involved Functionality & Com-only & ISAC & ISAC &  Com-only  & Com-only  & ISAC \\
				\hline
				\centering
				2D Antenna Array  & $\times$ & $\times$ & $\surd$ & $\times$ & $\times$ & $\surd$ \\
				\hline
				\centering
			    Involved Coordinate  & $\times$ & 1D, DoA & 2D, Polar & $\times$ & $\times$ & 3D, Cylindrical\\
				\hline
				\centering 
				Impact of UCA on ISAC & $\times$ & $\times$ & $\times$ &$\times$& $\times$ & $\surd$ \\
				\hline
				\centering 
				Impact of BF design on joint radio sensing & $\times$ & $\surd$ & $\surd$ &$\times$& $\times$ & $\surd$ \\ 
				\Xhline{1.4pt}
		\end{tabular}  }
	\end{center}
       \end{tiny}
	\label{table:structure2}
        \end{table*}

\subsection{Motivation and Contributions} 
However, the aforementioned works mainly focus on the NF propagation in ISAC with conventional antenna array, how to exploit the antenna array to release the full potential of the NF propagation is still an open question to be unveiled. As the phase shift of spherical wavefront is simultaneously determined by the angle and distance, the antenna array structure plays a predominant role because of the aperture size and polarization mismatch~\cite{Boqun_JSTSP}. The effective array aperture size of conventional ULA or UPA dramatically reduced, which cannot maintain the NF propagation over different directions. Benefiting from the equivalent Rayleigh distance over different angles, UCA has shown its superiority in communication performance, which is a promising method to boost radio sensing performance. To the best of the authors' acknowledgment, there is a lack of work on the impact of UCA on the joint radio sensing performance in NF-ISAC systems. Meanwhile, the high computational complexity in NF-ISAC poses a significant constraint on its further development. The massive transmit antennas extremely increase the dimension of the BF design. The conventional algorithms usually optimize the BF design in a matrix form, which further increases the computation complexity. Thus, how to propose a low-complexity algorithm to unveil the optimal BF design structure is an important question for its practical application. 

Motivated by these points, we propose an UCA-based NF-ISAC framework, where the ST in coplanar or non-coplanar of the UCA are investigated to unveil the impact of antenna array geometry on the joint positioning performance. By exploiting the potential of spherical wavefront, we first derive the 3D joint squared position error bound (SPEB) approximation under the Cylindrical coordinate system. Inspired by the derived SPEB metric, the BF design is investigated to improve the joint positioning performance. The main contributions of this work are summarized below, which has been boldly and explicitly compared with the state-of-art in Table~\ref{Table:Comparition}.

\begin{itemize} 
     \item We propose a UCA-based NF-ISAC framework, where the ISAC BS is equipped with the UCA to simultaneously serve the ST and communication user (CU). To evaluate the joint 3D positioning performance, the SPEB approximation over the coplanar and non-coplanar STs are derived. A joint SPEB minimization problem is formulated to optimize the BF design while satisfying the communication and power consumption requirements.   

    \item For the coplanar case, where the ST is located in the co-plane of the UCA, a joint SPEB approximation is derived to evaluate the positioning performance. With the NF propagation, the Fisher information matrix (FIM) for the position vector degrades into a diagonal matrix. To evaluate the BF design on radio sensing performance, a joint SPEB minimization problem is formulated subject to the communication and power consumption requirements. A closed-form expression is first derived to optimize the seperate CRB of angle. A vector-based quadratic transformation (VQF) algorithm is proposed to optimize the BF design with the low-complexity, where the Rayleigh quotient is employed to decompose the partial derivative matrix. 

    \item For the non-coplanar case, where ST is located beyond the co-plane of the UCA, a joint 3D SPEB approximation is derived to evaluate the positioning performance. Unlike co-planar case, the FIM under non-coplanar case is a block-diagonal matrix due to the relevance between the perpendicular distance and the signed distance. In the same time, the radio sensing performance depends on the variation of ST. An extended VQF algorithm is proposed to improve the joint positioning performance constraints by the communication and power budget requirement.   

    \item Our numerical results validate that the derived SPEB approximation nearly reaches the identical performance with the numerical results. Meanwhile, the proposed VQF algorithms outperform the conventional semi-definite relaxation (SDR) algorithm in both performance precision and computation complexity. Furthermore, the UCA achieves better performance than UPA when ST is located in the coplanar of UCA and the perpendicular distance of ST is larger than its signed distance in non-coplanar case.   
   
\end{itemize}
	
\subsection{Organization and Notation}
The rest of this paper is organized as follows. In Section \RomanNumeralCaps{2}, the NF-ISAC systems with the UCA in both the coplanar case and non-coplanar case are presented. A joint SPEB minimization problem is formulated subject to the communication and power budget requirement. In Section \RomanNumeralCaps{3}, the joint SPEB performance under coplanar case is first analyzed, where the separate ${\rm{CRB}}$ approximation on azimuth angle and perpendicular distance are derived with the isotropic BF design. By exploiting the predominant component of ${\rm{CRB}}$, a VQF algorithm is proposed to optimize the BF design. In Section \RomanNumeralCaps{3}, the joint SPEB performance under the non-coplanar case is analyzed, where the joint SPEB can be approximated as a block diagonal FIM with negligible mismatch. An extended VQF algorithm is proposed to optimize the BF design. In Section \RomanNumeralCaps{5}, numerical results are provided to validate the derived radio sensing metric and evaluate the performance of our proposed algorithms. Finally, we conclude this paper in Section \RomanNumeralCaps{6}.  \\

\emph{Notations:} Scalars, vectors, and matrices are denoted by lower-case, bold-face lower-case, and bold-face upper-case letters, respectively; ${\mathbb{C}}^{N \times 1}$ denotes the set of $N \times 1$ complex-valued vectors; Symbol $\odot$ denotes the Hadamard (element-wise) product between two matrices (vectors);  $x^{*}$ denotes the complex conjugate of $x$; ${\mathbf{a}}^{\rm{H}}$, $\left\| {\mathbf{a}} \right\|_{l}$ represent the conjugate transpose and the $l, l \in \left\{1,2\right\}$ norm of vector $\mathbf{a}$; Rank(${\mathbf{A}}$) and Tr(${\mathbf{A}}$) denote the rank and trace of matrix ${\mathbf{A}}$, respectively. ${\mathbf{A}} \succeq 0$ means that ${\mathbf{A}}$ is a positive semidefinite matrix. 
	
	\begin{figure}[t!]
		\centering
		\subfigure[Coplanar Case.]{\label{wraparound}
			\includegraphics[width=1.5in]{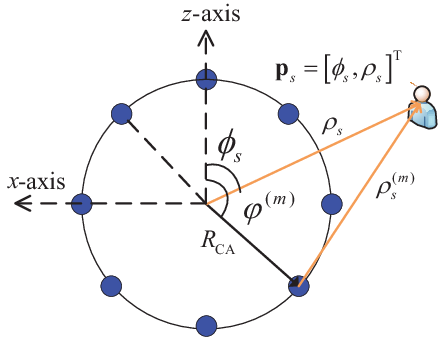}}
        \subfigure[Non Coplanar Case.]{\label{Fig:System_NF_ISAC_Antenna}
		\includegraphics[width=1.5in]{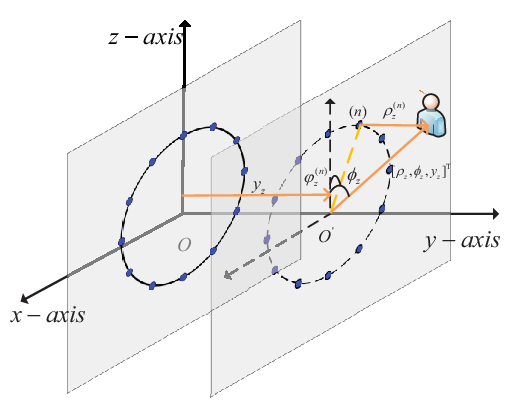} }
		\caption{The antenna array geometry pattern at the BS.}\label{Fig:SAA_pattern}
	\end{figure}

\section{System Model and Problem Formulation} 
   Considering that the phase of EM wavefront in NF simultaneously depends on the angle and distance, we first illustrate the system setting and NF propagation model following by the radio sensing and communication model. As illustrated in Fig.~\ref{Fig:SAA_pattern}, the antenna array at the BS is composed of $N$ UCA array, where both the ${N}_{t}$ transmit antennas and ${N}_{r}$ receive antennas are uniformly distributed along the circular with the radius ${R}_{{\rm{CA}}}$ respectively. Without loss of generosity, the central of the antenna array is chosen as the reference point with the coordination ${\mathbf{r}}_{0 }=\left[ 0,0,0\right]^{{\rm{T}}}$ and the clockwise is employed to model the antenna index. The corresponding coordinates of the $m$-th antennas are given by ${\mathbf{p}}_{{\rm{CA}}}^{\left(m\right)} =\left[ R_{{\rm{CA}}}, {\varphi}_{{\rm{CA}}}^{\left(m\right)},0 \right]^{{\rm{T}}}$, where $ {\varphi}_{{\rm{CA}}}^{\left(m\right)}=\frac{2m\pi}{ {N}_{l}}, l=r,t$. The corresponding Cylindrical coordinates of the ST (CU) is given by ${\mathbf{p}}_{i}=\left[ {\rho}_{i}, {\phi}_{i}, {y}_{i} \right]^{{\rm{T}}}, i \in \left\{ { s,c } \right\} $, where ${\rho}_{i}, {\phi}_{i}$ denote the perpendicular distance and the polar angle from the ST/CU to the original in the parallel $x-z$ plane while ${y}_{i} $ denotes the signed distance. To measure the impact of antenna geometry pattern on joint radio sensing performance, we compare the coplanar and non-coplanar ST as shown in Fig.~\ref{Fig:SAA_pattern}. 
   
\subsection{The NF model for UCA Geometry Pattern} 
Based on the different location of ST/CU, we take the ST as an example to investigate the coplanar case and non-coplanar case in the following.  
\subsubsection{Coplanar Case (${y}_{s}=0$) } 
For the coplanar case, the antenna elements are compactly distributed along the circular with the radius ${R}_{{\rm{CA}}}$ while the ST is located in the same plane (i.e., $y_{s}=0$) of the antenna array. For the non-coplanar case, the ST is not in the same plane (i.e., $y_{s}\neq 0$) of the antenna array. Considering that the phase shifts of spherical wave propagation simultaneously depends on angle and propagation distances between each antenna element and the ST, the corresponding steering vector ${\bm{\alpha}}\left( {\mathbf{p}}_{s} \right)$ between the $m$-th antenna and the ST is given by
    \begin{equation}\label{NearField_Steer}
       {\left[ {{\bm{\alpha}} \left( {\mathbf{p}}_{s} \right)} \right]_{\left( {m} \right)}} = {e^{ - j\frac{{2\pi }}{{{\lambda _c}}}{l^{\left( {m} \right)}_{l,i}} \left( {\mathbf{p}}_{s}  \right)}}, i=1,2, 
    \end{equation}where ${l}^{\left(m\right)}_{l,i} \left( {\mathbf{p}}_{s} \right)$ denotes the propagation distance between the $m$-th antenna and the ST. The distance between the ST and the $m$-th antenna at the UCA can be expressed as ${\rho}_{{l,1}}^{m}=\sqrt{ {R}_{{\rm{CA}}}^{2}+ {\rho}_{s}^{2}-2{\rho_{s}}{R}_{{\rm{CA}}} \cos\left({\theta}_{s} -{\varphi}_{{\rm{CA}}}^{\left(m\right)} \right) } $. By invoking the uniform spherical wavefront (USW) model, the corresponding propagation distance is given by ${l_{{l,1}}^{(m)}}( {\mathbf{p}}_{s} ) ={\rho}_{{l,1}}^{m}- {\rho_{s}}$. The corresponding steering vector can be expressed as $\left[{\bm{\alpha}}_{l,1}\right]_{m}={e^{ - j\frac{{2\pi }}{{{\lambda _c}}}{l_{l,1}^{\left( {m} \right)}} ({\mathbf{p}}_{s}) }}  $.
    
\subsubsection{Non-coplanar Case (${y}_{s} \neq 0$) }
Except for the special coplanar case, the non-coplanar case refers to a more general case where the ST is located beyond the co-plane of the UCA (${y}_{s} \neq 0$) as shown in Fig.~\ref{Fig:System_NF_ISAC_Antenna}\footnote{The special case where the ST is located along the normal direction of the antenna array is not considered in non-coplanar case. As the propagation environment along the normal direction of the antenna array degrades into the parallel wavefront due to symmetry characteristics and isotropic BF design, the NF propagation didn't hold for this case.}. The propagation distance is given by $l_{l,2}^{(m)}( {\mathbf{p}}_{s} ) = \sqrt {{{(\rho_{ {\rm{CA}} }^{(m)})^2} + y_s^2} } - \sqrt {\rho _s^2 + y_s^2} $, where $ {(\rho_{{\rm{CA}}}^{(m)})^2} = R_{{\rm{CA}}}^2 + {\rho}_s^2 - 2{R_{{\rm{CA}}}}{\rho _s}\cos ({\phi _s} - {\varphi}_s^{(m)}) $. The steering vector under non-coplanar case can be expressed as $ \left[{\bm{\alpha}}_{l,2}\right]_{m}={e^{ - j\frac{{2\pi }}{{{\lambda _c}}}{l_{l,2}^{\left( {m} \right)} \left( {\mathbf{p}}_{s} \right) } }} $. 

 \subsection{Radio Sensing Model}
 By denoting the joint communication and sensing (C\&S) signal as $c\left[n\right]$, the transmit signal from the BS is given by ${\mathbf{x}}\left[n\right] = {\mathbf{w}} {s}\left[n\right]$, where ${\mathbf{w}} \in {\mathbb{C}}^{{N}_{t} \times 1}$ denotes the BF design of joint C\&S signal. The received echo signal at the BS can be expressed as ${\mathbf{Y}}_{s}={\alpha}_{s} {\bm{\alpha}}_{r,i} \left( {\mathbf{p}}_{s} \right) {\bm{\alpha}}^{\rm{H}}_{t,i} \left( {\mathbf{p}}_{s} \right) {\bar{{\mathbf{X}}}}_{l}+ {\mathbf{N}}_{s} \in {\mathbb{C}}^{ {N}_{r} \times {N}_{t} } $, where ${\bar{{\mathbf{X}}}}_{l}= [ {\bar{\mathbf{x}}}[1], ..., {\bar{\mathbf{x}}}[L]]  \in {{\mathbb{}{C}}^{{N_t} \times L}}$ denotes the observed transmit signal over $L$ time slots and ${{\mathbf{N}}}_{s} \sim {\mathcal{CN}} \left(0, {\sigma}^{2} \right)$ is the additive white Gaussian noise (AWGN). The vectorized received echo signal is given by ${\mathbf{y}}_{s}={\rm{vec}}\left( {\mathbf{Y}}_{s} \right)= {\mathbf{m}}_{s}+{\mathbf{n}}_{s}$, where ${\mathbf{m}}_{s}={\rm{vec}} \left( {\alpha}_{s} {\bm{\alpha}}_{r} \left( {\rho}_{s},{\phi}_{s}, {y}_{s} \right) {\bm{\alpha}}^{\rm{H}}_{t} \left( {\rho}_{s},{\phi}_{s}, {y}_{s} \right) {\bar{{\mathbf{X}}}}_{l} \right) $ is the desired part in the received signal ${\mathbf{y}}_{s}$ at ST and ${\mathbf{n}}_{s}={\rm{vec}} \left( {\mathbf{N}}_{s}\right) $. ${\tilde{\bm{{\alpha}}}}_{s}=\left[ {\rm{Re}}\left({\alpha}_{s}\right), {\rm{Im}}\left({\alpha}_{s}\right) \right]$, the FIM ${\mathbf{J}}_{\bm{\gamma}} \in {\mathbb{R}}^{4 \times 4} $ of the estimated parameter vectors ${\bm{\gamma}}=\left[ {\mathbf{p}}_{s}  \; {\tilde{\bm{{\alpha} }}}_{s} \right]^{{\rm{T}}}$ is given by ${\mathbb{E}} \left\{ ( {\hat{{\bm{\gamma}}}} - {\bm{{\gamma}}}) ( {\hat{{\bm{\gamma}}}} - {{\bm{\gamma}}})^{{\rm{T}}} \right\} -  {\mathbf{J}}_{\bm{\gamma}}^{-1} \succeq 0 $, where ${\hat{{\bm{\gamma}}}}$ denotes the unbiased estimator of ${\bm{{\gamma}}}$. As the observations ${\mathbf{y}}_{s}$ follows the complex Gaussian distribution, the $\left(i,j\right)$-th FIM elements $\left[ {\mathbf{J}}_{\bm{\gamma}} \right]_{i,j} \buildrel \Delta \over = \Lambda(\gamma_i,\gamma_j) $ can be expressed using the Slepian-Bangs formula as~\cite{SPEB_0}
   \begin{equation}\label{FIM_element0}
   	\Lambda({\gamma}_{i}, {\gamma}_{j}) = \frac{2}{\sigma^2} \sum_{n=0}^{L-1} \Re\left\{\left( \frac{\partial m_{s}[n]} {\partial \gamma_i} \right)^{*} \frac{\partial m_{s}[n]} {\partial \gamma_j}\right\}.  
   \end{equation}

\subsection{Communication Model} 
For communication functionality, the joint C\&S beam is exploited to convey the modulated information signal towards the CU. Let $s[n] $ denotes the joint C\&S signal, the transmitted signal from the ISAC BS is given by ${\mathbf{x}}[n]={\mathbf{w}}s[n]$, where ${\mathbf{w}} \in {\mathbb{C}}^{{N}_{t} \times 1}$ denotes the BF design for the joint C\&S signal. The covariance matrix of the transmit signal is given by ${\mathbf{R}}_{x}= \frac{{\bar{\mathbf{X}}} {{\bar{\mathbf{X}}}^{\rm{H}}}}{L}$. The power budget of the transmitted signal can be expressed as ${P}={\mathbb{E}} ( {\mathbf{x}}^{{\rm{H}}}[n] {\mathbf{x}}[n])={\mathbf{w}}^{{\rm{H}}} {\mathbf{w}}$. The achievable date rate can be expressed as $ {R}_{c}=\log_{2} (1+\frac{ {\left| {{\mathbf{h}}_c^{{\rm{H}}} {\mathbf{w}}}  \right|}  }{{\sigma}^{2}})$. ${\mathbf{h}}_{c} \in {\mathbb{C}}^{{N}_{t} \times 1}$ denotes the NF communication channel between the ISAC BS to CU with the expression ${\mathbf{h}}_{c}= {\bm{\beta}} ({r}_{c}) \odot {\bar{{\bm{\alpha}}}}_{c,l} $, where ${\bm{\beta}} ({r}_{c}) $ denotes the amplitude gain with the expression $[{\bm{\beta}} ({r}_{c})]_{m}=\frac{{\lambda}_{c}}{ 4\pi {r}_{c}^{m}({\mathbf{p}}_{c}) } $ and ${\bar{{\bm{\alpha}}}}_{c}$ denotes the steering vector between the ISAC BS and the CU with the expression $[ {\bar{{\bm{\alpha}}}}_{c} ]_{m}=e^{-j\frac{2\pi l_c^{(m)}\left( {\mathbf{p}}_{c} \right)  }{{\lambda}_{c}}} $. Similar to the steering vector of ST, the propagation distance $l_{c}^{(m)} \left( {\mathbf{p}}_{c} \right) $ between the $m$th antennas and CU is given by
\begin{equation}
    l_c^{(m)}\left( {\mathbf{p}}_{c} \right) = \left\{ \begin{array}{l}
          {\rho}_{{\rm{CA}},c}^{(m)} - {\rho _c},  \quad {y_c} = 0\\
\sqrt {{{\left( { {\rho}_{{\rm{CA}},c}^{(m)}} \right)}^2} + y_c^2}  - \sqrt {\rho _c^2 + y_c^2}, \quad  {y_c} \neq 0,
\end{array} \right.
\end{equation}where ${\rho}_{{\rm{CA}},c}^{(m)}=\sqrt{ R_{ {\rm{CA}} }^{2}+{\rho}_{c}^{2}-2{\rho}_{c} {R}_{ {\rm{CA}} } \cos\left( {{\phi _c} - {\varphi}_{{\rm{CA}}}^{(m)}} \right) }.$ 

\subsection{Problem Formulation}
To evaluate the position performance over multiple unknown parameters, the joint SPEB is given by~\cite{SPEB_0}
\begin{equation}\label{SPEB_def}
   {\rm{SPEB}} \left( {\mathbf{R}}_{x}; {\bm{\eta}} \right) \buildrel \Delta \over = {\rm{tr}} \left( {\mathbf{J}}_{\bm{\eta}}^{-1} \right)\leq  {\mathbb{E}}\big\{ \left\|{ {\hat{ {\bm{\eta}} }} - {\bm{\eta}} }\right\|^2  \big\},
\end{equation}where ${\bm{J}}_{ {\bm{\eta}}}$ denotes the FIM of the Cartesian coordinate ${\bm{\eta}}=\left[ {x}_{s},{z}_{s}, {y}_{s} \right]^{{\rm{T}}}$ and ${{\hat{ {\bm{\eta}} }}} $ denotes the unbiased estimator of ${\bm{\eta}}$. To improve the joint position performance of ST in the NF-ISAC system, the BF design problem is formulated as 
  \begin{subequations}\label{Eq:PF_ori}
  	\begin{align}
  		&\label{min PEB} \underset{ {{\mathbf{w}}} } {\mathop{\min }}  \quad {\rm{SPEB}}_{i}, i=1,2   \\    
  		\label{Communication QoS_NOMA_CU}& {\rm{s.t.}}\quad  \; {R}_{c} \ge {\bar{R}}_{\min}, \\
  		\label{Power max}& \quad \quad \quad   {\mathbf{w}}^{\rm{H}} {\mathbf{w}}  \le  P_{max}, 
  	\end{align}
  \end{subequations}where ${\rm{SPEB}}_{i}, i=1,2 $ denote the joint SPEB in coplanar ${\rm{SPEB}}_{1}$ and non-coplanar case ${\rm{SPEB}}_{2}$. ${\bar{R}}_{\min} $ denotes the minimal communication requirement of CU and \eqref{Power max} denotes the transmission power budget at the BS. Due to the weighted fractional form in \eqref{min PEB}, problem~\eqref{Eq:PF_ori} is a non-convex problem, which makes it hard to find the optimal solution. 

\section{Radio Sensing Performance Analysis and BF Design for Coplanar Case}
To unveil the impact of UCA on radio sensing performance in the NF-ISAC system, we first conduct the performance analysis on separate CRB and joint SPEB, where the closed-form approximation of CRB is derived with the isotropic BF design. Based on the derived seperate CRB, the structure of optimal solution is exploited, which inspire the Rayleigh quotient-based BF design to optimize the radio sensing performance.

\subsection{Radio Sensing Performance Analysis with Coplanar Case} 
To unveil the effect of the UCA on CRB performance, we first provide the radio sensing performance analysis for coplanar case, where the ST is located in the coplanar of the UCA. The CRB performance under the USW channel model is derived as shown in the following. 

\subsubsection{The CRB Approximation with USW Channel Model} 
By denoting ${\left[ {{\mathbf{v}}_{t, 1 }} \right]_m} = \frac{{{R_{{\rm{CA}}}}{\rho_s}\sin (\frac{{2\pi m}}{N_{t}} - {\phi _s})}} { {{  {(R_{ {\rm{CA}} }^2 + \rho _s^2 - 2{R_{{\rm{CA}}}}{\rho_s}\cos (\frac{{2\pi m }}{N_{t}} - {\phi _s}))}^{\frac{1}{2}} }} } $ and $\left[ {{\mathbf{v}}_{t,2}} \right]_{m}  = \frac{{({\rho_s} - {R_{{\rm{CA}}}}\cos (\frac{{2\pi m}}{N_{t}} - {\phi_s}))}}{{{{(R_{{\rm{CA}}}^2 + \rho_s^2 - 2{R_{{\rm{CA}}}}{\rho _s}\cos (\frac{{2\pi m}}{N_{t}} - {\phi _s}))}^{\frac{1}{2}}}}}{\rm{ - 1}}$, the derivatives ${\dot{{\bm{\alpha}}} }_{l,1,i}, l \in \left\{r,t\right\}, i \in \left\{1,2\right\}$ of the steering vector ${\bm{\alpha}}_{l,1}$ can be expressed as
    \begin{equation}\label{UCA_steer}
    	{\dot{{\bm{\alpha}}} }_{l,1,i} = \frac{{\partial {\bm{\alpha}} _{ {\rm{CA}},l} \left( {{\rho_s},{\phi_s}} \right)}} {{\partial {u_i}}} = \frac{{j2\pi }}{{{\lambda_c}}} {{\mathbf{v}}_{l, i}} \odot {\bm{\alpha}}_{l,1} \left( {\mathbf{p}}_{s} \right),
    \end{equation}where ${u}_{1}={\rho}_{s}, {u}_{2}={\phi}_{s}$. 
   
   \begin{proposition}\label{Pro:CRB_USW_UCA}
     Based on the USW model of the steering vector, the FIM ${\mathbf{J}}_{{\bm{\gamma}} } $ in coplanar case can be approximated as a diagonal matrix $ \left( { {\mathbf{J}}_{{\bm{\gamma}} } \approx {\mathbf{J}}_{{\bm{\gamma}} }^{{\rm{d}}} } \right) $ with the slight difference. By denoting ${\bm{\Upsilon}}(\alpha ) = \int\limits_{0}^{2\pi } {\frac{{\alpha  - \cos x}}{{2\pi \sqrt {1 - 2\alpha \cos x + {\alpha ^2}} }}} dx$, the corresponding separate CRB approximations can be approximated by 
   	\begin{subequations}\label{CRB_USW_C1}
   		\begin{equation}\label{CRB_USW_phi}
   			{\rm{CRB}}_{{\phi_s}} = \frac{{{\sigma ^2}}} {{2{{\left| {{\alpha _s}} \right|}^2}L\Re [ {\rm{Tr}} ( {{\mathbf{R}}_{x}} {\dot{{\mathbf{A}}}}_{1,2}^{\rm{H}} { {\dot{\mathbf{A}}}_{1,2} })]}},
   		\end{equation}
   	    \begin{equation}\label{CRB_USW_phi}
   	    	{\rm{CRB}}_{{\rho_s}} = \frac{{{\sigma ^2}}} {{2{{\left| {{\alpha _s}} \right|}^2}L\Re [ {\rm{Tr}} ( { {\mathbf{R}}_{x} } {\dot{{\mathbf{A}}}}_{1,1}^{\rm{H}} { {\dot{\mathbf{A}}}_{1,1} })]}},
   	    \end{equation}
   	\end{subequations}
   where 
   \begin{equation}\label{AA_phi_final}
   	    {\dot{{\mathbf{A}}}}_{1,2}^{\rm{H}} {\dot{{\mathbf{A}} } }_{1,2} = \frac{{4{\pi^2}} {N_r}}{{\lambda _c^2}} \left[ \left\| {{{\mathbf{v}}_{r,1}}} \right\|_2^2 {{\bar{{\mathbf{A}}}}_t}   + {\rm{diag}}({{\mathbf{v}}_{t,1}}){{\bar{{\mathbf{A}}}}_t} {\rm{diag}}({{\mathbf{v}}_{t,1}}) \right], 
   \end{equation}
   \begin{equation}\label{AA_rho_final_0}
   	   \begin{split}
   	   	   & {\dot{{\mathbf{A}}}}_{1,1}^{\rm{H}} {\dot{{\mathbf{A}} } }_{1,1} = \frac{{4{\pi^2}{N_r}}}{{\lambda _c^2}} \left[ {\rm{diag}}({{\mathbf{v}}_{t,2}}) {{\bar{{\mathbf{A}}}}_t} {\rm{diag}}({{\mathbf{v}}_{t,2}}) \right]  \\
   	   	   & -\frac{ {4{\pi^2}{N_r}} } {\lambda_c^2 } \left[ {\bm{\Upsilon}}(\frac{{{\rho _s}}}{{{R_{{\rm{CA}}}}}}) - 1 \right] [{\rm{diag}} ({{\mathbf{v}}_{t,2}}) {{\bar{{\mathbf{A}}}}_t} + {{\bar{{\mathbf{A}}}}_t} {\rm{diag}}({{\mathbf{v}}_{t,2}})] \\
   	   	   & + \frac{{4{\pi^2}{N_r}}}{{\lambda _c^2}} \left\| {{{\mathbf{v}}_{r,2}}  } \right\|_2^2 {{\bar{{\mathbf{A}}}}_t}.  
   	   \end{split}
   \end{equation}$\left\| {{{\mathbf{v}}_{l,2}}  } \right\|_2^2= {N}_{l} \left[ \frac{3}{2} - 2 {\bm{\Upsilon}} (\frac{{{\rho _s}}}{{{R_{{\rm{CA}}}}}}) \right] $ if ${\rho}_{s} <{R}_{{\rm{CA}}} $ while $\left\| { {{\mathbf{v}}_{l,2}}  } \right\|_2^2= {N}_{l} \left[  2 - \frac{{R_{{\rm{CA}}}^2 {N}_{l} }}{{2\rho _s^2 }} - 2 {\bm{\Upsilon}} (\frac{{{\rho_s}}}{{{R_{{\rm{CA}}}}}}) \right] $ if ${\rho}_{s} > {R}_{{\rm{CA}}} $ .
   With the isotropic BF design at the BS, the CRB of ${\phi}_{s}$ can be further simplified as:
   \begin{subequations}
   	  \begin{equation}\label{CRB_USW_iso}
   	  	{\rm{CRB}}_{{\phi_s}} = \left\{ \begin{array}{l}
   	  		\frac{{ {\lambda}_c^2 {\sigma^2} }}{{8{{\left| {{\alpha _s}} \right|}^2}{\pi ^2}\rho_s^2{N_r}{P_{\max}}L}}, \quad {\rho _s} \le {R_{{\rm{CA}}}}, \\
   	  		\frac{{ {\lambda}_c^2 {\sigma ^2} }}{{8{{\left| {{\alpha _s}} \right|}^2}{\pi ^2}R_{{\rm{CA}}}^2{N_r}{P_{\max }}L}},{\rm{  }}{\rho _s} > {R_{{\rm{CA}}}},
   	  	\end{array} \right. 
   	  \end{equation}
     \begin{equation}\label{CRB_rho_USW_iso}
     	{\rm{CRB}}_{{\rho_s}}= \left\{ \begin{array}{l}
     		\frac{ {\lambda _c^2{\sigma ^2}} } { {8{{\left| {{\alpha _s}} \right|}^2}{\pi ^2} {N_r}{P_{\max }}L} \left[ 1-2\bm{\Upsilon}^{2} \left( \frac{ {\rho}_{s} }{ {R}_{{\rm{CA}}} } \right)  \right] },  {\rho _s} \leq {{{R}}_{{\rm{CA}}}}, \\
     		\frac{{ {\lambda}_c^2 {\sigma ^2} }} {{8{{\left| {{\alpha _s}} \right|}^2}{\pi ^2}{N_r}{P_{\max }}L \left[ 2- \frac{ {R}_{{\rm{CA}}}^{2} }{ {\rho}_{s}^{2} }-2 {\bm{\Upsilon}}^{2} \left( \frac{ {\rho}_{s} }{ {R}_{{\rm{CA}}} } \right)  \right]  }} \; {\rho _s} > {{{R}}_{{\rm{CA}}}}. 
     	\end{array} \right.
     \end{equation}
   \end{subequations}

   \end{proposition}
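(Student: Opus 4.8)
The plan is to evaluate the Slepian--Bangs formula \eqref{FIM_element0} for the $4\times 4$ FIM of $\bm{\gamma}=[\rho_s,\phi_s,{\rm Re}(\alpha_s),{\rm Im}(\alpha_s)]^{\rm T}$ (in the coplanar case $y_s=0$ is fixed) and to show that, once the discrete antenna sum is replaced by an angular integral, the geometric block decouples. First I would differentiate the noiseless observation $\mathbf{m}_s={\rm vec}(\alpha_s\,\bm{\alpha}_{r,1}\bm{\alpha}_{t,1}^{\rm H}\bar{\mathbf{X}}_l)$ by the product rule, writing $\partial\mathbf{m}_s/\partial u_i=\alpha_s\,{\rm vec}(\dot{\mathbf{A}}_{1,i}\bar{\mathbf{X}}_l)$ with $\dot{\mathbf{A}}_{1,i}=\dot{\bm{\alpha}}_{r,1,i}\bm{\alpha}_{t,1}^{\rm H}+\bm{\alpha}_{r,1}\dot{\bm{\alpha}}_{t,1,i}^{\rm H}$. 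Using $\sum_n\bar{\mathbf{x}}[n]\bar{\mathbf{x}}[n]^{\rm H}=L\mathbf{R}_x$, each geometric FIM entry collapses to $\frac{2|\alpha_s|^2 L}{\sigma^2}\Re[{\rm Tr}(\mathbf{R}_x\dot{\mathbf{A}}_{1,i}^{\rm H}\dot{\mathbf{A}}_{1,j})]$, so that $\mathrm{CRB}_{u}=1/[\mathbf{J}_{\bm{\gamma}}]_{uu}$ already has the form \eqref{CRB_USW_C1} once the matrix is shown to be diagonal.

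Next I would substitute $\dot{\bm{\alpha}}_{l,1,i}=\frac{j2\pi}{\lambda_c}\mathbf{v}_{l,i}\odot\bm{\alpha}_{l,1}$ from \eqref{UCA_steer} and expand the rank-two Gram product $\dot{\mathbf{A}}_{1,i}^{\rm H}\dot{\mathbf{A}}_{1,j}$ into its four pieces. Exploiting $\bm{\alpha}_{l,1}^{\rm H}\bm{\alpha}_{l,1}=N_l$ and $(\mathbf{v}_{l,i}\odot\bm{\alpha}_{l,1})^{\rm H}(\mathbf{v}_{l,j}\odot\bm{\alpha}_{l,1})=\mathbf{v}_{l,i}^{\rm T}\mathbf{v}_{l,j}$ (the unit-modulus entries cancel), every piece reduces to $\bar{\mathbf{A}}_t=\bm{\alpha}_{t,1}\bm{\alpha}_{t,1}^{\rm H}$, a term ${\rm diag}(\mathbf{v}_{t,\cdot})\bar{\mathbf{A}}_t{\rm diag}(\mathbf{v}_{t,\cdot})$, or a mixed term carrying the scalar $\mathbf{1}^{\rm T}\mathbf{v}_{l,\cdot}$. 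Replacing $\frac{1}{N_l}\sum_m(\cdot)$ by $\frac{1}{2\pi}\int_0^{2\pi}(\cdot)\,dx$ with $x=\frac{2\pi m}{N_l}-\phi_s$ — which is exact in the limit of uniform angular spacing afforded by the UCA's rotational symmetry — the angular weight $[\mathbf{v}_{l,1}]_m\propto\sin x$ is odd, so $\mathbf{1}^{\rm T}\mathbf{v}_{l,1}=0$ and the $(\rho_s,\phi_s)$ cross trace, being $\propto\int_0^{2\pi}\sin x\,(\text{even in }x)\,dx$, vanishes. This is precisely what collapses $\mathbf{J}_{\bm{\gamma}}$ to $\mathbf{J}_{\bm{\gamma}}^{\rm d}$ and isolates \eqref{AA_phi_final}, the residual of the Riemann sum being the claimed slight difference.

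The integrals I would close with the identity $(\alpha-\cos x)^2=(1-2\alpha\cos x+\alpha^2)-\sin^2 x$ together with the Poisson-kernel moments $\frac{1}{2\pi}\int_0^{2\pi}\frac{\cos(nx)}{1-2\alpha\cos x+\alpha^2}\,dx=\frac{\alpha^{|n|}}{1-\alpha^2}$ for $|\alpha|<1$, which give $\frac{1}{2\pi}\int_0^{2\pi}\frac{\sin^2 x}{1-2\alpha\cos x+\alpha^2}\,dx=\tfrac12$. Factoring $R_{\rm CA}$ out of the radical when $\rho_s<R_{\rm CA}$ (so $\alpha=\rho_s/R_{\rm CA}<1$), and $\rho_s$ out when $\rho_s>R_{\rm CA}$ (so $\beta=R_{\rm CA}/\rho_s<1$ and $\bm{\Upsilon}(1/\beta)$ re-expresses through the same kernel), produces the two branches of $\|\mathbf{v}_{l,2}\|_2^2$ and, analogously, $\|\mathbf{v}_{l,1}\|_2^2$. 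To recover \eqref{AA_rho_final_0} and the $\bm{\Upsilon}^2$ correction I would then profile out the nuisance amplitude $\alpha_s$ by a Schur complement: the USW reference term $-\rho_s$ supplies the constant $-1$ in $\mathbf{v}_{t,2}$, whose projection onto the signal direction is indistinguishable from $\alpha_s$, so subtracting it replaces $\|\mathbf{v}_{l,2}\|_2^2$ by its zero-mean version $\|\mathbf{v}_{l,2}\|_2^2-(\mathbf{1}^{\rm T}\mathbf{v}_{l,2})^2/N_l$; since $\frac{1}{N_l}\mathbf{1}^{\rm T}\mathbf{v}_{l,2}=\bm{\Upsilon}(\alpha)-1$, this is exactly where the $1-2\bm{\Upsilon}^2$ factor of \eqref{CRB_rho_USW_iso} is born, whereas for $\phi_s$ the mean already vanishes and no correction appears. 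Finally, setting $\mathbf{R}_x=\frac{P_{\max}}{N_t}\mathbf{I}$ and using ${\rm Tr}(\bar{\mathbf{A}}_t)=N_t$ and $\bm{\alpha}_{t,1}^{\rm H}\,{\rm diag}(\mathbf{v})^2\,\bm{\alpha}_{t,1}=\|\mathbf{v}\|_2^2$ turns the traces into scalars and reproduces \eqref{CRB_USW_iso} and \eqref{CRB_rho_USW_iso}.

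The step I expect to be the main obstacle is the diagonalization coupled with the amplitude Schur complement: one must simultaneously justify that the Riemann-sum-to-integral replacement annihilates both the angle--range and the geometry--amplitude cross terms while bounding the leftover mismatch, and track how the DC component of the range derivative is removed by profiling $\alpha_s$, since this is what converts the naive norm factor $\tfrac32-2\bm{\Upsilon}$ into the correct $1-2\bm{\Upsilon}^2$. Everything else — the rank-two expansion and the Poisson-kernel evaluations — is mechanical.
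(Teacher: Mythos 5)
Your overall route --- Slepian--Bangs, the rank-two product-rule expansion $\dot{\mathbf{A}}_{1,i}=\dot{\bm{\alpha}}_{r,1,i}\bm{\alpha}_{t,1}^{\rm H}+\bm{\alpha}_{r,1}\dot{\bm{\alpha}}_{t,1,i}^{\rm H}$, the Riemann-sum-to-integral replacement that kills the odd $\sin$-weighted sums and hence the $(\rho_s,\phi_s)$ cross entry, the Poisson-kernel evaluation of the surviving integrals (the paper invokes the equivalent table integral 3.613.3 of~\cite{Integral_reference}), and the final substitution $\mathbf{R}_x=\frac{P_{\max}}{N_t}\mathbf{I}$ --- coincides with the paper's proof. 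The one step you single out as the main obstacle is, however, the one you get wrong: the $1-2\bm{\Upsilon}^2$ factor in \eqref{CRB_rho_USW_iso} is \emph{not} born from a Schur complement against the nuisance amplitude $\alpha_s$. Under isotropic BF the geometry--amplitude cross entry of the FIM is proportional to ${\rm Tr}(\mathbf{A}_1^{\rm H}\dot{\mathbf{A}}_{1,\rho_s})\propto N_r\left\|\mathbf{v}_{t,2}\right\|_1-N_t\left\|\mathbf{v}_{r,2}\right\|_1$, which vanishes because $\left\|\mathbf{v}_{l,2}\right\|_1/N_l=\bm{\Upsilon}-1$ is the same on both array sides; no profiling of $\alpha_s$ is needed, which is precisely why the proposition can assert that the full $4\times 4$ FIM is diagonal and take ${\rm CRB}_{\rho_s}=1/[\mathbf{J}_{\bm{\gamma}}]_{(2,2)}$ directly.

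The correction actually originates inside the Gram product $\dot{\mathbf{A}}_{1,1}^{\rm H}\dot{\mathbf{A}}_{1,1}$: the transmit--receive mixed pieces --- exactly the ``mixed term carrying $\mathbf{1}^{\rm T}\mathbf{v}_{l,\cdot}$'' that you list in your own expansion --- do not vanish for the range derivative since $\left\|\mathbf{v}_{l,2}\right\|_1=N_l(\bm{\Upsilon}-1)\neq 0$, and they contribute the middle line of \eqref{AA_rho_final_0}; tracing against $\mathbf{R}_x\propto\mathbf{I}$ they give $-2(\bm{\Upsilon}-1)^2$, which combined with the two per-side norms $2(\tfrac{3}{2}-2\bm{\Upsilon})$ yields $1-2\bm{\Upsilon}^2$. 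Your mean-subtraction $\left\|\mathbf{v}_{l,2}\right\|_2^2-(\mathbf{1}^{\rm T}\mathbf{v}_{l,2})^2/N_l$ happens to reproduce the identical arithmetic (it removes the squared mean once per side), which is why your final formulas come out right; but if you retain the mixed Gram pieces in your four-term expansion \emph{and} additionally apply the amplitude Schur complement, you double-count the $(\bm{\Upsilon}-1)^2$ term, and if you drop the mixed pieces you reach the right answer for the wrong reason. Replace the amplitude-profiling argument with explicit bookkeeping of the $\left\|\mathbf{v}_{l,2}\right\|_1$ cross terms --- {\textbf{Lemma~\ref{Lemma_Case1_norm}}} supplies exactly the needed first and second moments --- and your proof matches the paper's.
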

   \begin{proof}
   	Let ${\mathbf{A}}_{1} ={\bm{\alpha}}_{r,1} {\bm{\alpha}}_{t,1}^{\rm{H}}$, the corresponding derivatives of $\left\{{\phi}_{s}, {\rho}_{s}\right\} $ on ${\mathbf{A}}_{1}$ can be expressed as
   	  \begin{subequations}\label{A_derivative}
              \begin{equation}
   			{{\dot{{\mathbf{A}}}}_{1,{\rho_{s}}}}= \frac{{\partial{{\mathbf{A}_1}}}} {{\partial {\rho_{s}}}}  =\frac{{j2\pi }}{{{\lambda _c}}}\left[ 
                 {{{\mathbf{A}}_{{\rm{1}}}} {\rm{diag}}({{\mathbf{v}}_{t,2}}) - {\rm{diag}}({{\mathbf{v}}_{r,2}}) {{\mathbf{A}}_{{\rm{1}}}}} \right].
   		 \end{equation}
   		\begin{equation}
   			{{\dot{{\mathbf{A}}}}_{1,{\phi_{s}}}} = \frac{{\partial{{\mathbf{A}_1}}}} {{\partial {\phi_{s}}}} = \frac{{j2\pi }}{{{\lambda_c}}}\left[ {{{\mathbf{A}}_{{\rm{1}}}} {\rm{diag}}({ {\mathbf{v}}_{t,1}}) - {\rm{diag}}({{\mathbf{v}}_{r,1}}){{\mathbf{A}}_{{\rm{1}}}}} \right],
   		\end{equation}
   	\end{subequations}
    
    \begin{figure*}
    \begin{small}
        \begin{equation}\label{AA_deri_ori}
   		\begin{split}
   			{\dot{{\mathbf{A}}}}_{1,{\phi_s} }^{{\rm{H}}} {\dot{{\mathbf{A}}}}_{1,{\phi_s}} & =  \frac{{4{\pi ^2}}} {{\lambda_c^2}} \left[ {\mathbf{A}}_{1}^{{\rm{H}}} {\rm{diag}}({{\mathbf{v}}_{r,1}} \odot {{\mathbf{v}}_{r,1}}) {{\mathbf{A}}_{1}} +
   			{\rm{diag}}({{\mathbf{v}}_{t,1}}) {\mathbf{A}}_{1}^{{\rm{H}}} {{\mathbf{A}}_{1}} {\rm{diag}}({{\mathbf{v}}_{t,1}}) \right]  \\
   			& - \frac{{4{\pi ^2}}} {{\lambda_c^2}} \left[{\rm{diag}}({{\mathbf{v}}_{t,1}}) {\mathbf{A}}_{1}^{{\rm{H}}} {\rm{diag}}({{\mathbf{v}}_{r,1}}) {{\mathbf{A}}_{1}}  + {\mathbf{A}}_{1}^{{\rm{H}}} {\rm{diag}}({{\mathbf{v}}_{r,1}}){{\mathbf{A}}_{1}} {\rm{diag}}({{\mathbf{v}}_{t,1}})  \right]. 
   		\end{split}
   	\end{equation}
   	\hrulefill
    \end{small}
   \end{figure*}
   
   \noindent We take $\left[{\mathbf{J}}_{{\bm{\gamma}}} \right]_{(1,2)} $ for example to illustrate the approximation of the cross term in FIM. $\left[{\mathbf{J}}_{{\bm{\gamma}}} \right]_{(1,2)} $ can be expressed as $\left[{\mathbf{J}}_{{\bm{\gamma}}} \right]_{(1,2)} =\frac{2 {\left| {{\alpha_s}} \right|^2}L  }{{\sigma}^{2}  } {\rm{Tr}} \left( {\mathbf{R}}_{x} {\dot{{\mathbf{A}}}_{1,{\phi}_{s}}^{\rm{H}} {\dot{{\mathbf{A}}}}_{1,{\rho}_{s}} } \right) $, where each terms in ${\dot{{\mathbf{A}}}_{1,{\phi}_{s}}^{\rm{H}} {\dot{{\mathbf{A}}}}_{1,{\rho}_{s}} } $ contains ${\mathbf{v}}_{l,1} \odot {\mathbf{v}}_{l,2}$. It is noted that $ \left[ {{\mathbf{v}}_{t,1}}  \odot {{\mathbf{v}}_{t,2}} \right]_{m}=\frac{ {[{R_{{\rm{CA}}}} \rho _s\sin (\frac{{2\pi m}}{{{N_t}}} - {\phi _s})]} \left[ {\rho}_{s}+ {R}_{{\rm{CA}}} \cos(\frac{{2\pi m}}{{{N_t}}} - {\phi _s} ) \right] } { {(R_{{\rm{CA}}}^2 + {\rho}_s^2 - 2{R_{{\rm{CA}}}} {\rho_s}\cos (\frac{{2\pi m}}{{{N_t}}} - {\phi_s}))} }  - \frac{{[{R_{{\rm{CA}}}}{\rho _s}\sin (\frac{{2\pi m}}{{{N_t}}} - {\phi _s})]}} {R_{{\rm{CA}}}^2 + \rho_s^2 - 2{R_{{\rm{CA}}}}{\rho_s}\cos ( \frac{{2\pi m}}{{{N_t}}}-{\phi}_{s} ) }.$ Considering that $ \sum\limits_{m =  - \frac{{N_{t} - 1}}{2}}^{m = \frac{{N_{t} - 1}}{2}} {\frac{{[{R_{{\rm{CA}}}} \rho_{s} \sin (\frac{{2\pi m}}{{{N_t}}} - {\phi _s})]}}{{(R_{{\rm{CA}}}^2 + \rho _s^2 - 2{R_{{\rm{CA}}}}{\rho _s}\cos (\frac{{2\pi m}}{{{N_t}}} - {\phi_s}))}}} \approx 0$, we have  $\sum\limits_{m =  - \frac{{{N_t} - 1}}{2}}^{m = \frac{{{N_t} - 1}}{2}} \left[ {{\mathbf{v}}_{t,1}}  \odot {{\mathbf{v}}_{t,2}} \right]_{m} \approx 0$, which indicates that ${\mathbf{J}}_{{\bm{\gamma}}} \approx {\mathbf{J}}_{{\bm{\gamma}}}^{{\rm{d}}} $, ${\rm{CRB}}_{{\phi}_{s}} \approx \frac{1}{{{{\left[ {\mathbf{J}}_{{{\bm{\gamma}}}} \right]}_{\left(1,1\right) }}}} $ and ${\rm{CRB}}_{{\rho}_{s}} \approx \frac{1}{{{{\left[ {\mathbf{J}}_{{{\bm{\gamma}}}} \right]}_{\left(2,2\right) }}}}  $. While the $ {\dot{{\mathbf{A}}}}_{1,{\phi_s}}^{{\rm{H}}} {\dot{{\mathbf{A}}}}_{1,{\phi_s}} $ can be expressed as \eqref{AA_deri_ori}. By exploiting the periodical property of ${\mathbf{v}}_{r,1}$, we have ${\mathbf{A}}_{1}^{{\rm{H}}} {\rm{diag}}({{\mathbf{v}}_{r,1}}) {\mathbf{A}}_{1}=\sum\limits_{m=- \frac{{N}_{r}-1}{2} }^{ m= \frac{{N}_{r}-1}{2} } {{[{{\mathbf{v}}_{r,1}}]}_m} {{\bar{{\mathbf{A}}}}_{1,t}} \approx 0 $, where ${{\bar{{\mathbf{A}}}}_{t}}={\bm{\alpha}}_{t,1} {\bm{\alpha}}_{t,1}^{{\rm{H}}} $. By introducing ${\delta}=\frac{ 2 \pi}{N_{r}}$, the first term in \eqref{AA_deri_ori} can be derived as ${\mathbf{A}}_{1}^{{\rm{H}}} {\rm{diag}}( {{\mathbf{v}}_{r,1}} \odot {{\mathbf{v}}_{r,1}} ) {{\mathbf{A}}}_{1}= \left\| {{{\mathbf{v}}_{r,1}}} \right\|_2^2 = \sum\limits_{m = - \frac{{{N_r} - 1}}{2}}^{\frac{{{N_r} - 1}}{2}} \left[ {{\mathbf{v}}_{r,1}} \odot {{\mathbf{v}}_{r,1}}  \right]_{m} {\bar{{\mathbf{A}}}}_{t} $.
   \begin{equation}
   	  \begin{split}
   	  	 & \quad \sum\limits_{m = - \frac{{{N_r} - 1}}{2}}^{\frac{{{N_r} - 1}}{2}} \left[ {{\mathbf{v}}_{r,1}} \odot {{\mathbf{v}}_{r,1}}  \right]_{m}  \\
   	  	 & = \frac{ {R}_{{\rm{CA}}}^{2} {\rho}_{s}^{2} } { {\delta} } \sum\limits_{m = - \frac{{{N_r} - 1}}{2}}^{\frac{{{N_r} - 1}}{2}} { \frac{ {\delta} {{\sin}^2}(\frac{{2\pi m}}{N_{r}} - {\phi_s})  } { { { (R_{{\rm{CA}}}^2 + {\rho}_s^2 - 2{R_{{\rm{CA}}}}{\rho_s} \cos ( m {\delta} - {\phi_s})) } }} }  \\
   	  	 & \mathop  \approx \limits^{(a)} \frac{ {R}_{{\rm{CA}}}^{2} {\rho}_{s}^{2} {N}_{r} }{2 {\pi} } \int_{-\pi}^{\pi} \frac{ {{\sin}^2}x }{ R_{{\rm{CA}}}^2 + {\rho}_s^2 - 2{R_{{\rm{CA}}}}{\rho_s} \cos x } d x  \\
   	  	 & \mathop  \approx \limits^{(b)}  \left\{ \begin{array}{l}
   	  	 	\frac{{\rho_s^2{N_r}}}{2}, {R_{{\rm{CA}}}} \ge {\rho_s}\\
   	  	 	\frac{{R_{{\rm{CA}}}^2{N_r}}}{2}, {R_{{\rm{CA}}}} < {\rho _s},
   	  	 \end{array} \right.,
  	  \end{split}
   \end{equation}where $\mathop \approx \limits^{(a)} $ comes from ${\delta} \gg 1, {N}_{r} \gg 2\pi $ and $x=\frac{{2\pi m}}{N_{r}} - {\phi_s}$. $\mathop \approx \limits^{(b)} $ comes from the integral formula 3.613.3 in~\cite{Integral_reference}. Substituting the following equations into \eqref{AA_deri_ori}, \eqref{AA_phi_final} can be derived. Considering that ${\rm{Tr}}\left( {\bar{{\mathbf{A}}}}_{l} \right)={N}_{l} $ and ${\rm{Tr}} \left[ {\rm{diag}}\left( {\mathbf{v}}_{l,1} \right) {\bar{{\mathbf{A}}}}_{t} {\rm{diag}} \left( {\mathbf{v}}_{l,1} \right) \right]=\sum\limits_{m = - \frac{{{N_l} - 1}}{2}}^{\frac{{{N_l} - 1}}{2}} \left[ {{\mathbf{v}}_{l,1}} \odot {{\mathbf{v}}_{l,1}}  \right]_{m} $, \eqref{CRB_USW_iso} can be derived by substituting $ {\mathbf{R}}_{x}= \frac{P_{\max} }{{N}_{t}} {\mathbf{I}}_{{N}_{t} \times {N}_{t}}$ into \eqref{CRB_USW_phi}.
   \begin{lemma}\label{Lemma_Case1_norm}
     By exploiting the summation property of ${\mathbf{v}}_{l,2}$, we have 
   	  \begin{subequations}\label{v_2_int}
   	  	\begin{equation}
   	  		{\left\| { { {\mathbf{v}}_{l,2}} } \right\|_1} = {N_l} {\bm{\Upsilon}} (\frac{{{\rho_s}}}{{{R_{{\rm{CA}}}}}}) - {N_l}, 
   	  	\end{equation}
     	\begin{equation}\label{v_2_int_2}
     		 {\left\| { { {\mathbf{v}}_{l,2}} } \right\|_{2}^{2}}
     		= \left\{ \begin{array}{l}
     			\frac{{2\rho _s^2{N_r}}}{{(R_{{\rm{CA}}}^2 - \rho _s^2)}},  \; {\rho _s} < {R_{{\rm{CA}}}}\\
     			\frac{{2R_{{\rm{CA}}}^2{N_r}}}{{(\rho _s^2 - R_{{\rm{CA}}}^2)}}, \; {\rho _s} > {R_{{\rm{CA}}}}
     		\end{array} \right. .
     	\end{equation}
   	  \end{subequations}
   \end{lemma}
   \begin{proof}
   	 Please see the proof in {\textbf{Appendix \ref{App:Lemma_Case1_norm}}}. 
   \end{proof}
   \noindent Based on \eqref{v_2_int} and following the similar proof of ${\rm{CRB}}_{{\phi}_{s} } $, \eqref{CRB_rho_USW_iso} can be derived.
   \end{proof}

\subsection{Problem Formulation for Coplanar Case} 
According to {\textbf{Proposition~\ref{Pro:CRB_USW_UCA}}}, the FIM can be approximated as a diagonal matrix when the ST is located in the coplanar of the antenna array at the BS. Considering the radio sensing functionality is conducted to estimate the Cylindrical coordinate ${\mathbf{p}}_{s}$ of the ST, the projection matrix ${\mathbf{T}} \buildrel \Delta \over = \frac{\partial {\mathbf{p}}_{s}^{\mathsf{T}}} {\partial \bm{\eta}} \in {\mathbb{C}}^{3 \times 3}$ from the Cylindrical coordinate to the Cartesian coordinate is given by 

\begin{small}
    \begin{align} 
	{\mathbf{T}}  =\begin{bmatrix}
		\frac{{\partial {\rho_{s}}}} {{\partial {x_s}}}  & \frac{{\partial {\rho_{s }}}}{{\partial {y_s}}}  & \frac{{\partial {\rho_{s}}}}{{\partial {y_s}}} \\
		\frac{{\partial {\phi_{s}}}} {{\partial {x_s}}}   & \frac{{\partial {\phi_{s}}}} {{\partial {y_s}}} & \frac{{\partial {\phi_{s}}}}{{\partial {y_s}}} \\
        \frac{{\partial {y_{s}}}} {{\partial {x_s}}}   & \frac{{\partial {y_{s}}}} {{\partial {y_s}}} & \frac{{\partial {y_{s}}}}{{\partial {y_s}}} \\
	\end{bmatrix} 
    = \begin{bmatrix}
    \frac{ {x}_{s} } { \sqrt{{x}_{s}^{2}+{z}_{s}^{2} } },   & 	\frac{ {z}_{s} } { \sqrt{{x}_{s}^{2}+{z}_{s}^{2} } }, & 0 \\
		\frac{ -{z}_{s} } { {x}_{s}^{2}+{z}_{s}^{2} },  &  \frac{ {x}_{s} }{ {x}_{s}^{2}+{z}_{s}^{2} }, & 0 \\
		0,   & 	0, & 1 \\
	\end{bmatrix}.
\end{align}
\end{small}

\noindent Thus, the corresponding SPEB in coplanar case can be simplified by 
\begin{equation}
    {\rm{SPEB}} \left( {\mathbf{R}}_{x} ; {\bm{\eta}} \right)  =\frac{ {p}_{x}^{2} {p}_{y}^{2} }{ \left({p}_{x}+{p}_{y}\right)^{2} } \sum\limits_{i = 1}^2 \left\{  { [ {\mathbf{J}}_{ {{{\mathbf{p}}}_s} } ]}_{(i,i)}^{-1} { {[ {\tilde{{\mathbf{T}}}} ]}_{(i,i)} } \right\},
\end{equation}where $ {\tilde{{\mathbf{T}}}}=\frac{ {{{({p_x} + {p_y})}^2}} } { {p_x^2p_y^2} } {{{\mathbf{T}}^{-1}}{{({{\mathbf{T}}^{{\rm{H}}}})}^{ - 1}}}$. As it has proved that the FIM ${\mathbf{J}}_{{\bm{\gamma}}}$ can be approximated via the diagonal matrix ${\mathbf{J}}_{{\bm{\gamma}}}^{{\rm{d}}} $ with the trivial loss, the problem \eqref{Eq:PF_ori} can be further recast as
\vspace{-0.1cm}
\begin{small}
      \begin{subequations}\label{Eq:PF_ori_Case1}
  	\begin{align}
  		&\label{min PEB} \underset{ {\mathbf{w}}  } {\mathop{\min }}  \quad  \frac{ {{ { [ {\tilde{{\mathbf{T}}}} ]}_{(1,1)} }} } {\Re [{\rm{Tr}}( {\mathbf{w}}  {\dot{{\mathbf{A}}}_{1,{\phi}_s }^{{\rm{H}}} {{ {\dot{{\mathbf{A}}}}}_{ 1,{\phi}_s}{\mathbf{w}}^{{\rm{H}}} ) } }   ] } + \frac{{{{[ {\tilde{{\mathbf{T}}}} ]}_{(2,2)}}}} {{\Re [ {\rm{Tr}}( {\mathbf{w}}  {\dot{ {\mathbf{A}} }}_{1,{{\bm{\rho}}_s}}^{{\rm{H}}} {{{\dot{{\mathbf{A}} } }_{1,{{\bm{\rho}}_s} }} {\mathbf{w}}^{{\rm{H}}} ) } ] }} \\  
  		\label{Communication QoS_NOMA_CU} & {\rm{s.t.}} \quad  \; {R}_{c} \ge {\bar{R}}_{\min}, \\
  		\label{Power max}& \quad \quad \quad  {\mathbf{w}}^{\rm{H}} {\mathbf{w}}\le  P_{\max}.
  	\end{align}
    \vspace{-0.3cm}
  \end{subequations}
\end{small}

\noindent The problem~\eqref{Eq:PF_ori_Case1} is non-convex fractional programming because of the quadratic form, where the conventional Dinkelbach method~\cite{Dinkelbach} cannot be invoked due to the multi-ratio objective function (OF) in \eqref{Eq:PF_ori_Case1}. To solve this problem, we first focus on the first term to investigate the structure of the optimal solution and then employ the quadratic transformation~\cite{QuadraticTransformation} method in the following.
\vspace{-0.5cm}
\subsection{BF Design for Coplanar Case}

\subsubsection{BF Design to minimize ${\rm{CRB}}_{{\phi}_{s}}$ } 
To unveil the structure of BF design, we first investigate the BF design to minimize the first term in \eqref{Eq:PF_ori_Case1}, where the problem is expressed as 
  \begin{subequations}\label{Eq:PF_ori_Case1_1}
  	\begin{align}
  		&\label{min PEB} \underset{ {\mathbf{w}}  } {\mathop{\min }}  \quad  \frac{ {{ { [ {\tilde{{\mathbf{T}}}} ]}_{(1,1)} }} } {\Re [{\rm{Tr}}( {\mathbf{w}}  {\dot{{\mathbf{A}}}_{1,{{\bm{\phi}_s}}}^{{\rm{H}}} { {{\dot{{\mathbf{A}}}}}_{ 1, {{\bm{\phi}_s}}} {\mathbf{w}}^{{\rm{H}}}  ) } }  ] }  \\  
  		& {\rm{s.t.}} \quad  \; \eqref{Communication QoS_NOMA_CU}, \eqref{Power max}. 
  	\end{align}
  \end{subequations}

\begin{lemma}\label{Eq_Norm}
    Let ${\mathbf{h}}_{s}={\sqrt{ \left\| {{{\mathbf{v}}_{r,1}}} \right\|_2^2 }} {\bm{{\alpha}}}_{t,1}+ {{\bm{\alpha}}_{t,1}} \odot {{\mathbf{v}}_{t,1}} \in {\mathbb{C}}^{{N}_{t} \times 1}$, we have ${\rm{Tr}}({{\mathbf{R}}_{x}}{\dot{{\mathbf{A}}}}_{1,{\phi_s}}^{\rm{H}} {\dot{{\mathbf{A}}}}_{1, {\phi _s}} ) \approx \frac{{4{\pi ^2}{N_r}}}{{\lambda _c^2}} {  \left| { {{\mathbf{h}}_{s}^{\rm{H}}} {{\mathbf{w}}} } \right|^2} $. 
\end{lemma}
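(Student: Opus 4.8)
The plan is to evaluate the quadratic form directly from the closed-form expression \eqref{AA_phi_final} already established in Proposition~\ref{Pro:CRB_USW_UCA} and then to complete the square. Since a single joint C\&S beam is used, I would take ${\mathbf{R}}_x = {\mathbf{w}} {\mathbf{w}}^{\rm{H}}$ and apply the cyclic property of the trace to write ${\rm{Tr}}({\mathbf{R}}_x \dot{\mathbf{A}}_{1,\phi_s}^{\rm{H}} \dot{\mathbf{A}}_{1,\phi_s}) = {\mathbf{w}}^{\rm{H}} \dot{\mathbf{A}}_{1,\phi_s}^{\rm{H}} \dot{\mathbf{A}}_{1,\phi_s} {\mathbf{w}}$. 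Substituting \eqref{AA_phi_final} (recalling ${\dot{\mathbf{A}}}_{1,2} = {\dot{\mathbf{A}}}_{1,\phi_s}$) together with the rank-one identity ${\bar{\mathbf{A}}}_t = {\bm{\alpha}}_{t,1} {\bm{\alpha}}_{t,1}^{\rm{H}}$ turns the bracket into a sum of two rank-one quadratic forms, which is the natural starting point.

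Next I would expand the two terms into squared magnitudes. The first contributes $\|{\mathbf{v}}_{r,1}\|_2^2 \, |{\bm{\alpha}}_{t,1}^{\rm{H}} {\mathbf{w}}|^2$. For the second, using ${\rm{diag}}({\mathbf{v}}_{t,1}) {\bm{\alpha}}_{t,1} = {\mathbf{v}}_{t,1} \odot {\bm{\alpha}}_{t,1}$ and the fact that ${\mathbf{v}}_{t,1}$ is real (so ${\rm{diag}}({\mathbf{v}}_{t,1})$ is Hermitian), the quantity ${\mathbf{w}}^{\rm{H}} {\rm{diag}}({\mathbf{v}}_{t,1}) {\bar{\mathbf{A}}}_t {\rm{diag}}({\mathbf{v}}_{t,1}) {\mathbf{w}}$ collapses to $|({\mathbf{v}}_{t,1} \odot {\bm{\alpha}}_{t,1})^{\rm{H}} {\mathbf{w}}|^2$. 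Writing $c = \sqrt{\|{\mathbf{v}}_{r,1}\|_2^2}$, the objective becomes $\frac{4\pi^2 N_r}{\lambda_c^2}\big[\, |c\,{\bm{\alpha}}_{t,1}^{\rm{H}} {\mathbf{w}}|^2 + |({\mathbf{v}}_{t,1}\odot{\bm{\alpha}}_{t,1})^{\rm{H}} {\mathbf{w}}|^2 \,\big]$, and since ${\mathbf{h}}_s = c\,{\bm{\alpha}}_{t,1} + {\bm{\alpha}}_{t,1}\odot{\mathbf{v}}_{t,1}$, completing the square shows this sum equals $|{\mathbf{h}}_s^{\rm{H}} {\mathbf{w}}|^2$ minus the cross term $2c\,\Re\{({\bm{\alpha}}_{t,1}^{\rm{H}} {\mathbf{w}})^{*} ({\mathbf{v}}_{t,1}\odot{\bm{\alpha}}_{t,1})^{\rm{H}} {\mathbf{w}}\}$, which is exactly the quantity I must argue is negligible.

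The hard part is discarding this cross term. The structural fact I would lean on is the near-orthogonality of the two constituent vectors: ${\bm{\alpha}}_{t,1}^{\rm{H}}({\mathbf{v}}_{t,1}\odot{\bm{\alpha}}_{t,1}) = \sum_m [{\mathbf{v}}_{t,1}]_m \approx 0$, which is the same antisymmetric summation (an odd $\sin$ weight against an even denominator over the symmetric index range) already exploited in Proposition~\ref{Pro:CRB_USW_UCA}. Because this inner product alone does not annihilate the bilinear cross term for an arbitrary ${\mathbf{w}}$, I would complement it with a magnitude comparison. From Proposition~\ref{Pro:CRB_USW_UCA}, $c^2 = \|{\mathbf{v}}_{r,1}\|_2^2 = O(N_r)$ (e.g.\ $\approx \rho_s^2 N_r/2$ for $\rho_s \le R_{\rm{CA}}$), so for a beamformer aligned with ${\bm{\alpha}}_{t,1}$ the leading term scales as $c^2 |{\bm{\alpha}}_{t,1}^{\rm{H}}{\mathbf{w}}|^2 = O(N_r N_t)$, whereas Cauchy--Schwarz gives $|2c\,\Re\{\cdot\}| \le 2c\,\|{\bm{\alpha}}_{t,1}\|_2 \, \|{\mathbf{v}}_{t,1}\|_2 \, \|{\mathbf{w}}\|_2^2 = O(\sqrt{N_r}\,N_t)$; hence the cross term is smaller than the dominant term by a factor $O(1/\sqrt{N_r})$ and vanishes in the large-array regime. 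Equivalently, because $c$ is large the dominant eigenvector of the exact (rank-two) matrix is aligned with ${\bm{\alpha}}_{t,1}$, which coincides with the direction of ${\mathbf{h}}_s$, so the approximation is tightest precisely for the beamformer ${\mathbf{w}}\propto{\mathbf{h}}_s$ that the subsequent Rayleigh-quotient design selects. I expect this scaling/alignment justification, rather than the algebra of completing the square, to be the real content of the lemma.
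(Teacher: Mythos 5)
Your proposal is correct and follows essentially the same route as the paper: both arguments rest on the closed form \eqref{AA_phi_final} together with the near-orthogonality $\bm{\alpha}_{t,1}^{\rm{H}}(\bm{\alpha}_{t,1}\odot{\mathbf{v}}_{t,1})=\left\|{\mathbf{v}}_{t,1}\right\|_{1}\approx 0$. The difference is one of rigor at the single nontrivial step. The paper's proof is two lines long: it states the orthogonality relation and then writes ${\mathbf{h}}_{s}{\mathbf{h}}_{s}^{\rm{H}}=\left\|{\mathbf{v}}_{r,1}\right\|_{2}^{2}\bar{{\mathbf{A}}}_{t}+{\rm{diag}}({\mathbf{v}}_{t,1})\bar{{\mathbf{A}}}_{t}{\rm{diag}}({\mathbf{v}}_{t,1})=\frac{\lambda_{c}^{2}}{4\pi^{2}N_{r}}\dot{{\mathbf{A}}}_{1,\phi_{s}}^{\rm{H}}\dot{{\mathbf{A}}}_{1,\phi_{s}}$ as if it were an identity, i.e., it silently discards exactly the cross outer-products $c\,[\bm{\alpha}_{t,1}(\bm{\alpha}_{t,1}\odot{\mathbf{v}}_{t,1})^{\rm{H}}+(\bm{\alpha}_{t,1}\odot{\mathbf{v}}_{t,1})\bm{\alpha}_{t,1}^{\rm{H}}]$ that you isolate; vector orthogonality does not annihilate these as matrices, only their action along $\bm{\alpha}_{t,1}$. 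Your explicit extraction of the bilinear cross term $2c\,\Re\{(\bm{\alpha}_{t,1}^{\rm{H}}{\mathbf{w}})^{*}({\mathbf{v}}_{t,1}\odot\bm{\alpha}_{t,1})^{\rm{H}}{\mathbf{w}}\}$ and the $O(1/\sqrt{N_{r}})$ relative-scaling bound supply precisely the justification the paper omits, so your proof is, if anything, more complete. The one caveat worth keeping in mind (which the paper inherits as well) is that the Cauchy--Schwarz comparison gives a vanishing \emph{relative} error only when one of the two retained terms dominates; for a generic ${\mathbf{w}}$ with $c\,|\bm{\alpha}_{t,1}^{\rm{H}}{\mathbf{w}}|$ comparable to $|({\mathbf{v}}_{t,1}\odot\bm{\alpha}_{t,1})^{\rm{H}}{\mathbf{w}}|$ the approximation is only order-of-magnitude accurate, which is consistent with your observation that the bound is tight for the Rayleigh-quotient-optimal beamformer the lemma is ultimately used for.
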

\begin{proof}
    With the definition of ${\mathbf{v}}_{t,1}$, we have ${ [{{\bm{\alpha}}_{t,1}} \odot {{\mathbf{v}}_{t,1}} ]^{\rm{H}} {{\bm{\alpha}}_{t,1}}= \left\|{ {{\mathbf{v}}_{t,1}} }\right\|_{1}  } \approx 0$. Therefore, ${\mathbf{h}}_{s} {\mathbf{h}}_{s}^{{\rm{H}}}=\left\| {{{\mathbf{v}}_{r,1}}} \right\|_2^2 {{\bar{{\mathbf{A}}}}_t}   + {\rm{diag}}({{\mathbf{v}}_{t,1}}){{\bar{{\mathbf{A}}}}_t} {\rm{diag}}({{\mathbf{v}}_{t,1}})=\frac{ {\lambda}_{c}^{2} }{4{\pi}^{2}{N}_{r} } {\dot{{\mathbf{A}}}}_{1,{\phi_s}}^{\rm{H}} {\dot{{\mathbf{A}}}}_{1,{\phi_s}}$. The proof is complete.
\end{proof}

\begin{remark}\label{Lemma2_R1}
    Based on \textbf{Lemma~\ref{Eq_Norm}}, it's clear that $ {\mathop{\min }} \;{\rm{CRB}}_{{\phi
    }}$ is equivalent to ${\mathop{\max }} \; {  \left| { {{\mathbf{h}}_{s}^{\rm{H}}}{{\mathbf{w}}} } \right|^2} $. 
    Similarly, there is a vector ${\mathbf{h}}_{s}^{'} \in {\rm{span}} \left\{ { {{\bm{\alpha}}_t}, {{\bm{\alpha}}_t} \odot {{\mathbf{v}}_{t,2}} } \right\}$ which satisfies ${\rm{Tr}}( {{\mathbf{R}}_{x}}{\dot{{\mathbf{A}}}}_{{\rho_s}}^{\rm{H}} {\dot{{\mathbf{A}}}}_{{\rho_s}} ) \approx \frac{{4{\pi^2}{N_r}}}{{\lambda _c^2}} {  \left| { {{\mathbf{w}}^{\rm{H}}} {{\mathbf{h}}_{s}^{'}} } \right|^2}  $.
    Meanwhile, $[ {{\bm{\alpha}}_{t,1}} \odot {{\mathbf{v}}_{t,2}} ]^{\rm{H}} {{\bm{\alpha}}_{t,1}} \in {\mathbb{R}}$, which indicates that $ {{\bm{\alpha}}_{t,1}} \odot {{\mathbf{v}}_{t,2}} $ is parallel to ${{\bm{\alpha}}_{t,1}}$. In other words, there is an projected vector ${\tilde{{\bm{\alpha}}}}_{{\rho}_{s}} \in {\mathbb{C}}^{{N}_{t} \times 1}$ of ${\dot{{\mathbf{A}}}}_{{\rho_s}}^{\rm{H}} {\dot{{\mathbf{A}}}}_{{\rho_s}} $ which satisfied ${\tilde{{\bm{\alpha}}}}_{{\rho}_{s}}^{\rm{H}} {\tilde{{\bm{\alpha}}}}_{{\rho}_{s}} \approx {\dot{{\mathbf{A}}}}_{1,{\rho_s}}^{\rm{H}} {\dot{{\mathbf{A}}}}_{1,{\rho_s}}$, where ${\tilde{{\bm{\alpha}}}}_{{\rho}_{s}} $ can be calculated as
    \begin{equation}\label{Eq_alpha:rho_s}
       {\tilde{{\bm{\alpha}}}}_{{\rho}_{s}}= \frac{{ {\bm{\alpha}}_{t,1}^{{\rm{H}}} {\dot{{\mathbf{A}}}}_{1,{\rho_s}}^{\rm{H}} {\dot{{\mathbf{A}}}}_{1,{\rho_s}} {\bm{\alpha}}_{t,1} } } {{ {\bm{\alpha}}_{t,1}^{\rm{H}} {\bm{\alpha}}_{t,1} }} \odot {\bm{\alpha}}_{t,1}
    \end{equation}
    Thus, ${\mathop{\min }} \;{\rm{CRB}}_{{\rho
    }_{s}} $ can be transformed into ${\mathop{\max }} \; {  \left| { { {\tilde{{\bm{\alpha}}}}_{\rho}^{\rm{H}}}{{\mathbf{w}}} } \right|^2} $. Furthermore, the ${\tilde{\bm{\alpha}}}_{\rho} $ is not the affine combination of ${\mathbf{h}}_{s} $, which indicates there is a tradeoff between $\min {\rm{CRB}}_{{\phi}_{s}} $ and $\min {\rm{CRB}}_{{\rho}_{s}}$.     
\end{remark}

Problem~\eqref{Eq:PF_ori} under the coplanar case can be transformed into:
    \begin{subequations}\label{Eq:PF_Case1_1_sim} 
        \begin{align}
            &\label{max norm} \underset{ {{\mathbf{w}}} } {\mathop{\max }}  \quad {\left\| {{\mathbf{h}}_{s}^{\rm{H}}} {{\mathbf{w}}} \right\|^2}   \\    
  		& {\rm{s.t.}}\; \; \eqref{Communication QoS_NOMA_CU}, \eqref{Power max},
        \end{align}
    \end{subequations}  
where the problem~\eqref{Eq:PF_Case1_1_sim} reaches the same optimal solution as the problem~\eqref{Eq:PF_ori_Case1_1}.

\begin{proposition}\label{Pro:Opt Solution}
    The optimal solution ${\mathbf{w}}^{*}$ of \eqref{Eq:PF_Case1_1_sim} can be expressed as
    \begin{equation}
    {{\mathbf{w}}^*} = \left\{ \begin{array}{l}
\frac{{\sqrt {{P_{\max }}} {{\mathbf{h}}_s}}}{{\left\| {{{\mathbf{h}}_s}} \right\|}}, \; {\rm{if}} \; {\left| {{\mathbf{h}}_c^{\rm{H}} {{\mathbf{h}}_s}} \right|^2} > \frac{{{{\left\| {{{\mathbf{h}}_s}} \right\|}^2}{\Gamma_c}{\sigma ^2}}}{{{P_{\max}}}} \\
{x_1}{{\mathbf{u}}_s} + {x_2}{{\mathbf{a}}_s}, {\rm{others}}
\end{array} \right.
    \end{equation}where 
    \begin{subequations}\label{Opt_expression_0}
        \begin{equation}\label{Opt_expression_1}
            {{\mathbf{u}}_s} = \frac{{{{\mathbf{h}}_s}}}{{\left\| {{{\mathbf{h}}_s}} \right\|}}, \quad {{\mathbf{a}}_s} = \frac{{{{\mathbf{h}}_c} - ({\mathbf{u}}_{s}^{\rm{H}}{{\mathbf{h}}_s}){{\mathbf{u}}_s}}} {{\left\| {{\mathbf{h}}_c} - ( {\mathbf{u}}_{s}^{\rm{H}}{\mathbf{h}}_s){{\mathbf{u}}_s} \right\|}}, 
        \end{equation}
        \begin{equation}\label{Opt_expression_2}
            {{x_1}} ={\sqrt{{P_{\max }} - \frac{{{\Gamma _c}{\sigma ^2}}}{{{{\left\| {{{\mathbf{h}}_c}} \right\|}^2}}}}} \frac{ { {\mathbf{a}}_{s}^{\rm{H}} {\mathbf{h}}_{s} } } {{\left| {\mathbf{a}}_{s}^{\rm{H}} {\mathbf{h}}_{s} \right|}} , \quad {{x_2}} =  {\sqrt{ \frac{{{\Gamma _c}{\sigma ^2}}}{{{{\left\| {{{\mathbf{h}}_c}} \right\|}^2}}} }} \frac{ {\mathbf{u}}_{s}^{\rm{H}} {\mathbf{h}}_{s} } {{\left| {\mathbf{u}}_{s}^{\rm{H}} {\mathbf{h}}_{s} \right|}} .  
        \end{equation}
    \end{subequations}
\end{proposition}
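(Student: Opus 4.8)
\emph{The plan is to} exploit the fact that the objective and both constraints of \eqref{Eq:PF_Case1_1_sim} depend on $\mathbf{w}$ only through its inner products with $\mathbf{h}_s$ and $\mathbf{h}_c$, so the problem collapses onto a two-dimensional subspace. First I would rewrite the rate requirement in its equivalent signal-to-noise form $|\mathbf{h}_c^{\rm{H}}\mathbf{w}|^2 \ge \Gamma_c\sigma^2$ with the threshold $\Gamma_c = 2^{\bar{R}_{\min}}-1$. Then, splitting $\mathbf{w} = \mathbf{w}_{\parallel} + \mathbf{w}_{\perp}$ with $\mathbf{w}_{\parallel}\in\mathrm{span}\{\mathbf{h}_s,\mathbf{h}_c\}$ and $\mathbf{w}_{\perp}$ orthogonal to that span, I would note that $\mathbf{h}_s^{\rm{H}}\mathbf{w}_{\perp}=\mathbf{h}_c^{\rm{H}}\mathbf{w}_{\perp}=0$, so $\mathbf{w}_{\perp}$ changes neither the objective nor the rate constraint while strictly inflating $\|\mathbf{w}\|^2$; hence any optimizer has $\mathbf{w}_{\perp}=0$, which justifies the ansatz $\mathbf{w}=x_1\mathbf{u}_s+x_2\mathbf{a}_s$ with the orthonormal Gram--Schmidt basis $\{\mathbf{u}_s,\mathbf{a}_s\}$ of \eqref{Opt_expression_1}. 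The same reduction also drops out of the KKT stationarity condition, which forces $\mathbf{w}$ to be an eigenvector of the rank-two matrix $\mathbf{h}_s\mathbf{h}_s^{\rm{H}}+\mu\,\mathbf{h}_c\mathbf{h}_c^{\rm{H}}$, whose range is exactly that span.

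In the $(x_1,x_2)$ coordinates the objective becomes $\|\mathbf{h}_s\|^2|x_1|^2$ (since $\mathbf{a}_s\perp\mathbf{h}_s$), the power budget becomes $|x_1|^2+|x_2|^2\le P_{\max}$, and the rate constraint becomes $|x_1 a + x_2 b|^2\ge \Gamma_c\sigma^2$, where $a=\mathbf{h}_c^{\rm{H}}\mathbf{u}_s=\mathbf{h}_c^{\rm{H}}\mathbf{h}_s/\|\mathbf{h}_s\|$ and $b=\mathbf{h}_c^{\rm{H}}\mathbf{a}_s\ge 0$. Maximizing the objective is therefore equivalent to maximizing $|x_1|^2$, and I would run a two-branch case analysis. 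For the first branch I would test the matched-filter point $\mathbf{w}=\sqrt{P_{\max}}\,\mathbf{u}_s$, which by Cauchy--Schwarz is the unconstrained maximizer of $|\mathbf{h}_s^{\rm{H}}\mathbf{w}|^2$ under the power budget alone; it is feasible for the rate constraint precisely when $P_{\max}|a|^2\ge\Gamma_c\sigma^2$, i.e.\ $|\mathbf{h}_c^{\rm{H}}\mathbf{h}_s|^2>\|\mathbf{h}_s\|^2\Gamma_c\sigma^2/P_{\max}$, and in that regime it is globally optimal, recovering the first line of the claim.

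For the second branch, when the matched-filter point violates the rate constraint, I would show that at the optimum both the power and the rate constraints are active: if the power were slack one could scale up $|x_1|$, and if the rate were slack one could rotate power from $x_2$ into $x_1$ without losing feasibility. Because the objective is blind to the phase of $x_1$, the phases are free and should be chosen to combine $x_1 a$ and $x_2 b$ coherently, collapsing the rate constraint to the real equation $|x_1|\,|a|+|x_2|\,b=\sqrt{\Gamma_c}\,\sigma$. Together with $|x_1|^2+|x_2|^2=P_{\max}$ this is a scalar system that pins down the magnitudes $|x_1|,|x_2|$, while the unit-modulus factors in \eqref{Opt_expression_2} implement the coherent alignment, producing the second line $x_1\mathbf{u}_s+x_2\mathbf{a}_s$.

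\emph{The hard part} is exactly this second branch: proving that the two constraints are simultaneously active and then disentangling the complex phases so as to reduce the two-dimensional complex program to a single real scalar equation before solving it in closed form, all while checking that as $P_{\max}|a|^2\to\Gamma_c\sigma^2$ the second branch merges continuously with the matched-filter solution of the first. A secondary care point is confirming the nonnegativity and well-definedness of $b=\|\mathbf{h}_c-(\mathbf{u}_s^{\rm{H}}\mathbf{h}_c)\mathbf{u}_s\|$ so that the Gram--Schmidt basis in \eqref{Opt_expression_1} is valid whenever $\mathbf{h}_c$ and $\mathbf{h}_s$ are not collinear.
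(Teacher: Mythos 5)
The paper never actually proves this proposition: its ``proof'' is a one-line pointer to \cite[Theorem 1]{FanLiu_22TSP}. Your self-contained argument is therefore already a different route, and its skeleton is the right one --- restriction of $\mathbf{w}$ to ${\rm{span}}\{\mathbf{h}_s,\mathbf{h}_c\}$ (any orthogonal component wastes power without affecting objective or rate), the matched-filter branch with exactly the stated feasibility threshold ${\left| {\mathbf{h}}_c^{\rm{H}}{\mathbf{h}}_s \right|^2}>\|\mathbf{h}_s\|^2\Gamma_c\sigma^2/P_{\max}$, and the observation that in the remaining regime both constraints are active with phases chosen for coherent combining. The first branch is complete and correct.

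The gap sits precisely where you flagged ``the hard part,'' and it is not just bookkeeping. In your basis $\{\mathbf{u}_s,\mathbf{a}_s\}$ with $\mathbf{a}_s\perp\mathbf{u}_s$, the active-constraint system is $|x_1|\,|a|+|x_2|\,b=\sqrt{\Gamma_c}\,\sigma$ and $|x_1|^2+|x_2|^2=P_{\max}$ with $a=\mathbf{h}_c^{\rm{H}}\mathbf{u}_s$ and $b=\|\mathbf{h}_c-(\mathbf{u}_s^{\rm{H}}\mathbf{h}_c)\mathbf{u}_s\|$; solving it (e.g.\ via $|x_1|=\sqrt{P_{\max}}\cos\theta$, $|x_2|=\sqrt{P_{\max}}\sin\theta$) yields $|x_1|=\frac{|a|\sqrt{\Gamma_c}\,\sigma}{\|\mathbf{h}_c\|^2}+\frac{b}{\|\mathbf{h}_c\|}\sqrt{P_{\max}-\frac{\Gamma_c\sigma^2}{\|\mathbf{h}_c\|^2}}$, a two-term expression that collapses to the single square root of \eqref{Opt_expression_2} only when $\mathbf{h}_c\perp\mathbf{h}_s$. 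The clean coefficients $\sqrt{P_{\max}-\Gamma_c\sigma^2/\|\mathbf{h}_c\|^2}$ and $\sqrt{\Gamma_c\sigma^2/\|\mathbf{h}_c\|^2}$ emerge from the \emph{other} Gram--Schmidt order: anchor the basis at $\mathbf{h}_c/\|\mathbf{h}_c\|$ and the normalized component of $\mathbf{h}_s$ orthogonal to $\mathbf{h}_c$; then the rate constraint involves only the first coefficient (pinned to $\sqrt{\Gamma_c\sigma^2/\|\mathbf{h}_c\|^2}$) and all residual power goes to the second direction, with the unit-modulus factors aligning both contributions to $\mathbf{h}_s$. This is also the only reading under which the phase $\mathbf{a}_s^{\rm{H}}\mathbf{h}_s/|\mathbf{a}_s^{\rm{H}}\mathbf{h}_s|$ in \eqref{Opt_expression_2} is well defined --- with your $\mathbf{a}_s$ orthogonalized inside ${\rm{span}}\{\mathbf{h}_s,\mathbf{h}_c\}$ against $\mathbf{h}_s$ that quotient is $0/0$. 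So your plan, executed in the coordinates you chose, produces the correct optimizer but not the stated formulas; you need either to switch to the $\mathbf{h}_c$-anchored decomposition or to append a change-of-basis computation showing the two expressions coincide (they do: both give objective value $\bigl(|\mathbf{h}_s^{\rm{H}}\mathbf{h}_c|\sqrt{\Gamma_c}\,\sigma/\|\mathbf{h}_c\|^2+\|\mathbf{h}_s-(\mathbf{u}_c^{\rm{H}}\mathbf{h}_s)\mathbf{u}_c\|\sqrt{P_{\max}-\Gamma_c\sigma^2/\|\mathbf{h}_c\|^2}\bigr)^2$ with $\mathbf{u}_c=\mathbf{h}_c/\|\mathbf{h}_c\|$).
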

\begin{proof}
    Please see the details in~\cite[Theorem 1]{FanLiu_22TSP}. 
\end{proof}

\begin{remark}
    From \textbf{Proposition~\ref{Pro:Opt Solution}}, it can be seen that when ${\left| {{\mathbf{h}}_c^{\rm{H}} {{\mathbf{h}}_s}} \right|^2} > \frac{{{{\left\| {{{\mathbf{h}}_s}} \right\|}^2}{\Gamma_c}{\sigma ^2}}}{{{P_{\max}}}} $, the radio sensing functionality plays a dominant role where the radio performance is close to the sensing-only scheme. When ${\left| {{\mathbf{h}}_c^{\rm{H}} {{\mathbf{h}}_s}} \right|^2} \leq \frac{{{{\left\| {{{\mathbf{h}}_s}} \right\|}^2}{\Gamma_c}{\sigma ^2}}}{{{P_{\max}}}}$, the optimal solution is composed by ${\mathbf{u}}_{s} $ and ${\mathbf{a}}_{s}$. Meanwhile, the projected vector ${\mathbf{h}}_{s}$ substituted the previous ${\bm{\alpha}}_{s} $ in the FF propagation, which indicates that the radio sensing dominant region is extended in NF propagation. This is because the distance domain resolution is introduced under NF propagation, which further increases the spatial DoF.
\end{remark}

\subsubsection{BF Design to minimize the SPEB} 
Inspired by the optimal solution of \eqref{Eq:PF_Case1_1_sim}, the denominator term of the ratio in the objective function can be converted into the 2-norm of an inner product between ${\mathbf{w}}$ and the projected vector ${\tilde{{\bm{\alpha}}}}_{i}, i \in \left\{ {\rho}_{s}, {\phi}_{s},{y}_{s} \right\}$. Besides, it is noted that ${\rm{CRB}}_{{\rho}_{s}}$ can be expressed as the inner product between the BF vector ${\mathbf{w}}$ and the project vector ${\tilde{{\bm{\alpha}}}}_{{\rho}_{s}}$, the objective function in \eqref{Eq:PF_ori_Case1} can be derived as
\begin{equation}\label{2FP}
    {\rm{OF}} ( {\mathbf{w}} ) = \frac{ {{ { [ {\tilde{{\mathbf{T}}}} ]}_{(1,1)} }} }{{{{\left| {{\mathbf{h}}_s^{{\rm{H}}} {\mathbf{w}} } \right|}^2}}} + \frac{ {{ { [ {\tilde{{\mathbf{T}}}} ]}_{(2,2)} }} }{{{{\left| {\tilde{{{\bm{\alpha}}}}}_{{\rho}_{s}}^{{\rm{H}}} {\mathbf{w}} \right|}^2}}}.
\end{equation}

\noindent As ${\mathbf{h}}_{s}$ is not the linear combination of ${\tilde{\bm{\alpha}}}_{{\rho}_{s}}$, \eqref{2FP} has non-compatible fractional terms, where the problem \ref{Eq:PF_ori_Case1} turns into multi-ratio fractional programming (FP). The conventional Dinkelbach methods cannot guarantee convergence among the consecutive iterations as the optimal solution of each ratio is not equal. Meanwhile, the conventional SDR approach introduces the auxiliary matrix variables ${\mathbf{W}}={\mathbf{w}}{\mathbf{w}}^{{\rm{H}}} $ to deal with the quadratic form, which increased the computation complexity. Motivated by these points, we employ the quadratic transformation~\cite{QuadraticTransformation} on \eqref{2FP} in the following. 

\begin{proposition}\label{Pro:QF_equivalent}
  The sum-of-ratios minimization problem \eqref{Eq:PF_ori_Case1} is equivalent to 
    \begin{subequations}\label{Eq:PF_C1_QF}
        \begin{align}
            & \label{max norm} \underset{ {\mathbf{w}}, {\tilde{{\mathbf{y}}}} } {\min } \;  {\tilde{{\rm{OF}}}} ({\mathbf{w}}, {\tilde{{\mathbf{y}}}}) \\    
  		& \label{Eq:PF_C1_QF:SINR} {\rm{s.t.}}\; \; {\mathbf{h}}_{c}^{{\rm{H}}} {\mathbf{w}} \ge \sqrt {{\sigma ^2}{{\bar \gamma }_{\min }}} ,  \\
            & \; \; \quad \eqref{Power max},
        \end{align}
    \end{subequations}where ${\tilde{{\rm{OF}}}} ({\mathbf{w}}, {\tilde{{\mathbf{y}}}}) =\frac{1}{{{{\left[ {2\tilde y_1 \sqrt {{{\left|{ {\mathbf{h}}_{s}^{{\rm{H}}} {\mathbf{w}} } \right|}^2}}  - {\tilde{y}}_{1}^{2} { [ {\tilde{{\mathbf{T}}}} ]}_{(1,1)}  } \right]}_ + }}} + \frac{1}{{{{\left[ {2\tilde y_2 \sqrt {{{\left| { {\tilde{{\bm{\alpha}}}}_{{\rho}_{s}}^{{\rm{H}}} {\mathbf{w}} } \right|}^2}}  - {\tilde{y}}_2^{2} { [ {\tilde{{\mathbf{T}}}} ]}_{(2,2)} } \right]}_ + }}}$, and ${\tilde{{\mathbf{y}}}}=\left[ { {\tilde{y}}_{1}, {\tilde{y}}_{2} } \right] $ denotes auxiliary variable. 
    \end{proposition}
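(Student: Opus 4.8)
The plan is to establish the equivalence claimed in \textbf{Proposition~\ref{Pro:QF_equivalent}} in two independent pieces: (i) the term-by-term reformulation of the objective \eqref{2FP} by the quadratic transform, and (ii) the recasting of the rate constraint as the real linear inner-product constraint \eqref{Eq:PF_C1_QF:SINR}, with the power budget left untouched. For the objective, I abbreviate the two denominators as $g_1(\mathbf{w}):=|\mathbf{h}_s^{\rm H}\mathbf{w}|^2$ and $g_2(\mathbf{w}):=|\tilde{\bm{\alpha}}_{\rho_s}^{\rm H}\mathbf{w}|^2$ (with $\tilde{\bm{\alpha}}_{\rho_s}$ from \eqref{Eq_alpha:rho_s}) and the numerators as $c_i:=[\tilde{\mathbf{T}}]_{(i,i)}>0$, so that $\mathrm{OF}(\mathbf{w})=\sum_{i}c_i/g_i(\mathbf{w})$.

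First I would treat each ratio separately. For fixed $\mathbf{w}$, the scalar map $q_i(\tilde y_i)=2\tilde y_i\sqrt{g_i(\mathbf{w})}-\tilde y_i^{2}c_i$ is a strictly concave quadratic in $\tilde y_i$ because $c_i>0$; it is maximized at $\tilde y_i^{\star}=\sqrt{g_i(\mathbf{w})}/c_i$ with maximal value $q_i(\tilde y_i^{\star})=g_i(\mathbf{w})/c_i>0$. Hence the positive-part operator is inactive at the optimizer, and
\[
\min_{\tilde y_i\ge 0}\frac{1}{[\,2\tilde y_i\sqrt{g_i(\mathbf{w})}-\tilde y_i^{2}c_i\,]_+}=\frac{c_i}{g_i(\mathbf{w})},
\]
which is exactly the $i$-th ratio of $\mathrm{OF}(\mathbf{w})$. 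Since $\tilde{\mathrm{OF}}(\mathbf{w},\tilde{\mathbf{y}})$ is \emph{separable} — its first summand depends only on $\tilde y_1$ and its second only on $\tilde y_2$ — the inner minimization over $\tilde{\mathbf{y}}=[\tilde y_1,\tilde y_2]$ decouples and yields $\min_{\tilde{\mathbf{y}}}\tilde{\mathrm{OF}}(\mathbf{w},\tilde{\mathbf{y}})=\mathrm{OF}(\mathbf{w})$ for every feasible $\mathbf{w}$. Consequently the jointly optimized problem \eqref{Eq:PF_C1_QF} and the original problem \eqref{Eq:PF_ori_Case1} share the same optimal value, and any optimal $\mathbf{w}^{\star}$ of one is optimal for the other once paired with $\tilde y_i^{\star}=\sqrt{g_i(\mathbf{w}^{\star})}/c_i$.

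Next I would align the feasible sets. The rate requirement $R_c\ge\bar R_{\min}$ is monotone in the received SNR, so it is equivalent to $|\mathbf{h}_c^{\rm H}\mathbf{w}|^2\ge\sigma^2\bar\gamma_{\min}$ with $\bar\gamma_{\min}=2^{\bar R_{\min}}-1$. The key observation is that both $\tilde{\mathrm{OF}}$ and the power budget depend on $\mathbf{w}$ only through the moduli $|\mathbf{h}_s^{\rm H}\mathbf{w}|$, $|\tilde{\bm{\alpha}}_{\rho_s}^{\rm H}\mathbf{w}|$ and $\|\mathbf{w}\|$, all invariant under a global phase rotation $\mathbf{w}\mapsto e^{j\theta}\mathbf{w}$. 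Thus, given any feasible $\mathbf{w}$, I may rotate its phase so that $\mathbf{h}_c^{\rm H}\mathbf{w}$ is real and nonnegative without altering the objective or the consumed power; the modulus constraint then reads $\mathbf{h}_c^{\rm H}\mathbf{w}\ge\sqrt{\sigma^2\bar\gamma_{\min}}$, which is precisely \eqref{Eq:PF_C1_QF:SINR}. Combining this with the objective step completes the equivalence.

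The hard part will be the two consistency checks underpinning the argument: (a) confirming that the inner maximization of $q_i$ lands strictly inside the region where $[\cdot]_+$ is inactive — this holds because $q_i(\tilde y_i^{\star})=g_i(\mathbf{w})/c_i>0$ whenever $g_i(\mathbf{w})\neq 0$, i.e. at any $\mathbf{w}$ rendering $\mathrm{OF}$ finite, so no feasible point is lost to the positive-part clipping; and (b) applying the phase-rotation bijection uniformly so that feasible points of \eqref{Eq:PF_ori_Case1} and \eqref{Eq:PF_C1_QF} correspond while preserving objective values. Once these are in place, the separability of $\tilde{\mathrm{OF}}$ makes the equivalence immediate.
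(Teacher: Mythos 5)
Your proof is correct and follows essentially the same route as the paper's: for fixed $\mathbf{w}$, each concave quadratic $2\tilde y_i\sqrt{g_i(\mathbf{w})}-\tilde y_i^{2}c_i$ is maximized in closed form at $\tilde y_i^{*}=\sqrt{g_i(\mathbf{w})}/c_i$, and substituting back recovers each ratio $c_i/g_i(\mathbf{w})$, so the two problems attain the same optimal value. You additionally make explicit the phase-rotation argument that converts the rate constraint into the real linear form \eqref{Eq:PF_C1_QF:SINR} and the check that the positive-part clipping is inactive at the optimum, both of which the paper leaves implicit.
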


\begin{proof}
    When ${\mathbf{w}}$ is fixed, the optimal ${\tilde{y}}_{n}$ is determined via ${\tilde{y}}_{1}^{*}=\frac{{\sqrt {{{\left| {{\mathbf{h}}_{s}^{{\rm{H}}} {\mathbf{w}} } \right|}^2}} }} { { [ {\tilde{{\mathbf{T}}}} ]}_{(1,1)} }, {\tilde{y}}_{2}^{*}=\frac{{\sqrt {{{ \left| { {\tilde{{\bm{\alpha}}}}_{{\rho}_{s}}^{{\rm{H}}} {\mathbf{w}} } \right|}^2}} }} { { [ {\tilde{{\mathbf{T}}}} ]}_{(2,2)} } $. By substituting the above equation into the objective function in \eqref{Eq:PF_C1_QF}, the same optimal value is reached between the problem \eqref{Eq:PF_ori_Case1} and \eqref{Eq:PF_C1_QF}.
\end{proof}
With the fixed ${\tilde{{\mathbf{y}}}}$ in each iteration, the problem \eqref{Eq:PF_ori_Case1} can be transformed into 
\begin{subequations}\label{Eq:PF_C1_QF_1}
        \begin{align}
            & \label{max norm} \underset{ {\mathbf{w}} } {\min } \; {\tilde{{\rm{OF}}}} ({\mathbf{w}}, {\tilde{{\mathbf{y}}}})  \\    
  		& {\rm{s.t.}}\; \; \eqref{Power max}, \eqref{Eq:PF_C1_QF:SINR}, 
        \end{align}
\end{subequations}which is a convex problem and the optimal solution ${\mathbf{w}}^{*}_{i}$ can be obtained via CVX~\cite{cvx}. Once the optimal solution within the iteration is derived, the auxiliary variable ${\tilde{{\mathbf{y}}}} $ is updated via 
\begin{equation}\label{y_Case1_Update}
    {\tilde{y}}_{1}^{(i+1)}= \frac{{\sqrt {{{\left| {{\mathbf{h}}_{s}^{{\rm{H}}} {\mathbf{w}}^{*}_{i} } \right|}^2}} }} { { [ {\tilde{{\mathbf{T}}}} ]}_{(1,1)} },  {\tilde{y}}_{2}^{(i+1)}= \frac{{\sqrt {{{\left| { {\tilde{{\bm{\alpha}}}}_{{\rho}_{s}}^{{\rm{H}}} {\mathbf{w}}^{*}_{i} } \right|}^2}} }} { { [ {\tilde{{\mathbf{T}}}} ]}_{(2,2)} }. 
\end{equation}The details are summarised in {\textbf{Algorithm~\ref{Alg_VQF_Case1}}}.

\subsubsection{Convergence and Complexity Analysis} 

The convergence of \textbf{Algorithm~\ref{Alg_VQF_Case1}} is summarized in the following. With the fixed auxiliary variable ${\tilde{\mathbf{y}}}^{(i)} $ in each iteration, the optimal solution of \eqref{Eq:PF_C1_QF} can be reached. Among the consecutive iterations, the objective function is non-increasing. Thus, {\textbf{Algorithm~\ref{Alg_VQF_Case1}}} converges to a stable suboptimal solution. For the complexity analysis, the convex problem of \eqref{Eq:PF_C1_QF} can be solved via the inter-point methods with the complexity of ${\mathcal{O}} (4{N}_{t}^{2}) $~\cite{InteriorPoint_Complexity_0}. By denoting the iteration number as $I_{1}$, the complexity of {\textbf{Algorithm~\ref{Alg_VQF_Case1}}} can be expressed as ${\mathcal{O}} { (4{I}_{1}{N}_{t}^{2})}$. 

 \begin{algorithm}[tp]
      \caption{Proposed vector-based quadratic transformation (VQF) algorithm for \eqref{Eq:PF_ori_Case1}}
    	\label{Alg_VQF_Case1}
    		\begin{algorithmic}[1]\label{alg_1} 
    			\STATE Initialize feasible points $ {\mathbf{{w}}}^{0}= {{\mathbf{w}^{*}_{\phi }}}$ and ${\tilde{{\bm{y}}}}_{s}^{0}$ is initialized by \eqref{y_Case1_Update} with ${\mathbf{{w}}}^{0}$. 		
    			\REPEAT 
    			\STATE {Update $ {{\mathbf{w}}^{(i)}}$ by solving \eqref{Eq:PF_C1_QF}.} 
    			\STATE  Update ${\tilde{{\mathbf{y}}}}^{i+1}$ with $ {{\mathbf{w}}^{(i)}}$ by \eqref{y_Case1_Update}.  
    			\UNTIL {The relevant variation of objective value for \eqref{Eq:PF_C1_QF} among consecutive iteration is below a predefined threshold ${\epsilon}_{con} $}.
    			\STATE \textbf{Output:$ {{\mathbf{w}}^{*}},{\tilde{{\mathbf{y}}}}$}. 
    		\end{algorithmic}
 \end{algorithm}

\section{Radio Sensing Performance Analysis and BF Design for Non-coplanar Case}

\subsection{Radio Sensing Performance Analysis with Non-coplanar Case}
\subsubsection{CRB Expression with USW Channel Model} 
By denoting ${\left[ {{{\mathbf{v}}_{l,2,1}}} \right]_m} = [ \frac{{{\rho _s}}} {{\sqrt {\rho _s^2 + y_s^2} }} - \frac{{{\rho _s} - {R_{{\rm{CA}}}\cos ({\phi _s} - \phi _s^{(m)})}}}{{\sqrt {{{(\rho _{{\rm{CA}}}^{(m)})}^2} + y_s^2} }} ]$, ${\left[ {{{\mathbf{v}}_{l,2,2}}} \right]_m} = \frac{{\sin ({\phi _s} - \phi _s^{(m)})}}{{\sqrt {{{(\rho _{{\rm{CA}}}^{(m)})}^2} + y_s^2} }}$, and ${\left[ {{{\mathbf{v}}_{l,2,3}}} \right]_m} = \frac{1}{{\sqrt {\rho _{\rm{s}}^2 + y_s^2} }} - \frac{1}{{\sqrt {{{(\rho _{{\rm{CA}}}^{(m)})}^2} + y_s^2} }}$, the corresponding derivatives ${\dot {\bm{\alpha}}_{{l,2,i}}}, l \in \left\{r,t\right\}, i=1,2,3 $ of $\left\{{\rho}_{s}, {\phi}_{s},{y}_{s} \right\}$ are given by
    \begin{subequations}\label{Case2_deriva_0}
    	\begin{equation}
    		{\dot{{\bm{\alpha}}}}_{l,2,1}= \frac{{\partial {\bm{\alpha}}_{l,2}}} {{\partial {\rho_s}}} = \frac{{j2\pi }}{{{\lambda _c}}}{{\mathbf{v}}_{l,2,1}} \odot {{\bm{\alpha}}_{l,2}}, 
    	\end{equation}
        \begin{equation}
        	{\dot{{\bm{\alpha}}}}_{l,2,2}= \frac{{\partial {{\bm{\alpha}}}_{l,2}}}{{\partial {\phi_s}}} = \frac{{ - j2\pi {R_{{\rm{CA}}} {\rho _s}}}}{{{\lambda _c}}}{ {\mathbf{v}}_{l,2,2}} \odot { {\bm{\alpha}}_{l,2}}, 
        \end{equation}
        \begin{equation}
        	{\dot{{\bm{\alpha}}}} _{l,2,3} = \frac{{\partial {\bm{\alpha}}_{l,2}}} {{\partial {y_s}}} = \frac{{j2\pi {y_s}}}{{{\lambda _c}}} {\mathbf{v}}_{l,2,3} \odot {\bm{\alpha}}_{l,2},
        \end{equation}
    \end{subequations}

    \begin{lemma}\label{Lemma_Case2_norm}
       Due to the periodical property of the function in the auxiliary vectors $\left\{ {\mathbf{v}}_{l,2,1}, {\mathbf{v}}_{l,2,2}, {\mathbf{v}}_{l,2,3}\right\}$, the norm properties of the auxiliary vectors are given by \eqref{Auxiliary_Periodical}, where ${\gamma}_{1}=2{R}_{{\rm{CA}}} {\rho}_{s} $, ${\gamma}_{2}={y}_{s}^{2}+{\rho}_{s}^{2}+{R}_{\rm{CA}}^{2} $, ${\mathbf{K}} \left( \alpha \right)=\int_0^{\pi /2} {\frac{1}{{\sqrt {1 - {{\alpha}^2}{{\sin }^2}\theta } }}} {\kern 1pt} d\theta $ denotes the complete elliptic integral of the first kind~\cite{Integral_reference}.

      \begin{figure*}
            \begin{small}
                \begin{subequations}\label{Auxiliary_Periodical}
             \begin{equation}\label{Auxiliary_Periodical_1}
               \begin{split}
                  \frac{\left\| {{{\mathbf{v}}_{l,2,1}}} \right\|_1 }{ {N}_{l}} =\frac{{{\rho _s}}}{{\sqrt {\rho _s^2 + y_s^2} }} - \frac{{2{\rho _s}}}{{\pi \sqrt {{\gamma _2} + {\gamma _1}} }}K\left( {\sqrt {\frac{{2{\gamma _1}}}{{{\gamma _2} + {\gamma _1}}}} } \right) 
                   + \frac{ { {R_{{\rm{CA}}}} \kappa {\rm{(}}{\gamma _2}{\rm{,}}{\gamma _1}{\rm{)}}}} {\pi } , \quad   {\left\| {{\mathbf{v}}_{l,2,3}} \right\|_1}  = \frac{{{N_{l}}}}{{\sqrt {\rho_{\rm{s}}^2 + y_s^2} }} - \frac{{2{N_{l}}}} {{\pi \sqrt {{\gamma _2} + {\gamma_1}} }} {\mathbf{K}}\left( {\sqrt {\frac{{2{\gamma _1}}}{{{\gamma _2} + {\gamma _1}}}} } \right),
               \end{split}
            \end{equation}
            \begin{equation}\label{Case2_norm2_1}
            \begin{split}
            \frac{\left\| {{{\mathbf{v}}_{l,2,1}}} \right\|_2^2 }{ {N}_{l} } = \frac{{\rho _s^2}}{{\rho _s^2 + y_s^2}}-\frac{{2{\rho _s}}}{{\sqrt {\rho _s^2 + y_s^2} }}[\frac{ {2{\rho _s}} K\left( {\sqrt {\frac{{2{\gamma _1}}}{{{\gamma _2} + {\gamma _1}}}} } \right) } {{\pi \sqrt {{\gamma _2} + {\gamma _1}} }} - \frac{{ {R_{{\rm{CA}}}} \kappa {\rm{(}}{\gamma _2}{\rm{,}}{\gamma _1}{\rm{)}}}} {\pi }]  + \frac{{\{ 2\rho _s^2 - 2({\gamma _2} - \sqrt {\gamma _2^2 - \gamma _1^2} ) + R_{{\rm{CA}}}^2 + \frac{{2\gamma _2^2 - \gamma _1^2 - 2{\gamma _2}\sqrt {\gamma _2^2 - \gamma _1^2} }}{{4\rho _s^2}}\} }}{{2\sqrt {\gamma _2^2 - \gamma _1^2} }}, 
            \end{split}
            \end{equation}
            \begin{equation}\label{Case2_norm2_3}
                \left\| {{{\mathbf{v}}_{l,2,2}}} \right\|_2^2=\frac{{{N_{l}}}}{{\gamma_1^2}}\left( {\gamma _2} - \sqrt {\gamma _2^2 - \gamma _1^2} \right), \quad   \frac{\left\| {{{\mathbf{v}}_{l,2,3}}} \right\|_2^2}{{N}_{l}} = \frac{1}{{\rho_{s}^2 + y_s^2}} + \frac{1}{{\sqrt {\gamma_2^2 - \gamma _1^2} }}-\frac{4  {\mathbf{K}}\left( {\sqrt {\frac{{2{\gamma _1}}}{{{\gamma _2} + {\gamma _1}}}} } \right) } {{\pi \sqrt {\rho_s^2 + y_s^2} \sqrt {{\gamma_2} + {\gamma _1}} }}
            \end{equation}
            \begin{equation}\label{Auxiliary_Periodical_0}
    			\left\| {{\mathbf{v}}_{l,2,2}} \right\|_1  =\left\| {{{\mathbf{v}}_{l,2,1}} \odot {{\mathbf{v}}_{l,2,2}}} \right\|_2^1=\left\| {{{\mathbf{v}}_{l,2,2}} \odot {\mathbf{v}}_{l,2,3}} \right\|_2^1= 0, 
    		\end{equation}
            \hrulefill
    	\end{subequations}
            \end{small}
       \end{figure*}
\noindent 

\end{lemma}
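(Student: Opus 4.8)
The plan is to reuse the discretization-to-integration technique already deployed in the proofs of Proposition~\ref{Pro:CRB_USW_UCA} and Lemma~\ref{Lemma_Case1_norm}. Because the array elements sit at the equispaced angles $\varphi_s^{(m)}=\frac{2\pi m}{N_l}$, the offset $\phi_s-\varphi_s^{(m)}$ sweeps $[0,2\pi)$ uniformly with step $\delta=\frac{2\pi}{N_l}$, so for $N_l\gg 2\pi$ every sum over $m$ is a Riemann sum and I would approximate $\sum_{m}f(\phi_s-\varphi_s^{(m)})\approx\frac{N_l}{2\pi}\int_0^{2\pi}f(x)\,dx$. The enabling first step is to note $(\rho_{\rm CA}^{(m)})^2+y_s^2=\gamma_2-\gamma_1\cos(\phi_s-\varphi_s^{(m)})$ with $\gamma_1=2R_{\rm CA}\rho_s$ and $\gamma_2=y_s^2+\rho_s^2+R_{\rm CA}^2$, so that every index-dependent denominator collapses to the canonical kernel $\sqrt{\gamma_2-\gamma_1\cos x}$; this is precisely the form that forces the elliptic integrals to appear.

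I would first clear the four vanishing identities in \eqref{Auxiliary_Periodical_0}. Each of $\mathbf{v}_{l,2,2}$, $\mathbf{v}_{l,2,1}\odot\mathbf{v}_{l,2,2}$ and $\mathbf{v}_{l,2,2}\odot\mathbf{v}_{l,2,3}$ carries exactly one factor $\sin(\phi_s-\varphi_s^{(m)})$, whereas every remaining factor enters the index only through $\cos(\phi_s-\varphi_s^{(m)})$. The limiting integrand is thus odd under $x\mapsto 2\pi-x$, its integral over a full period vanishes, and the associated sums are $\approx 0$.

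For the remaining (nonzero) entries I would match each limiting integral to a closed form from~\cite{Integral_reference}. The two linear norms in \eqref{Auxiliary_Periodical_1} and the cross term of $\|\mathbf{v}_{l,2,3}\|_2^2$ rest on $\int_0^{2\pi}\frac{dx}{\sqrt{\gamma_2-\gamma_1\cos x}}=\frac{4}{\sqrt{\gamma_2+\gamma_1}}\mathbf{K}\!\left(\sqrt{\tfrac{2\gamma_1}{\gamma_2+\gamma_1}}\right)$; the quadratic term of $\|\mathbf{v}_{l,2,3}\|_2^2$ in \eqref{Case2_norm2_3} then follows from the elementary $\int_0^{2\pi}\frac{dx}{\gamma_2-\gamma_1\cos x}=\frac{2\pi}{\sqrt{\gamma_2^2-\gamma_1^2}}$, and $\|\mathbf{v}_{l,2,2}\|_2^2$ from $\int_0^{2\pi}\frac{\sin^2 x}{\gamma_2-\gamma_1\cos x}\,dx=\frac{2\pi}{\gamma_1^2}\big(\gamma_2-\sqrt{\gamma_2^2-\gamma_1^2}\big)$. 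Applying the Riemann prefactor $\frac{N_l}{2\pi}$ to each reproduces the stated forms.

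I expect the main obstacle to live in \eqref{Auxiliary_Periodical_1} and \eqref{Case2_norm2_1}, both of which carry the $\kappa(\gamma_2,\gamma_1)$ term. This term comes from $\int_0^{2\pi}\frac{\cos x}{\sqrt{\gamma_2-\gamma_1\cos x}}\,dx$, which is not elementary: splitting $\cos x=\frac{1}{\gamma_1}\big[\gamma_2-(\gamma_2-\gamma_1\cos x)\big]$ turns it into a first-kind piece $\propto\int\frac{dx}{\sqrt{\gamma_2-\gamma_1\cos x}}$ and a second-kind piece $\propto\int\sqrt{\gamma_2-\gamma_1\cos x}\,dx$, so $\kappa$ is genuinely a combination of complete elliptic integrals of both kinds. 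For the squared norm $\|\mathbf{v}_{l,2,1}\|_2^2$ the remaining work is algebraic rather than conceptual: after expanding $(\rho_s-R_{\rm CA}\cos x)^2$, its cross term reuses the elliptic result just described, while its quadratic term reduces to the elementary rational integrals $\int_0^{2\pi}\frac{\cos^k x}{\gamma_2-\gamma_1\cos x}\,dx$ for $k=0,1,2$, each expressible through $\sqrt{\gamma_2^2-\gamma_1^2}$. Inserting $\gamma_1^2=4R_{\rm CA}^2\rho_s^2$ and collapsing the resulting fractions into the compact right-hand side of \eqref{Case2_norm2_1}—in particular reconstructing $\frac{2\gamma_2^2-\gamma_1^2-2\gamma_2\sqrt{\gamma_2^2-\gamma_1^2}}{4\rho_s^2}$—is the bookkeeping-heavy step; the symmetry argument, the substitution $x=\phi_s-\varphi_s^{(m)}$, and the first-kind reductions are comparatively routine.
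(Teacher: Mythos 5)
Your proposal matches the paper's own proof: both convert the equispaced sums over $\varphi_s^{(m)}$ into $\tfrac{N_l}{2\pi}\int_0^{2\pi}(\cdot)\,dx$ Riemann integrals, kill the terms carrying a single $\sin(\phi_s-\varphi_s^{(m)})$ factor by odd symmetry over a full period, and evaluate the surviving kernels $\int_0^{2\pi}\frac{dx}{\sqrt{\gamma_2-\gamma_1\cos x}}$ and $\int_0^{2\pi}\frac{\cos^k x}{\gamma_2-\gamma_1\cos x}\,dx$ via the tabulated elliptic and rational integral formulas (the paper cites Eq.~3.671 and the $\int_0^{\pi}\frac{\cos 2x}{1\pm a\cos x}dx$ identities of~\cite{Integral_reference}). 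Your additional observation that $\kappa(\gamma_2,\gamma_1)$ arises from $\int_0^{2\pi}\frac{\cos x}{\sqrt{\gamma_2-\gamma_1\cos x}}\,dx$ and decomposes into first- and second-kind elliptic integrals is consistent with, and in fact clarifies, a quantity the paper uses without defining.
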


    \begin{proof}
     Please see the details in {\textbf{Appendix~\ref{App:Lemma_Case2_norm}}}. 
    \end{proof}

    \begin{proposition}\label{Pro_FIM_Case2}
       Based on the USW model of the NF propagation, the FIM $ {\mathbf{J}}_{{\bm{\gamma}} }$ under non-coplanar case can be approximated as a block diagonal matrix ${\mathbf{J}}_{ {\bm{\gamma}} }^{{\rm{bd}}}$ with trinity loss
       \begin{equation}\label{FIM_bd}
        	\begin{split}
        		{\mathbf{J}}_{ {\bm{\gamma}} } \approx {\mathbf{J}}_{ {\bm{\gamma}} }^{{\rm{bd}}} &=  [ \begin{array}{*{20}{c}}
        			{\begin{array}{*{30}{c}}
        					\left[ {\mathbf{J}}_{ {\bm{\gamma}} } \right]_{{\rm{11}}} \\
        					0 \\
        					0
        			\end{array}}& {\begin{array}{*{30}{c}}
        					0 \\
        					\left[ {\mathbf{J}}_{ {\bm{\gamma}} } \right]_{{\rm{22}}}  \\
        					\left[ {\mathbf{J}}_{ {\bm{\gamma}} } \right]_{{\rm{23}}}^{{\rm{H}}}  
        			\end{array}}
        			& {\begin{array}{*{30}{c}}
        					0 \\
        					\left[ {\mathbf{J}}_{ {\bm{\gamma}} } \right]_{{\rm{23}}}   \\
        					\left[ {\mathbf{J}}_{ {\bm{\gamma}} } \right]_{{\rm{33}}}
        			\end{array}}
        		\end{array} ]. 
        	\end{split}
        \end{equation}The corresponding separate CRBs over different coordinates are given by 

        \begin{small}
            \begin{subequations}\label{General_CRB_Ori}
     		\begin{equation}
     			{\rm{CRB}}_{{\phi}_{s}} = [ {\mathbf{J}}_{ {\mathbf{p}}_{s} }^{-1} ]_{{\rm{11}}} = \frac{1}{{{{\left[ {{{\bf{J}}_{{{\bf{p}}_s}}}} \right]}_{{\rm{11}}}}}} - \frac{ { {{ \left[ {{{\bf{J}}_{{{\bf{p}}_s}}}} \right]}_{{\rm{23}}}} \left[ {{{\bf{J}}_{{{\bf{p}}_s}}}} \right]_{{\rm{23}}}^{\rm{H}}}} {{{{\left[ {{{\bf{J}}_{{{\bf{p}}_s}}}} \right]}_{{\rm{11}}}} {{\left[ {{{\bf{J}}_{{{\bf{p}}_s}}}} \right]}_{{\rm{22}}}} {{\left[ {{{\bf{J}}_{{{\bf{p}}_s}}}} \right]}_{{\rm{33}}}}}}, 
     		\end{equation}
     	    \begin{equation}
     	    	{\rm{CRB}}_{{\rho}_{s}} \approx [ {\mathbf{J}}_{ {\mathbf{p}}_{s} }^{-1} ]_{{\rm{22}}} = \frac{{{\sigma ^2}}} {{2{{\left| {{\alpha_s}} \right|}^2}L\Re  \{  {\rm{Tr}} [ { {\mathbf{R}}_{x} } \mathop {\dot{{\mathbf{A}}}}_{2,{\rho_s}}^{\rm{H}} {{\mathop {\dot{{\mathbf{A}}}} }_{2,{\rho_s}} } ] \} }}, 
     	    \end{equation}
            \begin{equation}
            	{\rm{CRB}}_{y_{s}} \approx [ {\mathbf{J}}_{ {\mathbf{p}}_{s} }^{-1} ]_{{\rm{33}}} = \frac{{{\sigma ^2}}} {{2{{\left| {{\alpha _s}} \right|}^2}L\Re  \{  {\rm{Tr}} [{\mathbf{R}}_{x} \mathop {\dot{{\mathbf{A}}}}_{2,{y_s}}^{\rm{H}} {{\mathop {\dot{{\mathbf{A}}}} }_{2,{y_s}} } ] \} }},
            \end{equation}
     	\end{subequations}
        \end{small}
       
       \noindent where $\left\{ {\dot{{\mathbf{A}}}}_{2,{\phi_s}}^{\rm{H}} {{ {\dot{{\mathbf{A}}}} }_{2,{\phi_s}} }, {\dot{{\mathbf{A}}}}_{2,{\rho_s}}^{\rm{H}} {{ {\dot{{\mathbf{A}}}} }_{2,{\rho_s}} }, {\dot{{\mathbf{A}}}}_{2,{y_s}}^{\rm{H}} {{ {\dot{{\mathbf{A}}}} }_{2,{y_s}} } \right\} $ are given by \eqref{General_AA_Diagonal}. With the isotropic BF design at the BS ${\mathbf{R}}_{x}= \frac{P_{\max} }{{N}_{t}} {\mathbf{I}}_{{N}_{t}\times {N}_{t}}$, the corresponding CRBs of $\left\{ {\rho}_{s}, {y}_{s} \right\}$ are given by \eqref{General_CRB_Iso_USW}.
        \begin{subequations}\label{General_CRB_Iso_USW}
            \begin{equation}
                {\rm{CRB}}_{\rho_{s}}= \frac{{{\sigma ^2}\lambda _c^2}}{{8{{\left| {{\alpha _s}} \right|}^2}{\pi ^2}{P_{\max }}{N_r}L {\Phi}_{{\rho}_{s}} ({\phi _s}{\rm{,}}{\rho _s}, {y_s} ) }} 
            \end{equation}
            \begin{equation}
                {\rm{CRB}}_{y_{s}}=\frac{{{\sigma ^2}\lambda _c^2}}{{16{{\left| {{\alpha _s}} \right|}^2}{\pi ^2}y_s^2{P_{\max }}{N_r}L{\Phi}_{{y}_{s}} {\rm{(}}{\phi _s}{\rm{,}}{\rho _s}{\rm{,}}{{\rm{y}}_s}{\rm{)}} } },
            \end{equation}
        \end{subequations}where ${\phi}_{{\rho}_{s}} {\rm{(}}{\phi _s}{\rm{,}}{\rho _s}{\rm{,}}{{\rm{y}}_s}{\rm{)}} = \frac{{{{\left\| {{{\mathbf{v}}_{r,2,1}}} \right\|}_2}}}{{{N_r}}} - {\frac{{{{\left\| {{{\mathbf{v}}_{r,2,1}}} \right\|}_1^2}}}{ {N_r^2} }} $ and ${\Phi}_{{y}_{s}} {\rm{(}}{\phi _s}{\rm{,}}{\rho _s}{\rm{,}}{{\rm{y}}_s}{\rm{)}} =\frac{1}{{\sqrt {\gamma _2^2 - \gamma _1^2} }} - \frac{4 {K^2}\left( {\sqrt {\frac{{2{\gamma _1}}}{{{\gamma _2} + {\gamma _1}}}} } \right) } {{{\pi ^2}({\gamma _2} + {\gamma _1})}} $. 
        \end{proposition}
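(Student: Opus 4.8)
The plan is to follow the same route as \textbf{Proposition~\ref{Pro:CRB_USW_UCA}}, now extended to the three geometric unknowns $({\phi}_s,{\rho}_s,{y}_s)$. First I would expand every geometric entry of the FIM through the Slepian--Bangs formula \eqref{FIM_element0}, so that $[{\mathbf{J}}_{{\bm{\gamma}}}]_{(i,j)} \propto \Re\{{\rm{Tr}}({\mathbf{R}}_x \dot{{\mathbf{A}}}_{2,u_i}^{\rm{H}} \dot{{\mathbf{A}}}_{2,u_j})\}$, where ${\mathbf{A}}_2 = {\bm{\alpha}}_{r,2}{\bm{\alpha}}_{t,2}^{\rm{H}}$ and each $\dot{{\mathbf{A}}}_{2,u_i}$ is built from the steering-vector derivatives \eqref{Case2_deriva_0} via the product rule, exactly as in \eqref{A_derivative}--\eqref{AA_deri_ori}. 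Using the periodicity identity ${\mathbf{A}}_2^{\rm{H}}{\rm{diag}}({\mathbf{v}}_{r,2,k}){\mathbf{A}}_2 \approx (\sum_m [{\mathbf{v}}_{r,2,k}]_m)\,{\bar{{\mathbf{A}}}}_t$ with ${\bar{{\mathbf{A}}}}_t = {\bm{\alpha}}_{t,2}{\bm{\alpha}}_{t,2}^{\rm{H}}$, every product $\dot{{\mathbf{A}}}_{2,u_i}^{\rm{H}}\dot{{\mathbf{A}}}_{2,u_j}$ collapses into terms weighted by the $\ell_1$ and $\ell_2^2$ norms of the auxiliary vectors ${\mathbf{v}}_{l,2,k}$, which is where \textbf{Lemma~\ref{Lemma_Case2_norm}} enters.

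The core of the block-diagonal claim is that the cross terms coupling the polar angle ${\phi}_s$ (index $1$) to $\{{\rho}_s,{y}_s\}$ (indices $2,3$) vanish. I would note that ${\mathbf{v}}_{l,2,2}$, the ${\phi}_s$-derivative, carries the factor $\sin({\phi}_s-{\varphi}_{{\rm{CA}}}^{(m)})$ and is odd about the target direction, whereas ${\mathbf{v}}_{l,2,1}$ and ${\mathbf{v}}_{l,2,3}$ are even. Hence every $(1,2)$ and $(1,3)$ cross term reduces to a summation of ${\mathbf{v}}_{l,2,1}\odot{\mathbf{v}}_{l,2,2}$ or ${\mathbf{v}}_{l,2,2}\odot{\mathbf{v}}_{l,2,3}$, both of which are annihilated by the rotational symmetry of the UCA according to \eqref{Auxiliary_Periodical_0}. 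This isolates the $(1,1)$ entry and zeros the off-diagonal blocks in \eqref{FIM_bd}. In contrast, ${\mathbf{v}}_{l,2,1}\odot{\mathbf{v}}_{l,2,3}$ is even and does \emph{not} sum to zero, so the $(2,3)$ coupling between the perpendicular distance and the signed distance is retained; this surviving cross term is exactly what separates the non-coplanar case from the coplanar one and produces the $2\times2$ lower block.

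Given the structure, the position sub-FIM ${\mathbf{J}}_{{\mathbf{p}}_s}$ inherits the block form, and I would obtain the separate CRBs by block inversion. The isolated $(1,1)$ entry yields ${\rm{CRB}}_{{\phi}_s}$ (the small residual coupling being recorded as the correction term in \eqref{General_CRB_Ori}), while inverting the retained $2\times2$ block formed by $[{\mathbf{J}}_{{\mathbf{p}}_s}]_{22}$, $[{\mathbf{J}}_{{\mathbf{p}}_s}]_{23}$, and $[{\mathbf{J}}_{{\mathbf{p}}_s}]_{33}$ gives ${\rm{CRB}}_{{\rho}_s}$ and ${\rm{CRB}}_{{y}_s}$. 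The explicit diagonal quadratic forms are read off from \eqref{General_AA_Diagonal}, and substituting the isotropic covariance ${\mathbf{R}}_x = \frac{P_{\max}}{N_t}{\mathbf{I}}_{N_t\times N_t}$ together with the closed-form norms \eqref{Auxiliary_Periodical_1}--\eqref{Case2_norm2_3} collapses the expressions onto the direction factors ${\Phi}_{{\rho}_s}$ and ${\Phi}_{{y}_s}$ of \eqref{General_CRB_Iso_USW}.

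The main obstacle is twofold. First, once ${y}_s\neq0$ the radial averages no longer reduce to elementary functions as in the coplanar case; they instead produce the complete elliptic integral ${\mathbf{K}}(\cdot)$ of \textbf{Lemma~\ref{Lemma_Case2_norm}}, so the bookkeeping of which Hadamard products are odd (and vanish by symmetry) versus even (and survive with an elliptic-integral weight) must be carried out meticulously, particularly to be sure the $(2,3)$ term is not mistakenly discarded. Second, justifying the approximations themselves -- replacing the antenna summations by integrals over $[0,2\pi)$, valid only when $N_l\gg2\pi$, and bounding the residual ${\phi}_s$-coupling that is dropped in passing from the exact FIM to ${\mathbf{J}}_{{\bm{\gamma}}}^{{\rm{bd}}}$ -- is the delicate part that underlies the claimed negligible loss and must be argued rather than asserted.
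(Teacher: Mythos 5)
Your proposal is correct and follows essentially the same route as the paper's own proof in Appendix~C: Slepian--Bangs expansion of the FIM entries, product-rule construction of $\dot{\mathbf{A}}_{2,u_i}$ from \eqref{Case2_deriva_0}, collapse of the quadratic forms onto the $\ell_1$/$\ell_2^2$ norms of the auxiliary vectors via the UCA's rotational symmetry, annihilation of the $\phi_s$-coupled cross terms through the odd $\sin(\phi_s-\varphi_{\rm CA}^{(m)})$ factor in $\mathbf{v}_{l,2,2}$ (Lemma~\ref{Lemma_Case2_norm}, \eqref{Auxiliary_Periodical_0}) while the even product $\mathbf{v}_{l,2,1}\odot\mathbf{v}_{l,2,3}$ preserves the $\rho_s$--$y_s$ block, followed by block inversion and substitution of the isotropic $\mathbf{R}_x$. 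If anything, your bookkeeping of which index pair survives is cleaner than the appendix's, which misstates the surviving coupling in its index labels even though the displayed block structure \eqref{FIM_bd} is the one you describe.
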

        \begin{proof}
            Please see the details in {\textbf{Appendix~\ref{App:Pro_FIM_Case2}}}. 
    \end{proof}

\subsection{Problem Formulation and BF Design for Non-coplanar Case}
For non-coplanar case, the ST is located beyond the coplanar of the UCA under the NF propagation. The problem~\eqref{Eq:PF_ori} can be recast as
\begin{subequations}\label{Eq:PF_ori_Case2}
  	\begin{align}
  		&\label{PF_C2:min PEB} \underset{ {\mathbf{w}}  } {\mathop{\min }}  \quad  {\rm{SPEB}}_{2} \left( {\mathbf{R}}_{x} ; {\bm{\eta}} \right) \\  
  		\label{PF_C2:Communication QoS} & {\rm{s.t.}} \quad  \; {R}_{c} \ge {\bar{R}}_{\min}, \\
  		\label{PF_C2:Power max}& \quad \quad \quad   {\mathbf{w}}^{{\rm{H}}} {\mathbf{w}}\le  P_{\max}.
 	\end{align}
\end{subequations}where ${\rm{SPEB}}_{2} \left( {\mathbf{R}}_{x} ; {\bm{\eta}} \right)= {\rm{tr}} { ( {{\mathbf{T}}}^{{\rm{H}}} {\mathbf{J}}_{ {\bm{\gamma}} }^{{\rm{bd}}} {{\mathbf{T}}})}^{-1}   $. The problem~\eqref{Eq:PF_ori_Case2} is a non-convex programming problem due to the quadratic form. By introducing the auxiliary variables ${\mathbf{W}}={\mathbf{w}}^{{\rm{H}}} {\mathbf{w}} \in {\mathbb{C}}^{ {N}_{t} \times {N}_{t} }$, the problem \eqref{Eq:PF_ori_Case2} can be converted into
\begin{subequations}\label{Eq:PF_SDR_Case2}
    \begin{align}
  	&\label{PF_C2:min PEB} \underset{ {\mathbf{W}} } {\mathop{\min }}  \quad  {\rm{SPEB}}_{2} \left( {\mathbf{W}} ; {\bm{\eta}} \right) \\  
  		\label{PF_C2:Communication QoS} & {\rm{s.t.}} \quad  \;  {\rm{tr}} ({\mathbf{H}} {\mathbf{W}} )   \ge {\sigma}^{2} {\bar{\gamma}}_{\min}, \\
  		\label{PF_C2:Power max} & \quad  \quad  \quad  {\rm{tr}}( {\mathbf{W}}) \le  P_{\max}, \\
            \label{PF_C2_SDR: rank} &  \quad \quad  \quad {\rm{rank}} ( {\mathbf{W}} )=1. 
 	\end{align}
\end{subequations}As ${\rm{SPEB}}_{2} \left( {\mathbf{W}} ; {\bm{\eta}} \right) $ is linear w.r.t. ${\mathbf{W}} $, problem \eqref{Eq:PF_SDR_Case2} is semi-definite programming, which can be solved via the SDR. The rank-one solution of problem \eqref{Eq:PF_SDR_Case2} can be reconstructed by the Gaussian randomization~\cite{SDR_1}. The complexity of SDR is at about ${\mathcal{O}}({N}_{t}^{4.5})$, which is a high burden with the large transmit antennas. Inspired by the problem \eqref{Eq:PF_C1_QF} under coplanar case, we further propose a low-complexity extended VQF algorithm based on the approximated diagonal matrix ${\mathbf{J}}_{{\bm{\gamma}}}^{{\rm{d}}} $, where the diagonal FIM ${\mathbf{J}}_{{\bm{\gamma}}}^{{\rm{d}}}$ under non-coplanar case is expressed as
\begin{equation}\label{FIM_bd}
       \begin{split}
          {\mathbf{J}}_{ {\bm{\gamma}} }^{{\rm{d}}} = {\rm{Diag}} {\rm{\{ [}}{{\mathbf{J}}_{ {\bm{\gamma}} }}{{\rm{]}}_{11}},{{\rm{[}}{{\mathbf{J}}_{ {\bm{\gamma}} }}{\rm{]}}_{22}},{{\rm{[}}{{\bf{J}}_{ {\bm{\gamma}} }}{\rm{]}}_{33}} \}. 
       \end{split}
\end{equation}
Similar to \textbf{Lemma~\ref{Eq_Norm}} and \textbf{Remark~\ref{Lemma2_R1}}, the Rayleigh quotient can be utilized to decompose the partial derivative matrix (i.e., ${\dot{{\mathbf{A}}}}_{2,i}^{\rm{H}} {\dot{{\mathbf{A}} } }_{2,i}, \; i \in \{ { {\rho_s}, {\phi_s}, y_{s} } \} $) on its main eigenvector direction as shown in {\textbf{Lemma~\ref{Pro:Case2_decompose}}}. 

\begin{lemma}\label{Pro:Case2_decompose}
    Let ${\bar{{\mathbf{h}}}}_{s}={\sqrt{ \left\| {{{\mathbf{v}}_{r,2,2}}} \right\|_2^2   }} {{\bm{{\alpha}}}}_{t,2}+ {{\bm{{\alpha}}}}_{t,2} \odot {\mathbf{v}}_{t,2,2} \in {\mathbb{C}}^{{N}_{t} \times 1},$ we have ${\rm{Tr}}({{\mathbf{R}}_{x}}{\dot{{\mathbf{A}}}}_{{2, \phi _s}}^{\rm{H}} {\dot{{\mathbf{A}}}}_{{2, \phi_s}} ) \approx \frac{{4{\pi^2}{N_r}}}{{\lambda _c^2}} {  \left| { { {\bar{{\mathbf{h}}}}_{s}^{\rm{H}} } { {\mathbf{w}} } } \right|^2} $. Similar to coplanar case, ${\rm{Tr}}({{\mathbf{R}}_{x}}{\dot{{\mathbf{A}}}}_{{2, \rho_s}}^{\rm{H}} {\dot{{\mathbf{A}}}}_{{2, \rho_s}} ) $ and ${\rm{Tr}}({{\mathbf{R}}_{x}}{\dot{{\mathbf{A}}}}_{{2, y_s}}^{\rm{H}}{\dot{{\mathbf{A}}}}_{{2, y_s}} ) $ can be decomposed by
    
    \begin{small}
         \begin{equation}
            {\rm{Tr}}({{\mathbf{R}}_{x}}{\dot{{\mathbf{A}}}}_{{2, \rho_s}}^{\rm{H}} {\dot{{\mathbf{A}}}}_{{2, \rho_s}} ) \approx {  \left| {  { {\bar{{\bm{\alpha}}}}_{{\rho}_{s} }^{\rm{H}} } {{\mathbf{w}}} } \right|^2},  {\rm{Tr}}({{\mathbf{R}}_{x}}{\dot{{\mathbf{A}}}}_{{2, y_s}}^{\rm{H}} {\dot{{\mathbf{A}}}}_{{2, y_s}} ) \approx {  \left| {  { {\bar{{\bm{\alpha}}}}_{{y}_{s}}^{\rm{H}} } {{\mathbf{w}}} } \right|^2},
        \end{equation}
    \end{small} where 
    \begin{small}
    \begin{subequations}
        \begin{equation}
            {\bar{{\bm{\alpha}}}}_{{\rho}_{s} }= \frac{{ {{\bm{\alpha}}}_{t,2}^{{\rm{H}}} {\dot{{\mathbf{A}}}}_{{2,\rho_s}}^{{\rm{H}}} {\dot{{\mathbf{A}}}}_{{2,\rho_s}} {{\bm{\alpha}}}_{t,2} } } {{ {{\bm{\alpha}} }_{t,2}^{{\rm{H}}} {\bm{\alpha}}_{t,2} }} \odot {{{\bm{\alpha}} }}_{t,2}, 
        \end{equation}
        \begin{equation}
          {\bar{{\bm{\alpha}}}}_{{y}_{s} }= \frac{{ {{\bm{\alpha}}}_{t,2}^{{\rm{H}}} {\dot{{\mathbf{A}}}}_{{2,y_s}}^{{\rm{H}}} {\dot{{\mathbf{A}}}}_{{2,y_s}} {{\bm{\alpha}}}_{t,2} } } {{ {{\bm{\alpha}}}_{t,2}^{{\rm{H}}} {{\bm{\alpha}}}_{t,2} }} \odot {{\bm{\alpha}}}_{t,2}.
        \end{equation}
    \end{subequations}
    \end{small}
    
\end{lemma}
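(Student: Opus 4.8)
The plan is to prove the three coordinate-wise statements separately, each one re-running the coplanar arguments of \textbf{Lemma~\ref{Eq_Norm}} and \textbf{Remark~\ref{Lemma2_R1}} with the non-coplanar auxiliary vectors $\{\mathbf{v}_{l,2,1},\mathbf{v}_{l,2,2},\mathbf{v}_{l,2,3}\}$ supplied by \textbf{Lemma~\ref{Lemma_Case2_norm}} in place of $\{\mathbf{v}_{l,1},\mathbf{v}_{l,2}\}$. For the $\phi_s$ term I would first read off the explicit form of $\dot{\mathbf{A}}_{2,\phi_s}^{\rm{H}}\dot{\mathbf{A}}_{2,\phi_s}$ from \eqref{General_AA_Diagonal}: after collapsing the receive-side summation exactly as in \textbf{Proposition~\ref{Pro_FIM_Case2}}, it is a sum of the two rank-one terms $\|\mathbf{v}_{r,2,2}\|_2^2\,\bm{\alpha}_{t,2}\bm{\alpha}_{t,2}^{\rm{H}}$ and $\mathrm{diag}(\mathbf{v}_{t,2,2})\bm{\alpha}_{t,2}\bm{\alpha}_{t,2}^{\rm{H}}\mathrm{diag}(\mathbf{v}_{t,2,2})$ up to the common scalar prefactor appearing in \eqref{General_AA_Diagonal}. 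Since $\mathbf{v}_{t,2,2}$ is real, $\mathrm{diag}(\mathbf{v}_{t,2,2})$ is Hermitian and $\bm{\alpha}_{t,2}\odot\mathbf{v}_{t,2,2}=\mathrm{diag}(\mathbf{v}_{t,2,2})\bm{\alpha}_{t,2}$, so these two contributions live along the pair of vectors $\bm{\alpha}_{t,2}$ and $\bm{\alpha}_{t,2}\odot\mathbf{v}_{t,2,2}$.

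The enabling step is the orthogonality $(\bm{\alpha}_{t,2}\odot\mathbf{v}_{t,2,2})^{\rm{H}}\bm{\alpha}_{t,2}=\|\mathbf{v}_{t,2,2}\|_1=0$ provided by \eqref{Auxiliary_Periodical_0}: because $\bm{\alpha}_{t,2}$ and $\bm{\alpha}_{t,2}\odot\mathbf{v}_{t,2,2}$ are orthogonal, the single vector $\bar{\mathbf{h}}_s=\sqrt{\|\mathbf{v}_{r,2,2}\|_2^2}\,\bm{\alpha}_{t,2}+\bm{\alpha}_{t,2}\odot\mathbf{v}_{t,2,2}$ of the statement is the natural rank-one surrogate whose two square terms in $\bar{\mathbf{h}}_s\bar{\mathbf{h}}_s^{\rm{H}}$ reproduce exactly those two contributions, i.e. $\bar{\mathbf{h}}_s\bar{\mathbf{h}}_s^{\rm{H}}\approx\frac{\lambda_c^2}{4\pi^2 N_r}\dot{\mathbf{A}}_{2,\phi_s}^{\rm{H}}\dot{\mathbf{A}}_{2,\phi_s}$. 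Substituting $\mathbf{R}_x=\mathbf{w}\mathbf{w}^{\rm{H}}$ and using the cyclic property of the trace then gives $\mathrm{Tr}(\mathbf{R}_x\dot{\mathbf{A}}_{2,\phi_s}^{\rm{H}}\dot{\mathbf{A}}_{2,\phi_s})\approx\frac{4\pi^2 N_r}{\lambda_c^2}\bar{\mathbf{h}}_s^{\rm{H}}\mathbf{w}\mathbf{w}^{\rm{H}}\bar{\mathbf{h}}_s=\frac{4\pi^2 N_r}{\lambda_c^2}|\bar{\mathbf{h}}_s^{\rm{H}}\mathbf{w}|^2$, which is the first claim.

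The main obstacle is the $\rho_s$ and $y_s$ terms, where the clean two-orthogonal-term construction breaks down. Unlike $\mathbf{v}_{t,2,2}$, the vectors $\mathbf{v}_{t,2,1}$ and $\mathbf{v}_{t,2,3}$ have \emph{nonzero} first-order sums, given explicitly by the elliptic-integral expressions \eqref{Auxiliary_Periodical_1}, so $\bm{\alpha}_{t,2}\odot\mathbf{v}_{t,2,1}$ and $\bm{\alpha}_{t,2}\odot\mathbf{v}_{t,2,3}$ each retain a nonnegligible component along $\bm{\alpha}_{t,2}$ and cannot be split off orthogonally. Following \textbf{Remark~\ref{Lemma2_R1}}, the remedy is to replace $\dot{\mathbf{A}}_{2,\rho_s}^{\rm{H}}\dot{\mathbf{A}}_{2,\rho_s}$ by its rank-one Rayleigh-quotient approximation along its dominant eigendirection, which I take to be $\bm{\alpha}_{t,2}$: the Rayleigh quotient $\bm{\alpha}_{t,2}^{\rm{H}}\dot{\mathbf{A}}_{2,\rho_s}^{\rm{H}}\dot{\mathbf{A}}_{2,\rho_s}\bm{\alpha}_{t,2}/(\bm{\alpha}_{t,2}^{\rm{H}}\bm{\alpha}_{t,2})$ supplies the leading eigenvalue, and projecting back onto $\bm{\alpha}_{t,2}$ yields the effective vector $\bar{\bm{\alpha}}_{\rho_s}$ of the statement, and identically $\bar{\bm{\alpha}}_{y_s}$. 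Tracing against $\mathbf{R}_x=\mathbf{w}\mathbf{w}^{\rm{H}}$ then collapses each term to $|\bar{\bm{\alpha}}_{\rho_s}^{\rm{H}}\mathbf{w}|^2$ and $|\bar{\bm{\alpha}}_{y_s}^{\rm{H}}\mathbf{w}|^2$.

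The delicate point throughout is controlling what is discarded. For $\phi_s$ the neglected quantity is the traceless, $\mathbf{w}$-dependent coupling $2\sqrt{\|\mathbf{v}_{r,2,2}\|_2^2}\,\Re[(\bm{\alpha}_{t,2}^{\rm{H}}\mathbf{w})^{*}(\bm{\alpha}_{t,2}\odot\mathbf{v}_{t,2,2})^{\rm{H}}\mathbf{w}]$ separating the rank-one surrogate from the exact rank-two matrix, and its smallness rests on $\|\mathbf{v}_{t,2,2}\|_1=0$ together with the mixed-product identities $\|\mathbf{v}_{l,2,1}\odot\mathbf{v}_{l,2,2}\|_1=\|\mathbf{v}_{l,2,2}\odot\mathbf{v}_{l,2,3}\|_1=0$ in \eqref{Auxiliary_Periodical_0} that also decouple $\phi_s$ from $\{\rho_s,y_s\}$. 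For $\rho_s$ and $y_s$ it is the residual energy of $\dot{\mathbf{A}}_{2,i}^{\rm{H}}\dot{\mathbf{A}}_{2,i}$ orthogonal to $\bm{\alpha}_{t,2}$, whose negligibility must be argued from the concentration of $\mathbf{v}_{l,2,1}$ and $\mathbf{v}_{l,2,3}$ along the $\bm{\alpha}_{t,2}$ direction. I therefore expect the bulk of the technical work to lie in verifying that $\bm{\alpha}_{t,2}$ is indeed the approximately dominant eigenvector and in bounding these residuals, consistently with the trinity-loss approximation already invoked in \textbf{Proposition~\ref{Pro_FIM_Case2}}.
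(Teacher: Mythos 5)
Your proposal is correct and follows essentially the same route as the paper's Appendix~D: the $\phi_s$ term is handled by the rank-two decomposition along $\bm{\alpha}_{t,2}$ and $\bm{\alpha}_{t,2}\odot{\mathbf{v}}_{t,2,2}$ enabled by $\left\|{\mathbf{v}}_{t,2,2}\right\|_1=0$ (mirroring \textbf{Lemma~\ref{Eq_Norm}}), while the $\rho_s$ and $y_s$ terms are handled by the Rayleigh-quotient projection onto $\bm{\alpha}_{t,2}$ justified by $[\bm{\alpha}_{t,2}\odot{\mathbf{v}}_{t,2,i}]^{\rm{H}}\bm{\alpha}_{t,2}\in\mathbb{R}$ (mirroring \textbf{Remark~\ref{Lemma2_R1}}). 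You are in fact more explicit than the paper about which cross terms are discarded and why their neglect requires justification, which is a useful clarification rather than a deviation.
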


\begin{proof}
    Please see the details in {\textbf{Appendix~\ref{App:Case2_decompose}}}.  
\end{proof}

Thus, the objective function of problem \eqref{Eq:PF_ori_Case2} can be further simplified as: 
\begin{equation}\label{OF_Case2}
    {\rm{OF}}_{2} ( {\mathbf{w}} )= \frac{ {{ { [ {\tilde{{\mathbf{T}}}} ]}_{(1,1)} }} }{{{{\left| { {\bar{{\mathbf{h}} }}_s^{\rm{H}} {\mathbf{w}} } \right|}^2}}} + \frac{ {{ { [ {\tilde{{\mathbf{T}}}} ]}_{(2,2)} }} } { { { {\left| {\bar{{{\bm{\alpha}}}}}_{{\rho}_{s}}^{\rm{H}} {\mathbf{w}} \right|}^2 }}} + \frac{ {{ { [ {\tilde{{\mathbf{T}}}} ]}_{(3,3)} }} }{{{ {\left| {\bar{{{\bm{\alpha}}}}}_{{y}_{s}}^{\rm{H}} {\mathbf{w}} \right|}^2 }}}.
\end{equation}
It is noted that \eqref{OF_Case2} has the same structure as \eqref{Eq:PF_C1_QF}, where \textbf{Proposition~\ref{Pro:QF_equivalent}} also holds for non-coplanar case. To solve the problem ~\eqref{OF_Case2} in a more tractable form, we introducing the auxiliary variable ${\bar{{\mathbf{y}}}}= [ {\bar{y}}_{1},  {\bar{y}}_{2}, {\bar{y}}_{3}  ]^{\rm{T}} $, where ${\bar{y}}_{1}  \buildrel \Delta \over =\frac{{\sqrt {{{\left| {{\mathbf{h}}_{s}^{\rm{H}} {\mathbf{w}} } \right|}^2}} }} { { [ {\tilde{{\mathbf{T}}}} ]}_{(1,1)} } $, ${\bar{y}}_{2} \buildrel \Delta \over = \frac{{\sqrt {{ {\left| {\bar{{{\bm{\alpha}}}}}_{{\rho}_{s}}^{\rm{H}} {\mathbf{w}} \right|}^2 }} }} { { [ {\tilde{{\mathbf{T}}}} ]}_{(2,2)}}$, and ${\bar{y}}_{3} \buildrel \Delta \over = \frac{{\sqrt {{ {\left| {\bar{{{\bm{\alpha}}}}}_{{y}_{s}}^{\rm{H}} {\mathbf{w}} \right|}^2 }} }} { { [ {\tilde{{\mathbf{T}}}} ]}_{(2,2)}}$. The problem can be recast as
\begin{subequations}\label{Eq:PF_C2_QF}
    \begin{align}
            & \label{max norm} \underset{ {\mathbf{w}}, {\bar{{\mathbf{y}}}} } {\min } \; {\bar{{\rm{OF}}}} ({\mathbf{w}}, {\bar{{\mathbf{y}}}} )   \\ 
  		& {\rm{s.t.}}\; \; \eqref{Power max}, \eqref{Eq:PF_C1_QF:SINR},
    \end{align}
\end{subequations}where $ {\bar{{\rm{OF}}}} ({\mathbf{w}}, {\bar{{\mathbf{y}}}})  =\frac{1}{{{{\left[ {2 {\bar{y}}_1 {{\left|{ {\bar{{\mathbf{h}}}}_{s}^{\rm{H}} {\mathbf{w}} } \right|}}  - {\bar{y}}_{1}^{2}  { [ {\tilde{{\mathbf{T}}}} ]}_{(2,2)} } \right]}_ + }}} + \sum\limits_{i = 1,3} {\frac{1}{{{{\left[ {2{{\bar{y}}_i}\left| {\bar{{\bm{\alpha}} }_{i}^{\rm{H}} {\mathbf{w}} } \right| - \bar y_i^2   { [ {\tilde{{\mathbf{T}}}} ]}_{(j,j)}} \right]}_ + }}}} $. The problem \eqref{Eq:PF_C2_QF} is a convex problem and the optimal solution ${\mathbf{w}}_{(i)}^{*}$ can be obtained via CVX~\cite{cvx}. For each iteration, the auxiliary variable ${\bar{{\mathbf{y}} }}$ can be updated by 
\begin{equation}\label{y_Case2_Update}
     {\bar{y}}_{1}^{(i+1)}= \frac{{\sqrt {{{\left| { {\bar{{\mathbf{h}}}}_{s}^{\rm{H}}{\mathbf{w}}^{*}_{i} } \right|}^2}} }} { { [ {\tilde{{\mathbf{T}}}} ]}_{(1,1)} },  {\bar{y}}_{j}^{(i+1)}= \frac{{\sqrt {{{\left| { {\bar{{\bm{\alpha}}}}_{u_l}^{\rm{H}} {\mathbf{w}}^{*}_{i} } \right|}^2}} }} { { [ {\tilde{{\mathbf{T}}}} ]}_{(l,l)} }, l=2,3, 
\end{equation}where ${u}_{2}={\rho}_{s}, {u}_{3}={y}_{s}$.The corresponding algorithm for \eqref{Eq:PF_C2_QF} follows the similar steps as {\textbf{Algorithm~\ref{Alg_VQF_Case1}}}. While the convergence and complexity analysis are the same as in non-coplanar case, we omit these for brevity. 

\section{Numerical Results}
    In this section, we provide the numerical results to validate the analysis of radio sensing performance and demonstrate the advantages of the proposed VQF algorithms. A 2D UCA-based NF ISAC framework is considered. In particular, the BS is equipped with ${N}_{t}$ uniformly distributed transmit antenna operating at ${f}_{c}=28$ GHz without specific illustration. The aperture size of UCA is ${D}_{1}=\frac{ {N}_{t} d }{2 \pi}$. The noise power is set as ${\sigma}^{2}=-113$ dBm. The numerical results of the benchmark scheme are obtained as~\cite{FanLiu_22TSP,Zhaolin23_NF} under the UCA.
    \begin{figure}[!t]
    	\centering
    	\includegraphics[width=3.25in]{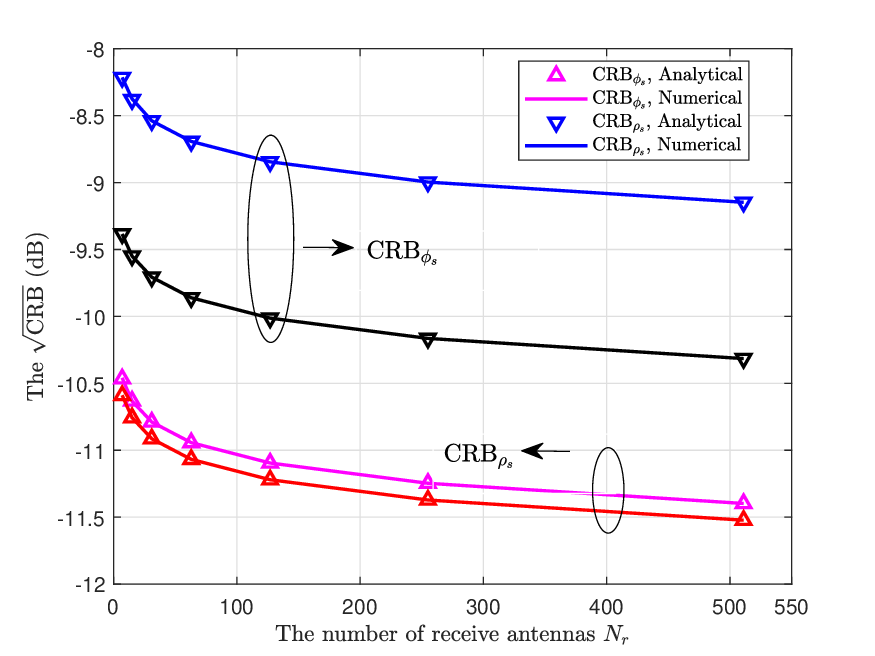}\\
    	\caption{The ${\rm{CRB}}$ performance versus different receive antennas under coplanar case.}\label{Fig:CRB_UCA}
    \end{figure}

    \vspace{-0.5cm}
    \subsection{Analysis Validation}
    \subsubsection{Radio sensing performance versus different number of receive antennas}
     In Fig.~\ref{Fig:CRB_UCA}, we study the radio sensing performance versus the number of receive antennas ${N}_{r}$ with ${\rho}_{s} < {d}_{{\rm{CA}}} $ and $ {\rho}_{s} \ge {R}_{{\rm{CA}}} $ under coplanar case. We set ${P}_{\max}=25$ dBm, ${N}_{t}=256$, and ${\bar{\gamma}}=1$ dB. The azimuth angle of ST is ${\phi}_{s}=30^{\circ} $ while the perpendicular distances are set as ${\rho}_{s}=15,40$ cm for two scenarios. With the increased number of receive antennas, the ${\rm{CRB}}_{{\phi}_{s}}$ reaches a lower value due to the large receive antennas size increasing the resolution of radio sensing. Among the two scenarios (${\rho}_{s}<{d}_{{\rm{CC}}} $ and ${\rho}_{s} \ge {R}_{{\rm{CA}}}$), the derived approximation expressions of ${\rm{CRB}}_{{\phi}_{s}} $ and $ {\rm{CRB}}_{{\rho}_{s}}$ are nearly equal to the numerical results which validate the efficiency of the approximate diagonal FIM. This phenomenon also indicates that the estimation of the angle and distance is nearly independent under the NF propagation with the symmetry circular antenna array.     

    \begin{figure}[!t]
    	\centering
    	\includegraphics[width=3.25in]{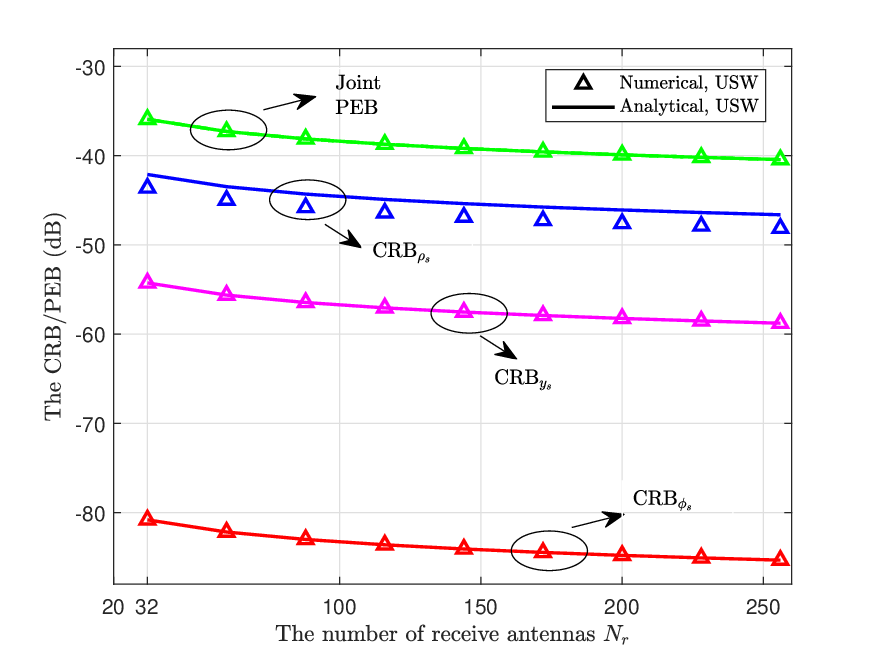} \\
    	\caption{The ${\rm{CRB}}$ performance versus different receive antennas ${N}_{r}$ under non-coplanar case.} \label{Fig:CRB_Case2}
    \end{figure}
    \subsubsection{Radio sensing performance versus different number of receive antennas under non-coplanar case}
    In Fig.~\ref{Fig:CRB_Case2}, we investigate the separate CRB w.r.t. ${ {\rho}_{s}, {\phi}_{s}, {y}_{s} }$ and joint SPEB performance with ${\mathbf{R}}_{x}=\frac{{{{{P}}_{\max }}}}{{{N_t}}} {\mathbf{I}}_{{N}_{t}\times {N}_{t}} $ versus different number of receive antennas under non-coplanar case. We set ${P}_{\max}=25$ dBm and ${N}_{t}=256$. The ST is located at ${\mathbf{p}}_{s}=[8,30^{\circ},2]^{\rm{T}} $. With the increase in the number of receive antennas, the CRB w.r.t. ${{\rho}_{s}, {\phi}_{s}, {y}_{s} } $ reduces, which indicates that the CRB is inversely proportional to the number of receive antennas as shown in \textbf{Proposition 4}. Compared with the ${\rm{CRB}}_{{\rho}_{s}} $ and ${\rm{CRB}}_{y_{s}}$, the value of ${\rm{CRB}}_{{\phi}_{s}} $ is smaller. The reason lies in that the decomposed main eigen vector of ${\rm{CRB}}_{{\rho}_{s}} $ and ${\rm{CRB}}_{y_s} $ is parallel to $ {\bar{{\bm{\alpha}}}}_{t} $ while the ${\bar{{\mathbf{h}}}}_{s}$ is composed of ${\bar{{\bm{\alpha}}}}_{t}$ and its orthogonal vector ${\bar{{\bm{\alpha}}}}_{t} \odot {\mathbf{v}}_{t,2,2}$. With the isotropic BF design at the BS, the joint ${\rm{SPEB}}$ is mainly determined by $ {\rm{CRB}}_{{\rho}_{s}} $ while the ${\rm{CRB}}_{{\phi}_{s}} $ and ${\rm{CRB}}_{{y}_{s}} $ affect the value of ${\rm{SPEB}}$ via the coordinate transform matrix. 

\begin{figure}[!t]
    \centering
    \includegraphics[width=3.25in]{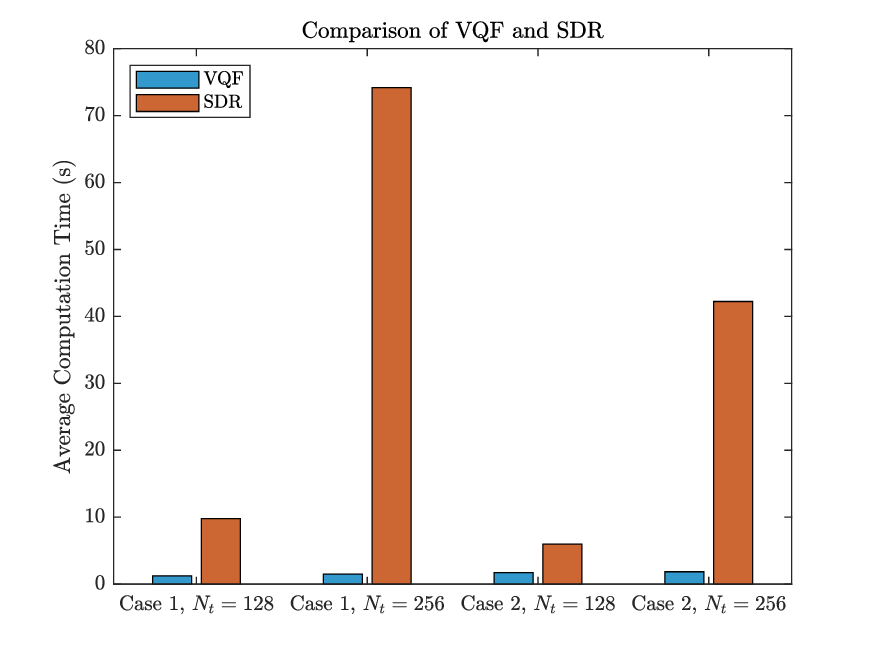}\\
    \caption{The Computation Complexity performance.}\label{Fig:RunningTime}
\end{figure}
\begin{figure}[!t]
    \centering
    \includegraphics[width=3.25in]{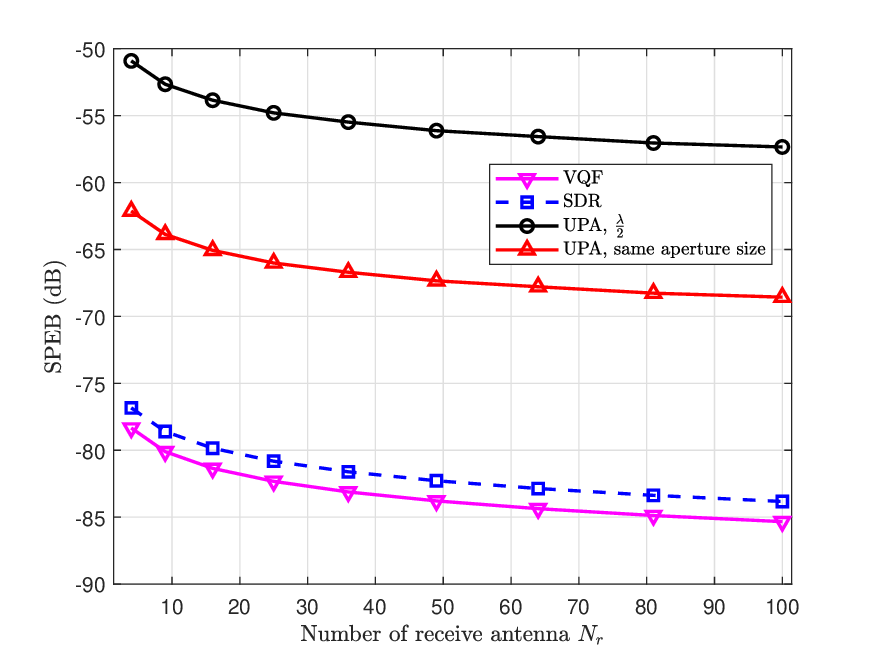}\\
    \caption{The SPEB performance versus ${N}_{r}$ under coplanar case.}\label{Fig:Optimality_Case1}
\end{figure}
\subsection{Algorithm Performance}
To evaluate the effectiveness of the proposed algorithms, three benchmarks are considered: 1) \textbf{SDR}: In this scheme, the BS is equipped with ${N}_{t}$ transmit antennas and ${N}_{r}$ receive antennas. The antenna interval is set to $\frac{{{\lambda_c}}}{2}$. The problem (5) is optimized via the SDR approach as \cite{FanLiu_22TSP}. 2) \textbf{UPA, $\frac{{{\lambda _c}}}{2}$}: In this scheme, the BS is equipped with a UPA transmit antennas. The antenna interval among each antennas is set to $\frac{{{\lambda_c}}}{2}$. The BF design is optimized by the SDR approach. 3) \textbf{UPA, same aperture size}: In this scheme, the UPA at the BS has the same aperture size as the UCA.   

\subsubsection{The Computation Complexity Performance}
In Fig.~\ref{Fig:RunningTime}, we compare the computation complexity of the proposed algorithm with different transmit antennas ${N}_{t}$ in both coplanar case and non-coplanar case. For coplanar case, the ST is in the coplane of UCA with ${\mathbf{p}}_{s}={[10,30^{\circ},0]^{\rm{T}}}$ while the ST under non-coplanar case is located at ${\mathbf{p}}_{s}={[8,30^{\circ},2]^{\rm{T}}}$. The receive antennas are set to ${N}_{r}=128$. The minimum communication requirement is set as ${\bar{\gamma}}=5$ dB. The initialization of ${\tilde{\mathbf{y}}}^{(0)}$ is determined by substituting ${\mathbf{w}}^{(0)}$ into \eqref{y_Case1_Update}. As shown in Fig.~\ref{Fig:RunningTime}, the proposed VQF consumes much less computation time than the \textbf{SDR} algorithm over four groups. The reason lies in that the proposed VQF solved the problem \eqref{Eq:PF_C1_QF} in a simplified vector form. The complexity of VQF is about ${\mathcal{O}}({N}_{t}^{2})$ while the \textbf{SDR} algorithm has the complexity at about ${\mathcal{O}}({N}_{t}^{3.5} )$. With the increase of the transmit antennas, the gap in the computation time between the two algorithms enlarges.      

\subsubsection{The SPEB Performance versus different ${N}_{r}$}
\begin{figure}[!t]
    \centering
    \includegraphics[width=3.25in]{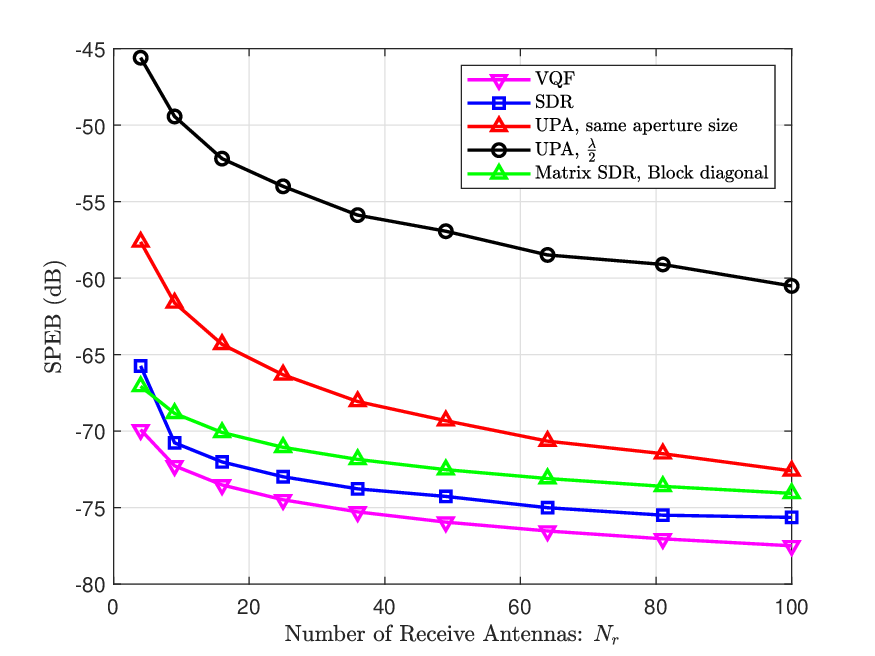}\\
    \caption{The SPEB performance versus ${N}_{r}$ under non-coplanar case.}\label{Fig:Optimality_Case2}
\end{figure}
In Fig.~\ref{Fig:Optimality_Case1}, we compare the SPEB performance of the proposed algorithm versus different receive antennas in coplanar case. The minimum communication requirement is set as ${\bar{\gamma}}=5$ dB and the number of transmit antennas is set as ${N}_{t}=256$. The aperture size is ${R}_{{\rm{CA}}}=0.4$ m. From Fig.~\ref{Fig:Optimality_Case1}, it can be seen that the SPEB decreased with the increment of the receive antennas. This is because that the SPEB is inversely proportional to ${N}_{r}$ as shown in {\textbf{Proposition~\ref{Pro:CRB_USW_UCA}}} and {\textbf{Proposition~\ref{Pro_FIM_Case2}}}. Compared with the UPA schemes, the proposed VQF algorithm achieves at least a 15 dB gain under coplanar case. This phenomenon illustrates that the UCA is more suitable for the coplanar ST. Meanwhile, the proposed VQF algorithm achieves a 2 dB gain than the \textbf{SDR} algorithm.

In Fig.~\ref{Fig:Optimality_Case2}, we compare the SPEB performance versus different receive antennas in non-coplanar case. The \textbf{SDR, Block diagonal} refers to the SDR algorithm to solve the problem \eqref{Eq:PF_SDR_Case2} with the initial FIM. The transmit power is set to ${P}_{\max}=35$ dBm and the communication requirement is set to ${\bar{\gamma}}=10$ dB. The number of transmit antennas is ${N}_{t}=256$ with the aperture size $R_{\rm{CA}} \approx 0.3654$ m. The ST is located at ${\mathbf{p}}_{s}=[4,30^{\circ},1 ]^{{\rm{T}}}.$ With the increase of the receive antennas, the SPEB performance elevated over all five schemes. Different from coplanar case, the performance gain between the UCA and \textbf{UPA, same aperture size} narrowed with more receive antennas. Compared with the \textbf{SDR, Block diagonal}, the proposed extended VQF algorithm achieves better SPEB performance. This is because the approximate diagonal FIM discard the terms $[{\mathbf{J}}_{{\bm{\gamma}}} ]_{(2,3)} $ and $[{\mathbf{J}}_{{\bm{\gamma}}} ]_{(3,2)}  $, which denotes the correlation between the perpendicular distance and the signed distance. As it has shown in {\textbf{Lemma \ref{Pro:Case2_decompose}}}, the optimal solution of problem \eqref{Eq:PF_C2_QF} satisfies ${\mathbf{w}}^{*} \in {\rm{span}} ( {\bm{\alpha}}_{t}, {\bm{\alpha}}_{t} \odot {\bar{\mathbf{h}}}_{s} )$. With the block diagonal FIM, the weight to minimize the ${\rm{CRB}}_{{\phi}_{s}} $ in the objective function decreased. While the ${\rm{CRB}}_{{\phi}_{s}}$ is much larger than ${\rm{CRB}}_{{\rho}_{s}} $ and ${\rm{CRB}}_{{y}_{s}} $, which indicates that the ${\rm{CRB}}_{{\phi}_{s}}$ plays a dominant role to determine the joint SPEB performance. 

\subsubsection{The SPEB Performance versus different $P_{\max}$}
\begin{figure}[!t]
    \centering
    \includegraphics[width=3.25in]{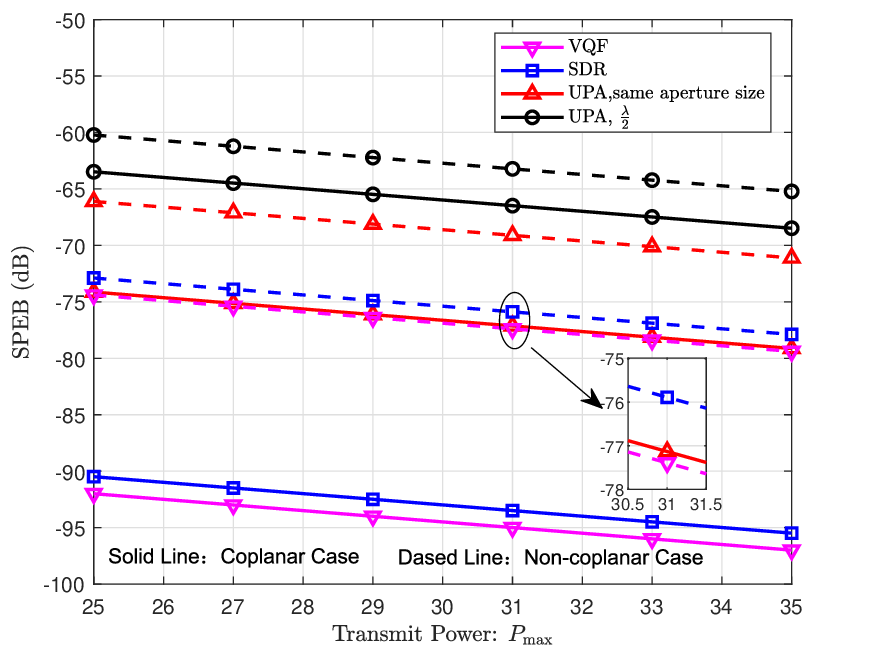}\\
    \caption{The SPEB performance versus different ${P}_{\max}$.}\label{Fig:SPEB_Pmax}
\end{figure}
\begin{figure}[!t]
    \centering
    \includegraphics[width=3.25in]{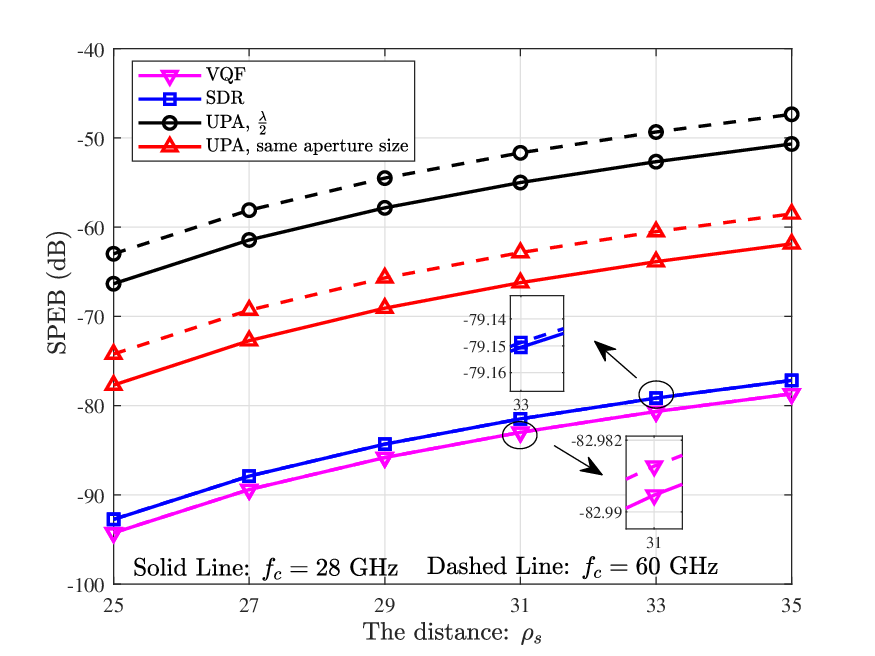}\\
    \caption{The SPEB performance versus different ${\rho}_{s}$ under coplanar case.}\label{Fig:SPEB_rhos}
\end{figure}
In Fig.~\ref{Fig:SPEB_Pmax}, we compare the SPEB performance of the proposed algorithms versus different transmit power ${P}_{\max}$. The ST is located at ${\mathbf{p}}_{s}= {[10,30^{\circ},0]^{{\rm{T}}}}$ in coplanar case while the ST in non-coplanar case is located at ${\mathbf{p}}_{s}= [8,30^{\circ},2]^{{\rm{T}}}$. The minimum communication requirement is set as ${\bar{\gamma}}=5$ dB. The number of transmit and receive antennas are set to ${N}_{t}=256$ and ${N}_{t}=100$, respectively. In Fig.~\ref{Fig:SPEB_Pmax}, it can be seen that the transmit power ${P}_{\max}$ is inversely proportional to the SPEB in dB under coplanar case and non-coplanar case, which is consistent with the \textbf{Proposition~\ref{Pro:CRB_USW_UCA}}. Compared with the \textbf{UPA, $\frac{{{\lambda_c}}}{2}$}, the UCA schemes reach at least 20 dB gain in SPEB. This is because the UCA has a larger aperture size than UPA with the same antenna interval. When the UPA scheme has the same aperture size as UCA, there is still about 10 dB gain in SPEB as the effective aperture size of UCA is larger than UPA. Therefore, the antenna array with a larger effective aperture size is prone to achieve a better radio sensing performance. Meanwhile, the proposed VQF in coplanar case has less position error than in non-coplanar case, which indicates that the UCA antenna array is more suitable for the ST in the coplanar case. 

\begin{figure}[!t]
    \centering
    \vspace{-0.2cm}
    \includegraphics[width=3.25in]{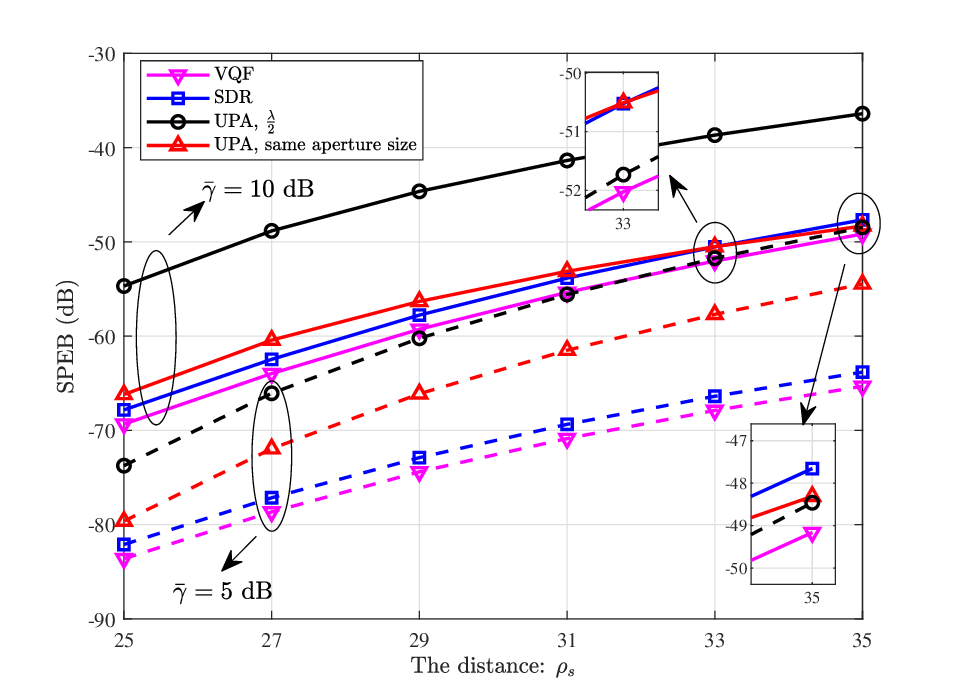}\\
    \caption{The SPEB performance versus different ${\rho}_{s}$ under non-coplanar case.} \label{Fig:SPEB_rhos_Case2}
\end{figure}
\subsubsection{The SPEB Performance versus different ${\rho}_{s} $ } 

In Fig.~\ref{Fig:SPEB_rhos}, we compare the SPEB performance of the proposed algorithm versus different ${\rho}_{s} $ in coplanar case, where ${f}_{c}=28$ GHz and $ {f}_{c}=60$ GHz are considered. The ST moves along ${\mathbf{p}}_{s}= {[{\rho}_{s},30^{\circ},0]^{\rm{T}}}$ with different ${\rho}_{s}$. The minimum communication requirement is set as ${\bar{\gamma}}=5$ dB and the transmit power is set as ${P}_{\max}=25$ dBm. The number of transmit and receive antennas are set to ${N}_{t}=256$ and ${N}_{t}=100$, respectively. The aperture size of UCA is ${R}_{{\rm{CA}}}=0.4365$ m. With the increased ${\rho}_{s}$, the SPEB performance got worse overall schemes. This is because the extended propagation distance leads to a decrease in both the strength of the reflected echo signal and the coefficient reflection ${\alpha}_{s}$. Compared with the other benchmarks, the proposed VQF achieves the best radio sensing performance due to the larger effective array aperture and the near-optimal solution-based eigenvalue decomposition. With the increased ${f}_{c}$, the SPEB increased because the effective propagation distance is reduced with the small wavelength ${\lambda}_{c}$, which indicates there is a tradeoff between the carrier frequency and the aperture size. Compared with the UPA schemes, the UCA schemes have a small variation over different carrier frequency ${f}_{c}$, which indicates that the UCA is more flexible among different carrier frequency.  

In Fig.~\ref{Fig:SPEB_rhos_Case2}, we compare the SPEB performance of the proposed algorithm versus different ${\rho}_{s}$ in non-coplanar case. Two scenarios are considered: ${\bar{\gamma}}=10$ dB and ${\bar{\gamma}}=5$dB. The ST moves along ${\mathbf{p}}_{s}= {[{\rho}_{s},30,1]^{\rm{T}}}$ with different ${\rho}_{s}$. The transmit power budget is set as ${P}_{\max}=25$ dBm. The number of transmit and receive antennas are set to ${N}_{t}=256$ and ${N}_{r}=100$, respectively. Similar to Fig.~\ref{Fig:SPEB_rhos}, the SPEB performance increase with the increased ${\rho}_{s}$. The proposed VQF achieves the best SPEB performance over the other benchmarks over two scenarios. When ${\bar{\gamma}}=5$dB, the SPEB performance gain increased with the increment in ${\rho}_{s}$, which indicates that the UCA is more suitable for estimation of far ST with the mild SINR scenario. With the increased communication requirement, the gap between the \textbf{UPA, $\frac{{\lambda}_{c}}{2}$} and the other benchmarks is enlarged, which indicates that the aperture size of the antenna array is more important than the antenna array geometry with the stringent communication requirement. 

\section{Conclusion} 
In this paper, a UCA-based NF ISAC framework has been proposed, where the USW model was invoked to unveil the dependence of antenna array design on radio sensing performance. The separate CRB over 3D coordinates and joint SPEB approximation were derived when the ST is located in the coplanar or the non-coplanar of the UCA. Constrained by the communication requirement and power budget, a SPEB minimization problem was formulated to optimize the BF design. When the ST is in the coplanar of the UCA, a low complexity VQF algorithm was proposed to minimize the joint SPEB by exploiting the dominant component of the partial derivative. When the ST is located in the non coplanar of the UCA, a block diagonal-based SDR method and an extended low-complexity VQF algorithm were proposed. Numerical results demonstrated that the diagonal FIM can be utilized to evaluate the joint SPEB performance in NF-ISAC systems with the negligible mismatch. The proposed VQF algorithms outperform the conventional SDR algorithm in the joint radio sensing performance and complexity. When ST located in the coplanar case and the perpendicular distance is more than twice of the signed distance in non-coplanar case, the UCA achieves better performance than UPA benefiting from the effective Rayleigh distance over different azimuth angle. 

\appendix

\subsection{Proof of {\textbf{Lemma}} \ref{Lemma_Case1_norm}}
\renewcommand{\theequation}{A.\arabic{equation}}
\setcounter{equation}{0} \label{App:Lemma_Case1_norm} 

Based on the definition of ${\mathbf{v}}_{l,2} $, it is clear that 

\begin{small}
	\vspace{-0.4cm}
	\begin{subequations}
		\begin{equation}
			\begin{split}
				 {\left\| { { {\mathbf{v}}_{l,2}} } \right\|_1}
				& \mathop \approx \limits^{(a)}  \frac{{N}_{l} }{2\pi} \int\limits_{0}^{2\pi } {\frac{{ {\rho}_{s}  - \cos x}}{{ \sqrt { {R}_{{\rm{CA}}}^{2} - 2 {\rho}_{s} {R}_{{\rm{CA}}} \cos x + {{\rho}_{s}^2}} }}} dx - {N_l} \\
				&  = {N_l} {\bm{\Upsilon}} (\frac{{{\rho_s}}}{{{R_{{\rm{CA}}}}}})  - {N_l},
			\end{split} 
		\end{equation}
		\begin{equation}\label{app_v_2}
			\begin{split}
				{\left\| { { {\mathbf{v}}_{l,2}} } \right\|_{2}^{2}}
				& = \sum\limits_{m=- \frac{{N}_{l}-1}{2} }^{ \frac{{N}_{l}-1}{2}} {\frac{{{{({\rho _s} - {R_{{\rm{CA}}}}\cos (\frac{{2\pi m }}{N} ))}^2}}} {{(R_{{\rm{CA}}}^2 + \rho _s^2 - 2{R_{{\rm{CA}}}}{\rho _s}\cos (\frac{{2\pi m }}{N} ))}}} \\
				& \; + {N_r} - 2{N_r} {\bm{\Upsilon}} (\frac{{{\rho _s}}}{{{R_{{\rm{CA}}}}}}),
			\end{split}
		\end{equation}
	\end{subequations}
\end{small}

\noindent where $\mathop \approx \limits^{(a)} $ comes from $\frac{{- a\cos (\frac{{2\pi m }}{N_{l}} )}}{{{{(a^2 + b^2 - 2ab\cos (\frac{{2\pi m }}{N_{l}}))}^{\frac{1}{2}}}}} \approx  \int\limits_{0}^{2\pi } {\frac{{ - R_{{\rm{CA}}} \cos x}}{{ \sqrt { {R}_{{\rm{CA}}}^{2} - 2 {\rho}_{s} {R}_{{\rm{CA}}} \cos x + {{\rho}_{s}^2}} }}} dx $ when $ {\delta} \gg 1$ and ${N}_{r} \gg 2\pi $. Similarly, the expression of $ {\Delta}_{1}  \buildrel \Delta \over =  \sum\limits_{m=- \frac{{N}_{l}-1}{2} }^{ \frac{{N}_{l}-1}{2}} { \frac{{{{ {{R}_{{\rm{CA}}}}^{2} \cos^{2}(\frac{{2\pi m }}{N} )  }}}} {{ \left[  {\rho}_{{\rm{CA}}}^{\left(m\right)} \right]^{2}  }} } $ is given by 

\begin{small}
	\begin{equation}
            {\Delta}_{1} \approx  \left\{ \begin{array}{l}
				\frac{{R_{{\rm{CA}}}^2{N_{l}}}}{{(R_{{\rm{CA}}}^2 - \rho _s^2)}} - \frac{{{N_{l}}}}{2},  \; {\rho _s} < {R_{{\rm{CA}}}}\\
				\frac{{R_{{\rm{CA}}}^2{N_r}(\rho _s^2 + R_{{\rm{CA}}}^2)}}{{2\rho _s^2(\rho _s^2 - R_{{\rm{CA}}}^2)}}, \; {\rho _s} > {R_{{\rm{CA}}}}
			\end{array} \right. . 
	\end{equation}
\end{small}

\noindent By substituting the mentioned equations into \eqref{app_v_2}, \eqref{v_2_int_2} can be derived.

\subsection{Proof of {\textbf{Lemma}} \ref{Lemma_Case2_norm} }
\renewcommand{\theequation}{B.\arabic{equation}}
\setcounter{equation}{0}
\label{App:Lemma_Case2_norm} 
 We first take $\left\| {{\mathbf{v}}_{l,2,2}} \right\|_1$ as an example to illustrate \eqref{Auxiliary_Periodical_0}. Due to the periodical property of sine function, $\left\| {{\mathbf{v}}_{l,2,2}} \right\|_1 $ can be derived as
        \begin{equation}\label{Case2_norm1_2}
           \begin{split}
               & \left\| {{\mathbf{v}}_{l,2,2}} \right\|_1 = \sum\limits_{m =  - \frac{{{N_l} - 1}}{2}}^{m = \frac{{{N_l} - 1}}{2}} {\frac{{\sin ({\phi _s} - \phi _s^{(m)})}}{{\sqrt {{{(\rho_{{\rm{CA}}}^{(m)})}^2} + y_s^2} }}} \\
               & \approx \frac{{{N_{l}}}}{{2\pi }}\int\limits_0^{2\pi } {\frac{{\sin x}} {{\sqrt {R_{{\rm{CA}}}^2 + \rho_s^2 + y_s^2 - 2{R_{{\rm{CA}}}}{\rho_s}\cos x} }}dx} = 0.
           \end{split}
        \end{equation}It is noted that $ \left\| {{{\mathbf{v}}_{l,2,1}} \odot {{\mathbf{v}}_{l,2,2}}} \right\|_2^1$ and $ \left\| {{{\mathbf{v}}_{l,2,3}} \odot {{\mathbf{v}}_{l,2,2}}} \right\|_2^1$ both contain a similar structure to \eqref{Case2_norm1_2}. The \eqref{Auxiliary_Periodical_0} can be derived. Based on the definition of ${{\mathbf{v}}_{l,2,3}} $, $\left\| {{\mathbf{v}}_{l,2,3}} \right\|_1$ can be further derived as $ \left\| {{\mathbf{v}}_{l,2,3}} \right\|_1=\sum\limits_{m =  - \frac{{{N_{l}} - 1}}{2}}^{m = \frac{{{N_{l}} - 1}}{2}} {\frac{1}{{\sqrt {\rho _{\rm{s}}^2 + y_s^2} }} - \frac{1}{{\sqrt {{{(\rho _{{\rm{CA}}}^{(m)})}^2} + y_s^2} }}}=\frac{{{N_{l}}}}{{\sqrt {\rho _{s}^2 + y_s^2} }} - \sum\limits_{m =  - \frac{{{N_{l}} - 1}}{2}}^{m = \frac{{{N_{l}} - 1}}{2}} {\frac{1}{{\sqrt {{{(\rho _{{\rm{CA}}}^{(m)})}^2} + y_s^2} }}}$. Let ${\gamma}_{1}=2{R}_{{\rm{CA}}} {\rho}_{s}, {\gamma}_{2}={y}_{s}^{2}+R_{{\rm{CA}}}^{2}+{\rho}_{s}^{2}$, we have $\left\| {{\mathbf{v}}_{l,2,3}} \right\|_1 \mathop  = \limits^{(a)} \frac{{{N_l}}}{{\sqrt {\rho _{\rm{s}}^2 + y_s^2} }} - \frac{{2{N_l}}}{{\pi \sqrt {{\gamma _2} + {\gamma _1}} }}{\mathbf{K}}\left( {\sqrt {\frac{{2{\gamma _1}}}{{{\gamma _2} + {\gamma _1}}}} } \right)$, where $\mathop  = \limits^{(a)}$ comes from (Eq.3.671) in~\cite{Integral_reference} and ${\mathbf{K}} \left( \alpha \right)=\int_0^{\pi /2} {\frac{1}{{\sqrt {1 - {{\alpha}^2}{{\sin }^2}\theta } }}} {\kern 1pt} d\theta $ denotes the complete elliptic integral of the first kind~\cite{Integral_reference}. Similarly, $\left\| {{{\mathbf{v}}_{l,2,2}}} \right\|_2^2 $ can be derived as $ \left\| {{{\mathbf{v}}_{l,2,2}}} \right\|_2^2 = \sum\limits_{m =  - \frac{{{N_{l}} - 1}}{2}}^{m = \frac{{{N_{l}} - 1}}{2}} {\frac{{{{\sin }^2}({\phi _s} - \phi _s^{(m)})}}{{{{(\rho_{{\rm{CA}}}^{(m)})}^2} + y_s^2}}}  \approx \frac{{{N_l}}}{{2\pi }}\int\limits_0^{2\pi } {\frac{{{{\sin }^2}x}}{{{\gamma _2} - {\gamma _1}\cos x}}} dx = \frac{{{N_l}}}{{4\pi }}\int\limits_0^{2\pi } {\frac{{(1 - \cos 2x)}}{{{\gamma _2} - {\gamma _1}\cos x}}} dx.$. Recalling that $ \int\limits_0^\pi  {\frac{1}{{1 \pm a\cos x}}dx}  = \frac{\pi }{{\sqrt {1 - {a^2}} }}$ and $\int\limits_0^\pi  {\frac{{\cos 2x}}{{1 \pm a\cos x}}dx}  = \frac{\pi }{{\sqrt {1 - {a^2}} }}\frac{{{{[\sqrt {1 - {a^2}}  - 1]}^2}}}{{{a^2}}} $~\cite{Integral_reference}, we have $ \left\| {{{\mathbf{v}}_{l,2,2}}} \right\|_2^2 = \frac{{{N_{l}}}}{{\gamma _1^2}} {\rm{( {\gamma _2} - \sqrt {\gamma _2^2 - \gamma _1^2} )}}.$ Following the similar derivation, \eqref{Case2_norm2_1} and \eqref{Case2_norm2_3} can also be derived. 

\subsection{Proof of {\textbf{Proposition~\ref{Pro_FIM_Case2}}} }
\label{App:Pro_FIM_Case2}

\renewcommand{\theequation}{C.\arabic{equation}}
\setcounter{equation}{0}
Considering that ${\mathbf{A}}_{2}={\bm{\alpha}}_{t,2}^{{\rm{H}}}{\bm{\alpha}}_{r,2}$ and the derivatives ${\dot{{\bm{\alpha}}}}_{l,i}$ are given by \eqref{Case2_deriva_0}, the partial derivative of ${\mathbf{A}}_{2}$ over $\left\{{ {\phi}_{s}, {\rho}_{s}, {y}_{s}} \right\} $ can be expressed as
            \begin{subequations}
                \begin{equation}
                    {\dot{{\mathbf{A}}}}_{2,{\rho _s}} = \frac{{\partial {{\mathbf{A}_2}}}} {{\partial {\rho _s}}}= \frac{{j2\pi }}{{{\lambda _c}}}[{\rm{diag}}({{\mathbf{v}}_{r,2,1}}){{\mathbf{A}}_2} - {{\mathbf{A}}_2} {\rm{diag}}({{\mathbf{v}}_{t,2,1}}) ],
                \end{equation}
                \begin{equation}
                    {\dot{{\mathbf{A}}}}_{2,{\phi_s}} = \frac{{\partial {{\mathbf{A}_2}}}}{{\partial {\phi _s}}}=\frac{{j\pi {\gamma}_{1} }} {{{\lambda_c}}}[{{\mathbf{A}}_2} {\rm{diag}}({{\mathbf{v}}_{t,2,2}}) - {\rm{diag}}({{\mathbf{v}}_{r,2,2}}){{\mathbf{A}}_2}],
                \end{equation}
                \begin{equation}
                    {\dot{{\mathbf{A}}}}_{2,{y_s}}=\frac{{\partial {{\mathbf{A}_2}}}}{{\partial {y_s}}}= \frac{{j2\pi {y_s}}}{{{\lambda _c}}}[ {\rm{diag}}({{\mathbf{v}}_{r,2,3}}){ {\mathbf{A}}_2} - {{\mathbf{A}}_2} {\rm{diag}}({{\mathbf{v}}_{r,2,3}}) ].
                \end{equation}
            \end{subequations}Recall that \eqref{FIM_element0}, the FIM elements $\left[ {{\mathbf{J}}_{{\bm{\gamma}}}} \right]_{i,j}$ can be expressed as $ \left[ {{\mathbf{J}}_{{\bm{\gamma}}}} \right]_{i,j}= \frac{{2{{\left| {{\alpha_{{s}}}} \right|}^2} L {\Re [{\rm{Tr}} ( { {\mathbf{R}}_{x} }  {\dot{\mathbf{A}}}_{2,i}^{\rm{H}} {{\dot{\mathbf{A}}}_{2,j}} )] } }} {{{\sigma ^2}}}$. We first illustrate the property of the cross terms where $i \neq j$. Considering that $\left[ {{\mathbf{J}}_{{\bm{\gamma}}}} \right]_{1,2} $ and $ \left[ {{\mathbf{J}}_{{\bm{\gamma}}}} \right]_{2,3} $ depends on the value of ${\dot{\mathbf{A}}}_{{\rho_s}}^{\rm{H}} {{\dot{\mathbf{A}}}_{{\phi_s}}} $ and $ {\dot{\mathbf{A}}}_{{\phi_s}}^{\rm{H}} {{\dot{\mathbf{A}}}_{{y_s}}}$, respectively,  ${\dot{\mathbf{A}}}_{{\rho_s}}^{\rm{H}} {{\dot{\mathbf{A}}}_{{\phi_s}}} $ is a linear function of $\left\| {{{\mathbf{v}}_{l,2,1}} \odot {{\mathbf{v}}_{l,2,2}}} \right\|_2^1$ and ${\dot{\mathbf{A}}}_{{\phi_s}}^{\rm{H}} {{\dot{\mathbf{A}}}_{{y_s}}}$ is a linear function of $\left\| {{{\mathbf{v}}_{l,2,1}} \odot {{\mathbf{v}}_{l,2,3}}} \right\|_2^1 $. Based on \eqref{Auxiliary_Periodical_0}. we have $\left[ {{\mathbf{J}}_{{\bm{\gamma}}}} \right]_{1,2}=\left[ {{\mathbf{J}}_{{\bm{\gamma}}}} \right]_{2,1}^{\rm{H}} =\left[ {{\mathbf{J}}_{{\bm{\gamma}}}} \right]_{2,3}=\left[ {{\mathbf{J}}_{{\bm{\gamma}}}} \right]_{3,2}^{\rm{H}} \approx 0$. Thus, the FIM can be approximated via the block diagonal matrix as shown in \eqref{FIM_bd}. According to the Matrix Inverse Lemma~\cite{MatrixTheory}, the inverse of $ {\mathbf{J}}_{ {\mathbf{p}}_{s} } $ can be derived as 
        \begin{equation}
        	{\mathbf{J}}_{ {\mathbf{p}}_{s} }^{-1}=\frac{ {D}_{ {\mathbf{p}}_{s} } }{ \left[ {\mathbf{J}}_{ {\mathbf{p}}_{s} } \right]_{{\rm{11}}} \left[ {\mathbf{J}}_{ {\mathbf{p}}_{s} } \right]_{{\rm{22}}} \left[ {\mathbf{J}}_{ {\mathbf{p}}_{s} } \right]_{{\rm{33}}} },
        \end{equation}where ${D}_{ {\mathbf{p}}_{s} } $ denotes the cofactor matrix of ${\mathbf{J}}_{ {\mathbf{p}}_{s} }$. Based on \textbf{Lemma 1}, we have that $ {\mathbf{A}}_{2}^{\rm{H}} {\rm{diag}} \left( {\mathbf{v}}_{r,1} \right) {\mathbf{A}}_{2} = {\mathbf{A}}_{2}^{\rm{H}} {\rm{diag}} \left( {\mathbf{v}}_{r,2} \right) {\mathbf{A}}_{2}  = 0 $ and ${\mathbf{A}}_{2}^{\rm{H}} {\rm{diag}}({\mathbf{v}}_{r,2,3}){{\mathbf{A}}_{2} }= \frac{ {{R_{{\rm{CA}}}}{N_{r}}} {{\bar{{\mathbf{A}}}}_{2}} }{2} $. Let ${\bar{\mathbf{A}}}_{2}={\bm{\alpha}}_{t,2} {\bm{\alpha}}_{t,2}^{\rm{H}} $, $\left\{ {\dot{\mathbf{A}}}_{{{\phi}_s}}^{\rm{H}} {{\dot{\mathbf{A}}}_{{\phi_s}}}, {\dot{\mathbf{A}}}_{{\rho_s}}^{\rm{H}} {{\dot{\mathbf{A}}}_{{\rho_s}}}, {\dot{\mathbf{A}}}_{{y_s}}^{\rm{H}} {{\dot{\mathbf{A}}}_{{y_s}}}, {\dot{\mathbf{A}}}_{{\rho_s}}^{\rm{H}} {{\dot{\mathbf{A}}}_{{y_s}}}  \right\} $ can be expressed as \eqref{General_AA_Diagonal}. With the isotropic BF design at the BS, ${\rm{Tr}}({{\mathbf{R}}_{x}} {\dot{{\mathbf{A}}}}_{{\rho_s}}^{\rm{H}} {{\dot{{\mathbf{A}}}}_{{\rho_s}}}) $ and ${\rm{Tr}}({{\mathbf{R}}_{x}} {\dot{{\mathbf{A}}}}_{{y_s}}^{\rm{H}} {{\dot{{\mathbf{A}}}}_{{y_s}}}) $ can be further derived as \eqref{Trace_Case2_USW}, where ${\phi}_{{\rho}_{s}} {\rm{(}}{\phi _s}{\rm{,}}{\rho _s}{\rm{,}}{{\rm{y}}_s}{\rm{)}} =\frac{{(\rho _s^2 + \frac{{R_{{\rm{CA}}}^2}}{2} - {\gamma _2} + \frac{{\gamma _2^2}}{{8\rho _s^2}})}}{{\sqrt {\gamma _2^2 - \gamma _1^2} }} + \frac{{\sqrt {\gamma _2^2 - \gamma _1^2} }}{{8\rho _s^2}} - \frac{{4\rho _s^2{K^2}\left( {\sqrt {\frac{{2{\gamma _1}}}{{{\gamma _2} + {\gamma _1}}}} } \right)}}{{{\pi ^2}({\gamma _2} + {\gamma _1})}} + \frac{{{\gamma _1}\kappa {\rm{(}}{\gamma _2}{\rm{,}}{\gamma _1}{\rm{)}}K\left( {\sqrt {\frac{{2{\gamma _1}}}{{{\gamma _2} + {\gamma _1}}}} } \right)}}{{{\pi ^2}\sqrt {{\gamma _2} + {\gamma _1}} }} - \frac{{R_{{\rm{CA}}}^2{\kappa ^2}{\rm{(}}{\gamma _2}{\rm{,}}{\gamma _1}{\rm{)}}}}{{4{\pi ^2}}} $. $\mathop  = \limits^{(a)} $ comes from that $ {\rm{Tr}}( {\rm{diag}}({{\mathbf{v}}_{t,2,i}}){{\bar{{\mathbf{A}}}}_{2}} {\rm{diag}}({{\mathbf{v}}_{t,2,i}})) = \left\| {{{\mathbf{v}}_{t,2,i}}} \right\|_2^2, i \in \left\{1,2,3\right\}$ and $\mathop  = \limits^{(b)} $ comes from that ${\rm{Tr}}({\rm{diag}}({\mathbf{v}}_{r,2,3}){{\bar{{\mathbf{A}}}}_{2}})={\rm{Tr}}({{\bar{{\mathbf{A}}}}_{2}} {\rm{diag}}({\mathbf{v}}_{r,2,3}))=\frac{{{R_{{\rm{CA}}}}{N_{t}}}}{2} $,  ${\rm{Tr}}({\rm{diag}}({\mathbf{v}}_{t,2,3}){{\bar{{\mathbf{A}}}}_{2}} {\rm{diag}}({\mathbf{v}}_{t,2,3}))=\frac{ {N_{t}} ({R_{{\rm{CA}}}^2}-2{\rho_s^2} )}{4}$. By substituting \eqref{Trace_Case2_USW} into \eqref{General_CRB_Ori}, \eqref{General_CRB_Iso_USW} can be derived. 
        
        \begin{figure*}
       \begin{small}
    \begin{subequations}\label{General_AA_Diagonal}
                \begin{equation}
        			{\dot{{\mathbf{A}}}}_{2,{\phi_s}}^{\rm{H}} {\dot{{\mathbf{A}}}}_{2,{\phi _s}} =\frac{{4{\pi ^2}R_{{\rm{CA}}}^2\rho_s^2}}{{\lambda_c^2}}[{N_{r}} {\rm{diag}}({{\mathbf{v}}_{t,2,2}}){ {\bar{{\mathbf{A}}}}_{2}} {\rm{diag}}({{\mathbf{v}}_{t,2,2}}) + \left\| {{{\mathbf{v}}_{r,2,2}}} \right\|_2^2{{\bar{{\mathbf{A}}}}}_{2} ],
        		\end{equation}
        		\begin{equation}
        			\begin{split}
        				{\dot{{\mathbf{A}}}}_{2,{\rho _s}}^{\rm{H}} {\dot{{\mathbf{A}}}}_{2,{\rho_s}} =& \frac{{4{\pi ^2}}}{{\lambda _c^2}}\{ {\left\| {{{\mathbf{v}}_{r,2,1}}} \right\|_2} {{\bar{{\mathbf{A}}}}_{2}} - {\left\| {{{\mathbf{v}}_{r,2,1}}} \right\|_1}[ {\rm{diag}}({{\mathbf{v}}_{t,2,1}}){{\bar{{\mathbf{A}}}}_{2}} + { {\bar{{\mathbf{A}}}}_{2}} {\rm{diag}}({{\mathbf{v}}_{t,2,1}}) ] + {N_r}{\rm{diag}}({{\mathbf{v}}_{t,2,1}}){{\bar{{\mathbf{A}}}}_{2}} {\rm{diag}}({{\mathbf{v}}_{t,2,1}}) \}, 
        			\end{split}
        		\end{equation}
        	   \begin{equation}
        	   	\begin{split}
        	   	  {\dot{{\mathbf{A}}}}_{2,{y_s}}^{\rm{H}} {\dot{{\mathbf{A}}}}_{2,{y_s}} & = \frac{{4{\pi ^2}y_s^2}}{{\lambda _c^2}}\{ \left\| {{{\mathbf{v}}_{r,2,3}}} \right\|_2^2{{\bar{{\mathbf{A}}}}_{2}} - {\left\| {{\mathbf{v}}_{r,2,3}} \right\|_1}[{\rm{diag}}({{\mathbf{v}}_{t,2,3}}) {{\bar{{\mathbf{A}}}}_{2}} + { {\bar{{\mathbf{A}}}}_{2}} {\rm{diag}}({{\mathbf{v}}_{t,2,3}})]+ {{N}_r} {\rm{diag}} ({{\mathbf{v}}_{t,2,3}}) {{\bar{{\mathbf{A}}}}_{2}} {\rm{diag}} ({{\mathbf{v}}_{t,2,3}})\}, 
        	   	\end{split}
        	   \end{equation}
                \hrulefill
        	\end{subequations}
           \end{small}	
        \end{figure*}
        
        \begin{figure*}   
         \begin{small}
             \begin{subequations}\label{Trace_Case2_USW}
                \begin{equation}
                    {\rm{Tr}} \left[ {  {{\mathbf{R}}_{x} } {\dot{{\mathbf
                    A}}}_{2,{\phi_s}}^{\rm{H}} {{\dot{{\mathbf
                    A}}}_{2,{\phi_s}}} } \right] \mathop  = \limits^{(a)}  \frac{{8{\pi ^2}R_{{\rm{CA}}}^2\rho _s^2{P_{\max }}{N_r}({\gamma _2} - \sqrt {\gamma _2^2 - \gamma _1^2} )}}{{\lambda _c^2\gamma _1^2}}, \;  {\rm{Tr}} \left[ {  {\mathbf{R}}_{x} {\dot{{\mathbf
                    A}}}_{2,{\rho_s}}^{\rm{H}} {{\dot{{\mathbf
                    A}}}_{2,{\rho_s}}} } \right] \mathop  = \limits^{(b)} \frac{{4{\pi ^2}{P_{\max }}{N_r} {\Phi}_{{\rho}_{s}} {\rm{(}}{\phi _s}{\rm{,}}{\rho _s}{\rm{,}}{{\rm{y}}_s}{\rm{)}}}}{{\lambda _c^2}},
                \end{equation}
                \begin{equation}
                    {\rm{Tr}} \left[ {  { {\mathbf{R}}_{x} } {\dot{{\mathbf
                    A}}}_{2,y_s}^{\rm{H}} {{\dot{{\mathbf
                    A}}}_{2,y_s}} } \right]=\frac{{8{\pi ^2}y_s^2{P_{\max }}{N_r}}}{{\lambda _c^2}}
                    \left[ {  \frac{1}{{\sqrt {\gamma _2^2 - \gamma _1^2} }} - \frac{4}{{{\pi ^2}({\gamma _2} + {\gamma _1})}}{K^2}\left( {\sqrt {\frac{{2{\gamma _1}}}{{{\gamma _2} + {\gamma _1}}}} } \right) } \right],
                \end{equation}
                 \hrulefill
            \end{subequations}          
            \end{small}
        \end{figure*}

\noindent

\subsection{Proof of {\textbf{Lemma}} ~\ref{Pro:Case2_decompose}}
\renewcommand{\theequation}{D.\arabic{equation}}
\setcounter{equation}{0}
\label{App:Case2_decompose} 
    It has shown that $ {\mathop{\min }} \;{\rm{CRB}}_{{\phi
    }_{s}}$ is equivalent to ${\mathop{\max }} \; {  \left| { {{\mathbf{h}}_{s}^{\rm{H}}}{{\mathbf{w}}} } \right|^2} $ and ${\mathop{\min }} \; {\rm{CRB}}_{ {\rho}_{s} } $ can be approximated by $ {\mathop{\max }} \; {  \left| { {{\tilde{{\bm{\alpha}}}}_{{\rho}_{s}}^{\rm{H}}}{{\mathbf{w}}} } \right|^2}$ still holds for non-coplanar case. Meanwhile, $[ {{\bm{\alpha}}_{t,2}} \odot {{\mathbf{v}}_{t,2,3}} ]^{\rm{H}} {{\bm{\alpha}}_{t,2}} \in {\mathbb{R}}$, which indicates that $ {{\bm{\alpha}}_{t,2}} \odot {{\mathbf{v}}_{t,2,3}} $ is parallel to ${{\bm{\alpha}}_{t,2}}$. In other words, there is an eigenvector ${\bar{{\bm{\alpha}}}}_{2,3} \in {\mathbb{C}}^{{N}_{t} \times 1}$ of ${\dot{{\mathbf{A}}}}_{2,{y_s}}^{\rm{H}} {\dot{{\mathbf{A}}}}_{2,{y_s}} $ which satisfied ${\bar{{\bm{\alpha}}}}_{2,3}^{\rm{H}} {\bm{\alpha}}_{t} \in {\mathbb{R}}$. The expression of ${\bar{{\bm{\alpha}}}}_{2,3} $ is given by $ {\bar{{\bm{\alpha}}}}_{2,3}=  \frac{{ {{\bm{\alpha}} }_{t,2}^{\rm{H}} {\dot{{\mathbf{A}}}}_{{\rho_s}}^{\rm{H}} {\dot{{\mathbf{A}}}}_{{\rho_s}} {{\bm{\alpha}} }_{t,2} } } {{ {\bar{{\bm{\alpha}} }}_{t}^{\rm{H}} {\bm{\alpha}}_{t} }} \odot {\bar{{\bm{\alpha}} }}_{t}$. Thus, ${\mathop{\min }} \;{\rm{CRB}}_{{\rho
    }} $ can be transformed into ${\mathop{\max }} \; { \left| { { {\bar{{\bm{\alpha}}}}_{2,3}^{\rm{H}}} {{\mathbf{w}}} } \right|^2} $. 
		
	\vspace{-0.2cm}
	\bibliographystyle{IEEEtran}
	\bibliography{IEEEabrv,ref_AntennaGeometry}   

\begin{thebibliography}{10}
\providecommand{\url}[1]{#1}
\csname url@samestyle\endcsname
\providecommand{\newblock}{\relax}
\providecommand{\bibinfo}[2]{#2}
\providecommand{\BIBentrySTDinterwordspacing}{\spaceskip=0pt\relax}
\providecommand{\BIBentryALTinterwordstretchfactor}{4}
\providecommand{\BIBentryALTinterwordspacing}{\spaceskip=\fontdimen2\font plus
\BIBentryALTinterwordstretchfactor\fontdimen3\font minus \fontdimen4\font\relax}
\providecommand{\BIBforeignlanguage}[2]{{%
\expandafter\ifx\csname l@#1\endcsname\relax
\typeout{** WARNING: IEEEtran.bst: No hyphenation pattern has been}%
\typeout{** loaded for the language `#1'. Using the pattern for}%
\typeout{** the default language instead.}%
\else
\language=\csname l@#1\endcsname
\fi
#2}}
\providecommand{\BIBdecl}{\relax}
\BIBdecl

\bibitem{Na_conf4}
N.~Xue, X.~Mu, Y.~Chen, and Y.~Liu, ``The circular antenna array framework for near-field radio sensing,'' in \emph{Proc. IEEE Inter. Conf. Commun. (ICC) Workshop}, 2025, accepted.

\bibitem{Kim_24WC}
S.~Kim, J.~Moon, J.~Kim, Y.~Ahn, D.~Kim, S.~Kim, K.~Shim, and B.~Shim, ``Role of sensing and computer vision in 6{G} wireless communications,'' \emph{IEEE Wireless Commun.}, vol.~31, no.~5, pp. 264--271, Oct. 2024.

\bibitem{Saad20Network}
W.~Saad, M.~Bennis, and M.~Chen, ``A vision of 6{G} wireless systems: Applications, trends, technologies, and open research problems,'' \emph{IEEE Network}, vol.~34, no.~3, pp. 134--142, May. 2019.

\bibitem{Metaverse}
M.~Xu, W.~C. Ng, W.~Y.~B. Lim, J.~Kang, Z.~Xiong, D.~Niyato, Q.~Yang, X.~Shen, and C.~Miao, ``A full dive into realizing the edge-enabled metaverse: Visions, enabling technologies, and challenges,'' \emph{IEEE Commun. Surv. \& Tut.}, vol.~25, no.~1, pp. 656--700, 1st Quart. 2023.

\bibitem{Letaief_19CM}
K.~B. Letaief, W.~Chen, Y.~Shi, J.~Zhang, and Y.-J.~A. Zhang, ``The roadmap to 6g: Ai empowered wireless networks,'' \emph{{IEEE} Commun. Mag.}, vol.~57, no.~8, pp. 84--90, 2019.

\bibitem{XiaodanShan_22JSAC}
X.~Shao, C.~You, W.~Ma, X.~Chen, and R.~Zhang, ``Target sensing with intelligent reflecting surface: Architecture and performance,'' \emph{{IEEE} J. Sel. Areas Commun.}, vol.~40, no.~7, pp. 2070--2084, Jul. 2022.

\bibitem{NF_yw}
Y.~Liu, Z.~Wang, J.~Xu, C.~Ouyang, X.~Mu, and R.~Schober, ``Near-field communications: A tutorial review,'' \emph{IEEE Open J. Commun. Soc.}, vol.~4, pp. 1999--2049, Aug. 2023.

\bibitem{ZidongWu_23JSAC}
Z.~Wu and L.~Dai, ``Multiple access for near-field communications: {SDMA} or {LDMA}?'' \emph{IEEE J. Sel. Areas Commun.}, vol.~41, no.~6, pp. 1918--1935, May. 2023.

\bibitem{Haiyang_22TWC}
H.~Zhang, N.~Shlezinger, F.~Guidi, D.~Dardari, M.~F. Imani, and Y.~C. Eldar, ``Beam focusing for near-field multiuser {MIMO} communications,'' \emph{IEEE Trans. Wireless Commun.}, vol.~21, no.~9, pp. 7476--7490, Sept. 2022.

\bibitem{Alessio_22JSTSP}
A.~Fascista, M.~F. Keskin, A.~Coluccia, H.~Wymeersch, and G.~Seco-Granados, ``{RIS}-aided joint localization and synchronization with a single-antenna receiver: Beamforming design and low-complexity estimation,'' \emph{IEEE J. Sel. Topics Signal Process.}, vol.~16, no.~5, pp. 1141--1156, May. 2022.

\bibitem{Haocheng_3DLocalization}
H.~Hua, J.~Xu, and Y.~C. Eldar, ``Near-field 3{D} localization via {MIMO} radar: {Cram{\'e}r--rao} bound analysis and estimator design,'' \emph{IEEE Trans. Signal Process.}, vol.~72, pp. 3879--3895, Aug. 2024.

\bibitem{Zhaolin23_NF}
Z.~Wang, X.~Mu, and Y.~Liu, ``Near-field integrated sensing and communications,'' \emph{IEEE Commun. Lett.}, vol.~27, no.~8, pp. 2048--2052, May. 2023.

\bibitem{Haochen_JSAC}
H.~Li, Z.~Wang, X.~Mu, P.~Zhiwen, and Y.~Liu, ``Near-field integrated sensing, positioning, and communication: A downlink and uplink framework,'' \emph{IEEE J. Sel. Areas Commun.}, vol.~42, no.~9, pp. 2196--2212, Sept. 2024.

\bibitem{Boqun_JSTSP}
B.~Zhao, C.~Ouyang, Y.~Liu, X.~Zhang, and H.~V. Poor, ``Modeling and analysis of {N}ear-{F}ield {ISAC},'' \emph{IEEE J. Sel. Topics Signal Process.}, vol.~18, no.~4, pp. 678--693, May. 2024.

\bibitem{Haochen_NF-ISAC}
H.~Hua, J.~Xu, and R.~Zhang, ``Near-field integrated sensing and communication with extremely large-scale antenna array,'' \emph{arXiv preprint arXiv:2407.17237}, 2024.

\bibitem{Zidong_24TWC}
Z.~Wu, M.~Cui, and L.~Dai, ``Enabling more users to benefit from near-field communications: From linear to circular array,'' \emph{IEEE Trans. Wireless Commun.}, vol.~23, no.~4, pp. 3735--3748, 2024.

\bibitem{24TCOM_26}
W.~Tan and S.~Ma, ``Antenna array topologies for mmwave massive mimo systems: Spectral efficiency analysis,'' \emph{IEEE Trans. Veh. Technol.}, vol.~71, no.~12, pp. 12\,901--12\,915, 2022.

\bibitem{Yuxin_23WCL}
Y.~Xie, B.~Ning, L.~Li, and Z.~Chen, ``Near-field beam training in {THz} communications: The merits of uniform circular array,'' \emph{IEEE Wireless Commun. Lett.}, vol.~12, no.~4, pp. 575--579, April. 2023.

\bibitem{YJeon_21TWC}
Y.~Jeon, G.-T. Gil, and Y.~H. Lee, ``Design and analysis of {LoS-MIMO} systems with uniform circular arrays,'' \emph{IEEE Trans. Wireless Commun.}, vol.~20, no.~7, pp. 4527--4540, Jul. 2021.

\bibitem{Mathews_94TSP}
C.~Mathews and M.~Zoltowski, ``Eigenstructure techniques for 2-d angle estimation with uniform circular arrays,'' \emph{IEEE Trans. Signal Process.}, vol.~42, no.~9, pp. 2395--2407, Sept. 1994.

\bibitem{Kangjian_24JSTSP}
K.~Chen, C.~Qi, G.~Y. Li, and O.~A. Dobre, ``Near-field multiuser communications based on sparse arrays,'' \emph{IEEE J. Sel. Topics Signal Process.}, vol.~18, no.~4, pp. 619--632, May. 2024.

\bibitem{Ahmet_24JSTSP}
A.~M. Elbir, A.~Abdallah, A.~Celik, and A.~M. Eltawil, ``Antenna selection with beam squint compensation for integrated sensing and communications,'' \emph{IEEE J. Sel. Topics Signal Process.}, vol.~18, no.~5, pp. 857--870, Jul. 2024.

\bibitem{Xiangrong_24JSTSP}
X.~Wang, W.~Zhai, X.~Wang, M.~G. Amin, and K.~Cai, ``Wideband near-field integrated sensing and communication with sparse transceiver design,'' \emph{IEEE J. Sel. Topics Signal Process.}, vol.~18, no.~4, pp. 662--677, May 2024.

\bibitem{Xinrui_24TWC}
X.~Li, Z.~Dong, Y.~Zeng, S.~Jin, and R.~Zhang, ``Multi-user modular {XL}-{MIMO} communications: Near-field beam focusing pattern and user grouping,'' \emph{IEEE Trans. Wireless Commun.}, pp. 1--1, 2024.

\bibitem{Huizhi_arXiv}
H.~Wang, C.~Feng, Y.~Zeng, S.~Jin, C.~Yuen, B.~Clerckx, and R.~Zhang, ``Enhancing spatial multiplexing and interference suppression for near-and far-field communications with sparse {MIMO},'' \emph{arXiv preprint arXiv:2408.01956}, 2024.

\bibitem{SPEB_0}
M.~Z. Win, Y.~Shen, and W.~Dai, ``A theoretical foundation of network localization and navigation,'' \emph{Proc. IEEE}, vol. 106, no.~7, pp. 1136--1165, Jul. 2018.

\bibitem{Integral_reference}
I.~S. Gradshteyn and I.~M. Ryzhik, \emph{Table of integrals, series, and products}.\hskip 1em plus 0.5em minus 0.4em\relax Academic press, 2014.

\bibitem{Dinkelbach}
W.~Dinkelbach, ``On nonlinear fractional programming,'' \emph{Management science}, vol.~13, no.~7, pp. 492--498, 1967.

\bibitem{QuadraticTransformation}
K.~Shen and W.~Yu, ``Fractional programming for communication systems—part {I}: Power control and beamforming,'' \emph{IEEE Trans. Signal Process.}, vol.~66, no.~10, pp. 2616--2630, 2018.

\bibitem{FanLiu_22TSP}
F.~Liu, Y.-F. Liu, A.~Li, C.~Masouros, and Y.~C. Eldar, ``{Cram{\'e}r--{R}ao} bound optimization for joint radar-communication beamforming,'' \emph{IEEE Trans. Signal Process.}, vol.~70, pp. 240--253, Dec. 2022.

\bibitem{cvx}
M.~Grant and S.~Boyd, ``{CVX}: Matlab software for disciplined convex programming, version 2.1,'' \url{http://cvxr.com/cvx}, Mar. 2014.

\bibitem{InteriorPoint_Complexity_0}
A.~Nemirovski, ``Interior point polynomial time methods in convex programming,'' \emph{Lecture notes}, vol.~42, no.~16, pp. 3215--3224, 2004.

\bibitem{SDR_1}
Z.-Q. Luo, W.-K. Ma, A.~M.-C. So, Y.~Ye, and S.~Zhang, ``Semidefinite relaxation of quadratic optimization problems,'' \emph{{IEEE} Signal Process. Mag.}, vol.~27, no.~3, pp. 20--34, 2010.

\bibitem{MatrixTheory}
A.~J. Laub, \emph{Matrix analysis for scientists and engineers}.\hskip 1em plus 0.5em minus 0.4em\relax Siam, 2005, vol.~91.

\end{thebibliography}
\end{document}